\newcommand{\pl}{\ensuremath{\,\mathrm{\mathbf{pl}}}}
\newcommand{\cC}{\ensuremath{\mathcal{C}}}
\newcommand{\R}{{\mathbb R}}
\newcommand{\F}{{\mathbb F}}
\newcommand{\PR}[1]{{\mathbb{P}}\left\{ #1\right\}}
\newcommand{\E}[1]{\mathbb{E}\left[ #1\right]}
\newcommand{\EE}{\mathbb{E}}
\newcommand{\norm}[1]{\left\|#1\right\|}
\newcommand{\inabs}[1]{\left|#1\right|}
\newcommand{\ind}[1]{\ensuremath{\mathbf{1}_{#1}}}
\newcommand{\agr}{\mathrm{agr}}
\newcommand{\argmax}{\mathrm{argmax}}
\newcommand{\poly}{\mathrm{poly}}
\newcommand{\hf}{\frac{1}{2}}
\newcommand{\ip}[2]{\ensuremath{\left\langle #1,#2\right\rangle}}
\newcommand{\inset}[1]{\left\{#1\right\}}
\newcommand{\inparen}[1]{\left(#1\right)}
\newcommand{\suchthat}{\,:\,}
\newcommand{\eps}{\varepsilon}
\newtheorem{theorem}{Theorem} 
\newtheorem*{unlabeledthm}{Theorem}
\newtheorem{lemma}{Lemma}
\newtheorem{cor}{Corollary} 
\newtheorem{remark}{Remark}
\newtheorem{ques}[remark]{Question}
\newtheorem{proposition}{Proposition}
\newtheorem{fact}[theorem]{Fact}
\newtheorem{notation}{Notation}
\newcommand{\Cthm}{C_0}
\newcommand{\Csmallq}{\inparen{2 + \sqrt{2}}}
\newcommand{\Cgauss}{C_3}
\newcommand{\Clem}{C_4}
\newcommand{\Cprop}{C_5}
\newcommand{\Csize}{C_6}
\newcommand{\Ctm}{C_7}
\newcommand{\Ccmp}{\sqrt{2\pi}}
\newcommand{\thisL}{ {\Lambda_t} }
\newcommand{\thisI}{I_t}
\newcommand{\nextI}{I_{t+1}}
\newcommand{\prevL}{{\Lambda_{t-1}}}
\newcommand{\nextL}{{\Lambda_{t+1}}}
\newcommand{\startI}{I_0}
\newcommand{\startL}{\Lambda_0}
\newcommand{\tm}{t_{\max}}
\newcommand{\rate}[1]{R^*(#1)}
\title{Every list-decodable code for high noise has abundant near-optimal rate puncturings\thanks{AR's research is supported in part by NSF CAREER grant CCF-0844796 and NSF grant CCF-CCF-1161196. MW's research is partially supported by the Simons Institute and a Rackham predoctoral fellowship. MW also thanks the Simons Institute for their hospitality.}}
\date{\today\\
}
\author{Atri Rudra\footnotemark[2] \and Mary Wootters\footnotemark[3]}
\date{\today\\
\vspace*{5mm}
\footnotemark[2] ~~University at Buffalo (SUNY)\\
{\tt atri@buffalo.edu}\\
\vspace*{3mm}
\footnotemark[3]~~University of Michigan\\
{\tt wootters@umich.edu}}
\begin{document}
\maketitle
\begin{abstract}
%We show that any code with distance $1-\eps^2$ can be randomly punctured to obtain, with high probability, codes that are list decodable up to radius $1 - \eps$ with list sizes $O(1/\eps)$ and a near-optimal rate of $\widetilde{\Omega}(\eps)$.
We show that any $q$-ary code with sufficiently good distance can be randomly punctured to obtain, with high probability, a code that is list decodable up to radius $1 - 1/q - \eps$ with near-optimal rate and list sizes.  

Our results imply that ``most" Reed-Solomon codes are list decodable beyond the Johnson bound, settling the long-standing open question of whether \em any \em Reed Solomon codes meet this criterion.  More precisely, we show that a Reed-Solomon code with random evaluation points is, with high probability, list decodable up to radius $1 - \eps$ with list sizes $O(1/\eps)$ and rate $\widetilde{\Omega}(\eps)$.  
As a second corollary of our argument, we obtain improved bounds on the list decodability of random linear codes over large fields.  

Our approach exploits techniques from high dimensional probability.  Previous work used similar tools to obtain bounds on the list decodability of random linear codes, but the bounds did not scale with the size of the alphabet.  In this paper, we use a chaining argument to deal with large alphabet sizes.
%\footnote{Disclaimer: This is a draft; while we think it is correct, if you find any bugs we'd love to hear about them!}
%\textbf{DISCLAIMER:} This is a rough draft, and could totally be completely incorrect!!!
\end{abstract}

\newpage

\section{Introduction}
List decoding, proposed by Elias~\cite{elias} and Wozencraft~\cite{wozencraft},
is a relaxation of the traditional notion of unique decoding.  In this
relaxation, the decoder is allowed to output a small list of potentially
transmitted messages with the guarantee that the transmitted codeword is in the
list. 

A remarkable fact about list decoding is that it effectively
doubles the correctable fraction of errors. For any code over alphabet of size $q$, no
more than a $\frac{1}{2}\inparen{1-\frac{1}{q}}$
fraction of errors can be decoded uniquely.
However, when the decoder may output a short list, there are codes
which can tolerate a 
$1-\frac{1}{q}-\eps$ fraction of errors, for any $\eps > 0$.
This fact has been crucially exploited in numerous
applications of list decoding in theoretical computer science and in
particular, in complexity theory.\footnote{See the survey by Sudan~\cite{madhu-survey} and
Guruswami's thesis~\cite{venkat-thesis} for more on these applications.}
There
are two important features of these applications:
\begin{enumerate}
\item Even though in the
traditional communication setting it makes sense to consider constant fraction
$\rho$ of errors (in particular, $\rho$ is close to $0$), for complexity
applications 
it is necessary for the fraction of correctable errors to be arbitrarily close to
$1-\frac{1}{q}$.
\item The optimal rate to correct
$1-\frac{1}{q}-\eps$ fraction of errors is known, and is given by
%$\rate{\eps}:=\min\inparen{\eps,
%\frac{q\eps^2}{2\log{q}}+O_q(\eps^3)}$, 
\[\rate{q,\eps} := 1 - H_q(1 - 1/q - \eps) = \min\inset{ \eps, \frac{ q\eps^2}{2\log(q)} + O_q(\eps^3)}.\]
However, for complexity applications it is often enough to design a code with rate $\Omega(\rate{q,\eps})$
with the same error correction capability.\footnote{In
fact in some applications even polynomial dependence on $\rate{q,\eps}$ is
sufficient.} 
\end{enumerate}
In this paper, we consider the list decoding problem in these
parameter regimes.  That is, we seek to correct a  $1 - 1/q - \eps$ fraction of errors, 
with rate $\widetilde{\Omega}(\rate{q,\eps})$ which may be suboptimal by multiplicative factors. 
The quest for such codes comes in two flavors: one can ask about the list decodability of a specific family of codes, or one can ask for the most general conditions which guarantee list decodability.  This work addresses open problems of both flavors, discussed more below.

\paragraph{Specific families of codes with near-optimal rate.} Many complexity applications require efficient correction of
$1-1/q-\eps$ fraction of errors, sometimes even
with a local decoding algorithm. Thus, there has been significant effort directed at
designing efficiently-decodable codes with optimal rate. 
%In particular, for very large field sizes, $q \geq 2^{1/\eps}$, we seek to correct a $1 - \eps$ fraction 
%of the errors with rate $\eps$. 
The first non-trivial progress
towards this goal was due to work of Sudan~\cite{sudan} and
Guruswami-Sudan~\cite{GS99} who showed that 
\em Reed-Solomon \em (RS) codes\footnote{An RS code encodes a low-degree univariate polynomial $f$ over $\F_q$ 
as a list of evaluations $(f(\alpha_1),\ldots, f(\alpha_n))$ for a predetermined set of $n \leq q$ {\em evaluation points} in $\F_q$.}
can be list decoded efficiently from $1-\eps$
fraction of errors with rate $\eps^2$. This matches the so-called
\textit{Johnson bound}, which relates the fraction of errors any code
can combinatorially list decode (with small list size) to the distance of the
code.

The work of Guruswami and Sudan held the record for seven years, during which
RS codes enjoyed the best known tradeoff between rate and fraction of
correctable errors.  However, Parvaresh and Vardy showed that a variant of
Reed-Solomon codes can beat the Johnson bound~\cite{PV05}. This was then
improved by Guruswami and Rudra who achieved the optimal rate of $\eps$ with
Folded Reed-Solomon codes~\cite{GR08}. Since then this optimal
rate result has been achieved with other codes: derivative codes~\cite{GW13},
multiplicity codes~\cite{K12}, folded Algebraic Geometric (AG)
codes~\cite{GX12} as well as {\em subcodes} of RS and AG codes~\cite{GX13}.
There has also been a lot of recent work on reducing the runtime and list size
for folded RS codes~\cite{GW13,DL12,GK13}.

Even though many of the recent developments on list decoding are based on Reed-Solomon codes, there has been no non-trivial progress on the list decodability of Reed-Solomon codes themselves since the work of Guruswami-Sudan.  This is true even if we only ask for combinatorial (not necessarily efficient) decoding guarantees, and even for rates only slightly beyond the Johnson bound.  
The question of whether or not Reed-Solomon codes can be list decoded beyond the Johnson bound %, which we will answer in the affirmative, 
was our main  motivation for this work: 
\begin{ques}
\label{q:rs}
Are there Reed-Solomon codes which can be combinatorially list decoded from a $1 - \eps$ fraction of errors, with rate $\omega\inparen{ \eps^2 }$?  
\end{ques}
This question, which has been well-studied, is interesting for several reasons.  First, Reed-Solomon codes
themselves are arguably the most well-studied codes in the literature. Secondly,
there are complexity applications where one needs to be able
to list decode Reed-Solomon codes in particular: e.g. the average-case hardness of
the permanent~\cite{CPS99}. Finally, the Johnson bound is a natural barrier and
it is an interesting 
to ask whether it can be overcome by natural codes.\footnote{We note that it is easy to come up with codes that
have artificially small distance and hence can beat the Johnson bound.} It is
known that Reed-Muller codes (which are generalizations of RS codes) can be
list decoded beyond the Johnson bound~\cite{G10,GKZ08}.

There have been some indications that Reed-Solomon codes might \em not \em be list decodable beyond the Johnson bound. 
Guruswami and Rudra~\cite{GR06} 
showed that for a generalization of list decoding called list recovery, the
Johnson bound indeed gives the correct answer for RS codes. Further, Ben-Sasson et
al.~\cite{BKR10} showed that for RS code where the evaluation set is all of $\F_q$,
the correct answer is close to the Johnson bound.
%that is, there
%are received words which have super-polynomially many RS codewords
%close to the Johnson bound radius. 
In particular, they show that to correct
$1-\eps$ fraction of errors with polynomial list sizes, the RS code with $\F_q$ as its evaluation points
cannot have rate better than $\eps^{2-\gamma}$ for any constant $\gamma>0$. 
However, this
result leaves open the possibility that one could choose the evaluation points
carefully and obtain an RS code which can be combinatorially list decoded significantly beyond the Johnson bound. 

Resolving the above possibility has been open since~\cite{GS98}: see e.g.~\cite{venkat-thesis,atri-thesis,salil-book} for explicit formulations of this question. 

\paragraph{Large families of codes with near-optimal rate.} While the work on list decodability of specific families of codes have typically also been accompanied with list decoding algorithms, combinatorial results have tended to focus on larger classes of codes. Two classic results along these lines are (i) that random (linear) codes have optimal rate with high probability, and (ii) the fact, following from the Johnson bound, that any code with distance $1 - 1/q - \eps^2$ can be list decoded from $1 - 1/q - \eps$ fraction of errors.
%: (i) random (linear) codes have optimal rate and (ii) Any code with distance $1-1/q-\eps^2$ can be list decoded from $1-1/q-\eps$ fraction of errors (this is via the Johnson bound).

Results of the second type are attractive since they guarantee list decodability for any code, deterministically, as long as the code has large enough distance.  Unfortunately, it is known that the Johnson bound is tight for some codes~\cite{GS03}, and so we cannot obtain a stronger form of (ii).  However, one can hope for a result of the first type for list decodability, based on distance.  More specifically, it is plausible that most \em puncturings \em of a code with good distance can beat the Johnson bound.

Recently, Wootters~\cite{woot2013} obtained such a result for constant $q$. In particular, that work shows that any code with distance $1-1/q-\eps^2$ has many puncturings of rate $\Omega(\eps^2/\log{q})$ that are list decodable from a $1-1/q-\eps$ fraction of errors. 
This rate is optimal up to constant factors when $q$ is small, but is far from the optimal bound of $\rate{q,\eps}$ for larger values of $q$, even when $q$ depends only on $\eps$ and is otherwise constant.
This leads to our second motivating question, left open from~\cite{woot2013}: 
\begin{ques}
\label{q:largeq}
Is it true that any code with distance $1-1/q-\eps^2$ has many puncturings of rate $\widetilde{\Omega}(\rate{q,\eps})$ that can list decode from $1-1/q-\eps$ fraction of errors?
\end{ques}

\bigskip

\paragraph{Our Results.}
In this work, we answer Questions~\ref{q:rs} and \ref{q:largeq} in the affirmative. 
Our main result addresses Question~\ref{q:largeq}. 
We show that random puncturings of any code with distance $1-1/q-\eps^2$ can list decode from $1-1/q-\eps$ fraction of errors with rate
\[\frac{\min\inset{\eps,q\eps^2}}{\log(q)\log^5(1/\eps)}.\]
This improves upon the best known result in this regime by Wootters~\cite{woot2013} for $q\gtrsim \log^5(1/\eps)$, and is optimal up to polylogarithmic factors.
A corollary of this is that random linear codes are list decodable from $1-1/q-\eps$ fraction of errors with the same rate---this improves the corresponding result in~\cite{woot2013} for the same range of parameters.

Our main result also implies a positive answer to Question~\ref{q:rs}, and we show that there do exist RS codes that are list decodable beyond the Johnson bound. 
 In fact, most sets of evaluation points will work: we show that if an appropriate number of evaluation points are chosen at random, then with constant probability the resulting RS code is list decodable from $1-\eps$ fraction of errors with rate 
\[\frac{\eps}{\log(q)\log^5(1/\eps)}.\] 
This beats the Johnson bound for 
\[\eps\le \widetilde{O}\inparen{ \frac{1}{\log(q)} }.\]

\bigskip

\paragraph{Relationship to impossibility results.} 
Before we get into the details, we digress a bit to explain why our result on
Reed-Solomon codes does not contradict the known impossibility results on this question. The
lower bound of~\cite{GR06} works for list recovery but does not apply to our
results about list decoding.\footnote{Our results can be extended to the
list recovery setting, and the resulting parameters obey the lower bound
of~\cite{GR06}. %We leave the details for the full version of the paper.
} The
lower bound of~\cite{BKR10} does work for list decoding, but critically needs
the set of evaluation points to be all of $\F_q$ (or more precisely the
evaluation set should contain particularly structured subsets $\F_q$). Since we
pick the evaluation points at random, this property is no longer satisfied.
Finally, Cheng and Wan~\cite{CW07} showed that {\em efficiently} solving the
list decoding problem for RS codes from $1-\eps$ fraction of errors with rate
$\Omega(\eps)$ would imply an efficient algorithm to solve the discrete log
problem.  However, this result does not rule out the list size being small
(which is what our results imply), just that algorithmically computing the list
quickly is unlikely.

\subsection{Approach and Organization}
Our main technical result addresses Question \ref{q:largeq} and states that a randomly punctured
code\footnote{Technically, our construction is slightly different than
randomly punctured codes: see Remark~\ref{rem:sample-vs-puncture}.} 
will retain the list decoding properties of the original code as long as the original code
has good distance.  Our results for RS codes (answering Question~\ref{q:rs}) and random linear codes follow by starting from
the RS code evaluated on all of $\F_q$ and the $q$-ary Hadamard code, respectively.
%As mentioned above, a similar result was obtained in~\cite{woot2013}, but the rate did not improve as the alphabet 
%size increases; this gain is necessary for our results. 

After a brief overview of terminology in Section \ref{sec:prelim}, we give a
more detailed technical overview of our approach in Section \ref{sec:tech}.  In
Section \ref{sec:mainstatement} we state our main result, Theorem
\ref{thm:mainthm}, about randomly punctured codes, and we apply it to
Reed-Solomon codes and random linear codes.  The remainder of the paper,
Sections \ref{sec:redux} and \ref{sec:gaussian}, are devoted to the proof of
Theorem \ref{thm:mainthm}.
Finally, we conclude with Section~\ref{sec:concl}.

\section{Preliminaries}\label{sec:prelim}
Motivated by Reed-Solomon codes, we consider random ensembles of linear codes over $\F_q$, where the field size $q$ is large. 
A code $\mathcal{C} \subseteq \F_q^n$ is \textbf{linear} if it forms a subspace of $\F_q^n$.  Equivalently, $\mathcal{C} = \inset{ x^T G \suchthat x \in \F_q^k }$ for a \textbf{generator matrix} $G \in \F_q^{k \times n}$.  We refer to $x \in \F_q^k$ as the \textbf{message} and $k$ as the \textbf{message length}.  The length $n$ of the resulting codeword $x^TG$ is called the \textbf{block length}.
 
We will study the list decodability of these codes, up to ``large" error rates $1 - 1/q - \eps$, which is $1 - \Theta(\eps)$ when $q \gtrsim 1/\eps$.  We say that a code $\mathcal{C} \subseteq \F_q^n$ is $(\rho, L)$-\textbf{list decodable} if for all $z \in \F_q^n$, the number of codewords $c \in \mathcal{C}$ with $d(z,c) \leq \rho$ is at most $L$, where $d$ denotes relative Hamming distance.  
We will actually study a slightly stronger notion of list decodability, explicitly studied in \cite{gurnar2013}.  We say that a code $\mathcal{C} \subset \F_q^n$ is $(\rho, L)$-\textbf{average-radius list decodable} if for all $z \in \F_q^n$ and all sets $\Lambda$ of $L+1$ codewords $c \in \mathcal{C}$, the average distance between elements of $\Lambda$ and $z$ is at least $\rho.$
Notice that standard list decoding can be written in this language with the average replaced by a maximum.

In general, one is interested in the trade-off between $\eps$, $L$, and the rate of the code $\mathcal{C}$.  The \textbf{rate} of a linear code $\mathcal{C}$ is defined to be $\dim(\mathcal{C})/n$, where $\dim(\mathcal{C})$ refers to the dimension of $\mathcal{C}$ as a subspace of $\F_q^n$.

We'll consider ensembles of linear codes where the generator vectors are independent; this includes random linear codes and Reed Solomon codes with random evaluation points. 
%More precisely, write a linear code $\mathcal{C} \subseteq \F_q^n$ as
%\[ \mathcal{C} = \inset{ x^T G \mid x \in \F_q^k }, \]
%as above.  
More precisely, a distribution on the matrices $G$ induces a distribution on linear codes.  We say that such a distribution on linear codes $\mathcal{C}$ has \textbf{independent symbols} if the columns of the generator matrix $G$ are selected independently.

We will be especially interested in codes with randomly sampled symbols, where a new code (with a shorter block length) is created from an old code by including a few symbols of the codeword at random.
Formally, suppose that $\mathcal{C}'$ is a linear code over $\F_q$ with generator matrix $G' \in \F_q^{k \times n'}$.  
Form a new generator matrix $G \in \F_q^{k \times n}$ whose columns are $n$ columns of $G'$ chosen independently at random (possibly with replacement).  We say that the resulting random linear code $\mathcal{C}$ with generator matrix $G$ is a \textbf{randomly sampled} version of $\mathcal{C}'$, with block length $n$.
Notice that randomly sampled codes have independent symbols by definition.
%\TODO{Is there a better word for this than randomly sampled?  It's very similar to randomly punctured, but it's not quite the same, because of the replacement issue.}
\begin{remark}[Sampling vs. Puncturing]
\label{rem:sample-vs-puncture}
We note that the operation of randomly \em sampling \em a code (a term we just made up) is very similar to that of randomly \em puncturing \em a code (a term with a long and illustrious history).  The only difference is that we sample with replacement, while a randomly punctured code can be viewed as a code where the sampling is done without replacement.   Our method of sampling is convenient for our analysis because of the independence.  However, for the parameter regimes we will work in, collisions are overwhelmingly unlikely, and the distribution on randomly sampled codes is indeed very similar to that of randomly punctured codes.   
\end{remark}

\subsection{Notation}
Throughout, we will consider linear codes $\mathcal{C} \subseteq \F_q^n$ of block length $n$ and message length $k$, with generator matrices $G \in \F_q^{k \times n}$. 
The size of $\mathcal{C}$ will be $|\mathcal{C}| = N$. 
For a message $x \in \F_q^k$, we will write $c = c(x)$  for the encoding $c(x) = x^T G$.
We will be interested in subsets $\Lambda \subseteq \F_q^k$ of size $L$ (the \em list size\em), which we will identify, when convenient, with the corresponding subset of $\mathcal{C}$.

For $x,y \in \F_q^n$, let $\agr(x,y) = n(1 - d(x,y))$ be the number of symbols in which $x$ and $y$ agree. 
 We will use $f(x) \lesssim g(x)$ (or $f(x) \gtrsim g(x)$) to indicate that there is some constant $C$ so that $f(x) \leq Cg(x)$ (resp. $g(x) \leq Cf(x)$) for all $x$.  Throughout, $C_0,C_1, \ldots $ and $c_0,c_1,\ldots$ will denote numerical constants.  
For clarity, we have made no attempt to optimize the constants.
For a vector $v = (v_1,v_2,\ldots,v_n) \in \R^n$ and a set $S \subseteq [n]$, we will use $v_S$ to denote the restriction of $v$ to the coordinates indexed by $S$.  We will use the $\ell_p$ norm $\|v\|_p = \inparen{ \sum_{i=1}^n v_i^p }^{1/p}$, and the $\ell_\infty$ norm $\|v\|_\infty = \max_{j \in [n]} |v_j|.$
We use $\log$ to denote the logarithm base $2$,  and $\ln$ to denote the natural log.

We will also use some machinery about Gaussian processes, but we have made an effort to keep this self-contained.  For the reader's convenience, a few useful facts about Gaussian random variables are recorded in Appendix \ref{app:gauss}. 
Finally, we will also use the following form of Chernoff(-Hoeffding) bound:
\begin{theorem}\label{thm:chernoff}
Let $X_1,\dots,X_m$ be $m$ independent random variables such that for every $i\in [m]$, $X_i\in [a_i,b_i]$, then for the random variable
\[S=\sum_{i=1}^m X_i,\]
and any positive $v\ge 0$, we have
\[\PR{|S-\E{S}| \ge v} \le 2\exp\left(-\frac{2v^2}{\sum_{i=1}^m (b_i-a_i)^2}\right).\]
\end{theorem}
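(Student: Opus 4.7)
The plan is to prove the two-sided tail bound by the standard exponential-moment (Chernoff) method, reducing to a one-sided bound and then doubling. Set $Y_i = X_i - \E{X_i}$, so each $Y_i$ is mean zero and lies in an interval $[a_i - \E{X_i}, b_i - \E{X_i}]$ of length $b_i - a_i$, and let $T = S - \E{S} = \sum_i Y_i$. For any parameter $t > 0$, Markov's inequality applied to the random variable $e^{tT}$ gives
\[
\PR{T \ge v} \;=\; \PR{e^{tT} \ge e^{tv}} \;\le\; e^{-tv}\,\EE\bigl[e^{tT}\bigr] \;=\; e^{-tv}\prod_{i=1}^m \EE\bigl[e^{tY_i}\bigr],
\]
where the last equality uses independence of the $X_i$'s (hence of the $Y_i$'s).

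The key technical step, and the one I expect to be the main obstacle, is Hoeffding's lemma: if $Y$ is a mean-zero random variable supported on $[\alpha,\beta]$, then $\EE[e^{tY}] \le \exp\bigl(t^2(\beta-\alpha)^2/8\bigr)$. I would prove this by writing, for each realization of $Y$, $Y = \lambda \alpha + (1-\lambda)\beta$ with $\lambda = (\beta - Y)/(\beta-\alpha) \in [0,1]$, applying convexity of $s \mapsto e^{ts}$ to get $e^{tY} \le \lambda e^{t\alpha} + (1-\lambda) e^{t\beta}$, and then taking expectations (using $\E{Y} = 0$, so $\E{\lambda} = \beta/(\beta - \alpha)$). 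That reduces the bound to controlling $\varphi(t) := \log\bigl( p e^{t\alpha} + (1-p) e^{t\beta}\bigr)$ where $p = \beta/(\beta-\alpha)$. A direct computation shows $\varphi(0) = 0$, $\varphi'(0) = 0$, and a second-derivative calculation (recognizing $\varphi''(t)$ as the variance of a two-point distribution supported in an interval of length $\beta - \alpha$) yields $\varphi''(t) \le (\beta-\alpha)^2/4$ uniformly in $t$. Taylor's theorem then gives $\varphi(t) \le t^2(\beta-\alpha)^2/8$, which is Hoeffding's lemma.

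Plugging back into the product gives
\[
\PR{T \ge v} \;\le\; \exp\!\left(-tv + \frac{t^2}{8}\sum_{i=1}^m (b_i - a_i)^2\right).
\]
Optimizing over $t > 0$ by choosing $t = 4v / \sum_i (b_i-a_i)^2$ yields $\PR{T \ge v} \le \exp\!\bigl(-2v^2/\sum_i (b_i-a_i)^2\bigr)$. Applying the same argument to $-Y_i$ (which also lies in an interval of length $b_i - a_i$) gives the matching lower-tail bound $\PR{T \le -v}$, and a union bound over the two tails produces the claimed factor of $2$ in front of the exponential. The only genuine content is Hoeffding's lemma; the rest is the standard Chernoff recipe of exponentiating, using independence to factor, and optimizing the free parameter.
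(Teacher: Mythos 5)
Your proof is correct and is the canonical proof of Hoeffding's inequality: exponentiate, use independence to factor the moment generating function, bound each factor via Hoeffding's lemma (proved by convexity plus the second-derivative/variance bound $\varphi''(t) \le (\beta-\alpha)^2/4$), optimize the free parameter $t = 4v/\sum_i(b_i-a_i)^2$, and double for the two-sided bound. Note, however, that the paper does not actually prove this statement --- it is stated in Section~\ref{sec:prelim} as a standard known fact and used as a black box --- so there is no in-paper argument to compare against; your write-up is simply the textbook proof.
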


\section{Technical overview}\label{sec:tech}
\label{sec:overview}
In this section, we give a technical overview of our argument, and point out where
it differs from previous approaches.  The most similar argument in the literature is in
\cite{woot2013}, which applies to random linear codes (but \em not \em Reed-Solomon codes).
Below, we point out how our approach deviates, and where our improvements come from.

We first recall the classic proof of list decodability of general random codes. 
For a general random code, a Chernoff bound establishes that for a given $\Lambda$ and $z$, there is only a very small probability that the codewords corresponding to $\Lambda$ are all close to $z$.  This probability is small enough to allow for a union bound over the $q^n \cdot {N \choose L}$ choices for $\Lambda$ and $z$.  However, this argument crucially exploits the independence between the encodings of distinct messages.  
If we begin with a random linear code (or a Reed-Solomon code with random
evaluation points), then codewords are no longer independent, and the above
argument fails.  The classic way around this is to consider only the linearly
independent messages in $\Lambda$; however, this results in exponentially large
list sizes of $q^{\Omega(1/\eps)}$.  The exponential dependence on $\eps$ can
be removed for a \em constant \em fraction of errors, by a careful analysis of
the dependence between codewords corresponding to linearly dependent
messages~\cite{GHK11}.  However, such techniques do not seem to work in the
large-error regime that we consider.

In contrast, the approaches of~\cite{cgv2012,woot2013} avoid analyzing the
dependence between codewords by using tools from high dimensional probability.
These arguments, which imply list decodability results for random linear codes, 
work when the error rate approaches $1 - 1/q$, and they
(implicitly) use an improved union bound to avoid having to union bound over
all $\Lambda$ and $z$.  However, these arguments do not scale well with $q$,
which is crucial for the application to Reed-Solomon codes.  In this work, we
follow the approach of~\cite{woot2013} and use techniques from high dimensional
probability and Gaussian processes to avoid the naive union bound.  However,
our arguments \em will \em scale with $q$, and thus are applicable to Reed-Solomon codes.

Following the approach of~\cite{woot2013}, our proof actually establishes \em average-radius \em list decodability.  
The standard definition of list decodability has to do with bounding the maximum distance of a set $\Lambda \subseteq \mathcal{C}$ of $L$ codewords from its centroid $z \in \F_q^n$.  In contrast, average-radius list decodability is a stronger notion which focuses on the average distance from $\Lambda$ to $z$.  

The advantage of considering average-radius list decoding is that it linearizes the problem; after some rearranging (which is encapsulated in Proposition \ref{prop:dec}), it becomes sufficient to control
\[ \sum_{c \in \Lambda} \agr(z,c) = \sum_{c \in \Lambda} \sum_{j=1}^n \ind{c_j=z_j} \]
uniformly %\ar{This is a nitpick: I'm assuming you;re using uniformly over all instead of just over all to stress that we want a max?}  
%\mkw{yes---I think this is a pretty standard use of the word ``uniform" in probability. (I parse ``bound over all X" as a ``foreach" type of guarantee, to use the compressed sensing terminology). }
over all $\Lambda \subseteq \mathcal{C}$ and all $z \in \F_q^n$.  We will show that this is true in expectation; that is, we will bound
\begin{equation}\label{eq:theplan}
 \EE \max_{\Lambda, z} \sum_{c \in \Lambda} \sum_{j=1}^n \ind{c_j = z_j}. 
\end{equation}
The proof proceeds in two steps.

The first (more straightforward) step is to argue that if the expectation and the maximum over $\Lambda$ were reversed in \eqref{eq:theplan}, then we would have the control we need.
To that end, we introduce a parameter 
\[\mathcal{E} = \max_{|\Lambda| = L} \EE \max_{z \in \F_q^n} \sum_{c \in \Lambda} \sum_{j=1}^n \ind{c_j=z_j}.\]
It is not hard to see that the received word $z$ which maximizes the agreement is the one which, for each $j$, agrees with the plurality of the $c_j$ for $c \in \Lambda$.  That is, 
\[ \max_{z \in \F_q^n} \sum_{c \in \Lambda} \sum_{j=1}^n \ind{c_j=z_j} = \sum_{j=1}^n \max_{\alpha \in \F_q} \inabs{ \inset{ c\in \Lambda \suchthat c_j = \alpha } } =: \sum_{j=1}^n \text{plurality}_j\inparen{\Lambda  }.\]
Thus, to control $\mathcal{E}$, we must understand the expected pluralities.  For our applications, this follows from standard Johnson-bound type arguments.  

Of course, it is generally not okay to switch expectations and maxima; we must also argue that the quantity inside the maximum does not deviate too much from its mean in the worst case.  This is the second and more complicated step of our argument.
We must control the deviation
\begin{equation}\label{eq:pluralitysum}
 \sum_{j=1}^n \inparen{ \text{plurality}_j(\Lambda) - \EE \text{plurality}_j(\Lambda) } 
\end{equation}
uniformly over all $\Lambda$ of size $L$.  
By the assumption of independent symbols (that is, independently chosen evaluation points for the Reed-Solomon code, or independent generator vectors for random linear codes), each summand in \eqref{eq:pluralitysum} is independent.  

Sums of independent random variables tend to be reasonably concentrated, but, as pointed out above, because the codewords are not independent there is no reason that the pluralities themselves need to be particularly well-concentrated.  Thus, we cannot handle a union bound over all $\Lambda \subseteq \mathcal{C}$ of size $L$.
Instead, we use a \em chaining argument \em to deal with the union bound.
The intuition is that if the set $\Lambda$ is close to the set $\Lambda'$ (say they overlap significantly), then we should not have to union
bound over both of them as though they were unrelated.  

Our main theorem, Theorem \ref{thm:mainthm}, bounds the deviation \eqref{eq:pluralitysum}, and thus bounds \eqref{eq:theplan} in terms of $\mathcal{E}$.  We control $\mathcal{E}$ in the Corollaries \ref{cor:smallq} and \ref{cor:largeq}, and then explain the consequences for Reed-Solomon codes and random linear codes in Sections \ref{ssec:rs} and \ref{ssec:randlin}.

We prove Theorem \ref{thm:mainthm} in Section \ref{sec:redux}.
To carry out the intuition above, we first pass to the language of Gaussian processes.  Through some standard tricks from high dimensional probability, it will suffice to instead bound the Gaussian process
\begin{equation}\label{eq:fakegp}
 X(\Lambda) = \sum_{j=1}^n g_j \text{plurality}_j(\Lambda).
\end{equation}
uniformly over all $\Lambda$ of size $L$, where the $g_j$ are independent standard normal random variables.

So far, this approach is similar to that of~\cite{woot2013}.  The difference is that Wootters first maps the problem to $\mathbb{C}$, using a technique from \cite{cgv2012}, in a way that allows for a slick bound on the relevant Gaussian process.  However, this approach loses information about the size of $q$.  In particular, the expected size of the pluralities decreases as $q$ increases, and the approach of~\cite{woot2013} does not take advantage of this.  In our approach, we deal with the pluralities directly, without embedding into $\mathbb{C}$.  This unfortunately gives up on the slickness (our argument is somewhat technical), but allows us to take advantage of large $q$.  We outline our methods below.

Returning to the Gaussian process \eqref{eq:fakegp}, we condition on $\mathcal{C}$, considering only the randomness over the Gaussians.  
We control this process in Theorem \ref{thm:gaussian}, the proof of which is contained in Section \ref{sec:gaussian}.
The process \eqref{eq:fakegp} induces a metric on the space of sets $\Lambda$: $\Lambda$ is close to $\Lambda'$ if the vectors of their pluralities are close, in $\ell_2$ distance.  Indeed, if $\Lambda$ is close to $\Lambda'$ in this sense, then the corresponding increment $X(\Lambda) - X(\Lambda')$ is small with high probability.  
In this language, the previous intuition about ``wasting" the union bound on close-together $\Lambda$ and $\Lambda'$ can be made precise---for example, Dudley's theorem~\cite{lt,genchain} bounds the supremum of the process in terms of the size of $\eps$-nets with respect to this distance.

Thus, our proof of Theorem \ref{thm:gaussian} boils down to constructing nets
on the space of $\Lambda$'s.  In fact, our nets are quite simple---smaller nets
consist of all of the sets of size $L/2^t$, for $t = 1,\ldots, \log(L)$.
However, showing that the width of these nets is small is trickier.  
Our argument actually uses the structure of the chaining argument that is at the heart
of the proof of Dudley's theorem: instead of arguing that the width of the net
is small, we argue that each successive net cannot have points that are too far
from the previous net, and thus build the ``chain" step-by-step.  
One can of course abtract out a distance argument and apply Dudley's theorem as a black-box.
However, at the point that we are explicitly constructing the chains, we feel that it is more intuitive
to include the entire argument.  To this end, (and to keep the paper self-contained), 
we unwrap Dudley's theorem in Section \ref{sec:prooffromlemma}.

We construct and control our nets in Lemma \ref{lem:chaining}, which we prove
in Section \ref{sec:chaininglemma}.  Briefly, the idea is as follows.  In order
to show that a set $\Lambda$ of size $L/2^t$ is ``close" to some set $\Lambda'$
of size $L/2^{t+1}$, we use the probabilistic method.  We choose a set
$\Lambda' \subseteq \Lambda$ at random, and argue that in expectation (after some
appropriate normalization), the two are ``close." Thus, the desired $\Lambda'$
exists.  However, the expected distance of $\Lambda$ to $\Lambda'$ in fact
depends on the quantity 
\[ Q_t = \max_{|\Lambda| = L/2^t} \sum_{j=1}^n
\text{plurality}_j(\Lambda).\] 
For $t = 0$, this is the quantity that we were
trying to control in the first place in \eqref{eq:theplan}.  Carrying this 
quantity through our argument, we are able to solve for it at the end
and obtain our bound.

Controlling $Q_t$ for $t > 0$ requires a bit of delicacy.  
In particular, as defined above $Q_{\log(L)}$ is deterministically equal to $n$, 
which it turns out is too large for our applications.
To deal with this, we actually chain over not just the $\Lambda$, but also the set of the symbols $j \in [n]$ that we consider.
In fact, if we did not do this trick, we would recover (with some
extra logarithmic factors) the result of~\cite{woot2013} for random linear codes. 

We remark that our argument has a similar flavor to some existing arguments in other domains, for example~\cite{rudelson97,rv08}, where a quantity analogous to $Q_0$ arises, and where analogous nets will work.  Our approach is slightly different (in particular, our proof of distance is structurally quite different), although it is possible that one could re-frame our argument to mimic those. 

%\ar{It is true that if we used the bound $Q_t\le Q_0$ for every $t$, then the correct argument would recover your bound from~\cite{woot2013}? If so, this should be worth mentioning. If not, it would be worth it to mention how your earlier argument can be recovered from our current one.}
%\mkw{that is not true.  I think that if we don't chain additionally over the $I_{t+1}$, then we will recover~\cite{woot2013} with some log factors---I added a sentence about it.}
%\ar{I'm curious if the following difference is significant: In your STOC13 proof you directly go to complex numbers while we do not move to the continuous case till we have to apply the symmetrization and comparison arguments. I know there is a difference but I'm not sure if this difference is significant to be highlighted-- your call.}
%\mkw{In fact, I think this is the important difference.  By passing to $\mathbb{C}$ (and then applying Holder's inequality when I'm there) I gave up information about $q$.  I added a sentence about it above.}

\section{Main theorem}\label{sec:mainstatement}
In this section, we state our main technical result, Theorem \ref{thm:mainthm}.
To begin, we first give a slightly stronger sufficient condition for list decodability, called average-radius list decodability (defined above in Section \ref{sec:prelim}).  Average-radius list decodability has been explicitly studied before in~\cite{gurnar2013} and was used in~\cite{woot2013} to prove upper bounds on the list decodability of ensembles of linear codes for constant-sized $q$.  
All of our results will actually show average-radius list decodability, and the following proposition shows that this will imply the standard notion of list decodability.

\begin{proposition}\label{prop:dec}
Suppose that 
\[ \max_{z \in \F_q^n} \max_{\Lambda \subset \F_q^k, |\Lambda| = L} \sum_{x \in \Lambda} \agr(c(x),z) < nL\inparen{\eps + \frac{1}{q}}.\]
Then $\cC$ is $(1 - 1/q - \eps, L-1)$-list decodable. 
\end{proposition}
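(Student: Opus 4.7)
The plan is to prove the contrapositive: assume $\mathcal{C}$ fails to be $(1 - 1/q - \eps, L-1)$-list decodable and deduce that the hypothesized strict inequality must fail for some choice of $z$ and $\Lambda$. This is really a matter of unpacking definitions and trading the maximum number of codewords in a Hamming ball for a total-agreement bound over an $L$-subset, so no new ideas are required beyond the relationship $\agr(c,z) = n(1 - d(c,z))$.

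Concretely, I would proceed as follows. First I would negate the conclusion: if $\mathcal{C}$ is not $(1 - 1/q - \eps, L-1)$-list decodable, then by definition there exists a received word $z \in \F_q^n$ and a set $\Lambda \subseteq \F_q^k$ with $|\Lambda| = L$ such that every $x \in \Lambda$ satisfies $d(c(x), z) \leq 1 - 1/q - \eps$. Next, translate this distance bound into an agreement bound by using the identity $\agr(c(x), z) = n\bigl(1 - d(c(x), z)\bigr)$, which yields
\[
\agr(c(x), z) \;\geq\; n\!\left(\tfrac{1}{q} + \eps\right)
\]
for every $x \in \Lambda$. Summing this inequality over the $L$ elements of $\Lambda$ gives
\[
\sum_{x \in \Lambda} \agr(c(x), z) \;\geq\; nL\!\left(\eps + \tfrac{1}{q}\right),
\]
which contradicts the hypothesis that the maximum over all $z$ and all size-$L$ subsets $\Lambda$ is \emph{strictly} less than $nL(\eps + 1/q)$.

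There is no real obstacle here; the only thing to be careful about is the strict versus non-strict inequality. The hypothesis is stated with a strict ``$<$'' precisely so that the weak per-codeword bound $\agr \geq n(1/q + \eps)$ accumulating to $nL(\eps + 1/q)$ gives a genuine contradiction, rather than merely touching the threshold. With the contrapositive and this strictness in hand, the proposition follows in a couple of lines.
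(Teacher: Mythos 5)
Your contrapositive argument is correct and is logically equivalent to the paper's direct proof, which observes that $\min_{x\in\Lambda}\agr(c(x),z) \le \frac{1}{L}\sum_{x\in\Lambda}\agr(c(x),z)$ and so some $x$ in any size-$L$ set has agreement below $n(\eps+1/q)$. The two routes are the same ``average versus minimum'' observation packaged in opposite directions, and both correctly rely on the strict inequality.
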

\begin{proof}
By definition, $\cC$ is $(1 - \nicefrac{1}{q} - \eps, L-1)$-list decodable if for any $z \in \F_q^n$ and any set $\Lambda \subset \F_q^n$ of size $L$, there is at least one message $x \in \Lambda$ so that $\agr(c(x),z)$ is at most $n \inparen{ \eps + \nicefrac{1}{q}}$, that is, if
\[ \max_{z \in \F_q^n} \max_{|\Lambda| = L} \min_{x \in \Lambda} \agr(c(x),z) < n\inparen{ \eps + \frac{1}{q}}.\]
Since the average is always larger than the minimum, it suffices for
\[ \max_{z \in \F_q^n} \max_{|\Lambda| = L} \sum_{x \in \Lambda} \agr(c(x),z) < Ln\inparen{ \eps + \frac{1}{q}},\]
as claimed.
\end{proof}

Our main theorem gives conditions on ensembles of linear codes under which $\EE \max_{z,\Lambda} \sum_{x \in \Lambda} \agr(c(x),z)$ is bounded.
Thus, it gives conditions under which Proposition \ref{prop:dec} holds.
\begin{theorem}\label{thm:mainthm}
Fix $\eps > 0$.
Let $\mathcal{C}$ be a random linear code with independent symbols.  Let
\[ \mathcal{E} = \max_{\Lambda \subset \F_q^k, |\Lambda| = L} \EE_{\mathcal{C}} \max_{z\in \F_q^k} \inparen{ \sum_{x \in \Lambda} \agr(c(x),z) }.\]
%and suppose that
%\[ n \geq \max\inset{ \frac{\log(N)\log^5(L)}{\eps}, \frac{  \log(N)\log^{5}(L)}{q\eps^2} }.\]
Then
\[ \EE_\mathcal{C} \max_{z\in \F_q^n} \max_{\Lambda \subset \F_q^k, |\Lambda| = L} \sum_{x \in \Lambda} \agr(c(x),z) \leq 
\mathcal{E} + Y + \sqrt{ \mathcal{E} Y },
\]
where
\[ Y = \Cthm L \log(N) \log^5(L)\]
for an absolute constant $\Cthm$.
%\inparen{1 + q\eps} \mathcal{E} + \Cthm nL\eps,\]
\end{theorem}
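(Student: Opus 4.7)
The plan is to follow the roadmap in Section \ref{sec:overview}. First, observe that for any fixed $\Lambda$, $\max_{z \in \F_q^n} \sum_{x \in \Lambda} \agr(c(x),z) = \sum_{j=1}^n \mathrm{plurality}_j(\Lambda)$, so the quantity to bound is $\EE_\cC \max_{|\Lambda|=L} \sum_j \mathrm{plurality}_j(\Lambda)$. Pulling out the mean for a fixed $\Lambda$ gives an $\mathcal{E}$ term plus the uniform deviation $\EE_\cC \max_\Lambda \sum_j \bigl( \mathrm{plurality}_j(\Lambda) - \EE_\cC \mathrm{plurality}_j(\Lambda) \bigr)$. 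By the independent-symbols hypothesis the summands are independent and centered random variables (indexed by $j$), so a standard symmetrization-by-Gaussians step bounds the deviation, up to a constant, by $\EE_\cC \EE_g \sup_{|\Lambda|=L} X(\Lambda)$ where
\[ X(\Lambda) = \sum_{j=1}^n g_j \cdot \mathrm{plurality}_j(\Lambda) \]
and the $g_j$ are i.i.d.\ standard normals independent of $\cC$. After this reduction, everything reduces to controlling the supremum of this Gaussian process.

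Second, I would control $\sup_\Lambda X(\Lambda)$ by a Dudley-style chaining argument. The natural metric induced by the process is the $\ell_2$ distance between plurality vectors. For the nets I would take the family $\N_t = \{\Lambda' \subseteq \F_q^k : |\Lambda'| = L/2^t\}$, so $\log|\N_t| \lesssim (L/2^t)\log N$. The key geometric lemma, to be proved by the probabilistic method, is that every $\Lambda$ of size $L/2^t$ has a subset $\Lambda' \subseteq \Lambda$ of size $L/2^{t+1}$ whose plurality vector is at $\ell_2$-distance at most roughly $\sqrt{Q_t/2^t}$ from that of $\Lambda$, where
\[ Q_t = \max_{|\Lambda|=L/2^t} \sum_{j=1}^n \mathrm{plurality}_j(\Lambda). \]
Indeed, for a random size-$L/2^{t+1}$ subset $\Lambda' \subseteq \Lambda$ the expected squared $\ell_2$ distance telescopes into something proportional to $Q_t/2^t$, so the required $\Lambda'$ exists. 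Summing the Dudley contributions $\sqrt{\log|\N_t|} \cdot (\text{width at level } t)$ over $t = 0, \ldots, \log L$ yields
\[ \EE_g \sup_{|\Lambda|=L} X(\Lambda) \;\lesssim\; \sqrt{L \log(N) \cdot \polylog(L) \cdot Q_0}. \]

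Third, take expectation over $\cC$: since $Q_0$ is precisely the random variable we wanted to control, combining the preceding two steps produces a \emph{self-referential} inequality of the form $\EE_\cC Q_0 \leq \mathcal{E} + C\sqrt{Y \cdot \EE_\cC Q_0}$ with $Y = \Cthm L \log(N) \log^5(L)$, and solving this quadratic in $\sqrt{\EE_\cC Q_0}$ gives the claimed bound $\mathcal{E} + Y + \sqrt{\mathcal{E} Y}$. The main obstacle is pinning the polylog factor down to $\log^5(L)$: a naive one-parameter chain over $\Lambda$ fails because $Q_{\log L}$ can be as large as $n$, which is far too big to close the recursion. To fix this one has to chain simultaneously over $\Lambda$ and over subsets $S \subseteq [n]$ of coordinates, shrinking both $|\Lambda|$ and $|S|$ along the chain and replacing each $\mathrm{plurality}_j$ by its restriction to $j \in S$. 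Managing this two-parameter chain, the resulting net sizes, and the dependence on the worst-case restricted $Q_t$'s in tandem with the self-referential recursion is the technically delicate heart of the argument, and is what Lemma \ref{lem:chaining} (applied inside Theorem \ref{thm:gaussian}) is designed to encapsulate.
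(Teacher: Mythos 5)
Your roadmap is essentially the paper's own proof: pass to pluralities, split into the mean $\mathcal{E}$ plus a uniform deviation, symmetrize to the Gaussian process $\sum_j g_j\pl_j(\Lambda)$, control it via chaining with nets of halved-size subsets found by the probabilistic method, recognize the self-referential dependence on $Q_0$ and solve the resulting quadratic, and—crucially—chain simultaneously over $\Lambda$ and coordinate subsets $I\subseteq[n]$ to prevent $Q_{\log L}$ from blowing up to $n$ (this last point is exactly the paper's Lemma~\ref{lem:chaining}). The only slip is the heuristic width $\sqrt{Q_t/2^t}$: with the paper's fractional-plurality normalization the step size is really on the order of $\sqrt{Q_t\log(L)\,2^t/L}$ (Lemma~\ref{lem:concentration} plus Lemma~\ref{lem:almostdone}), but this does not change your Dudley sum or the form of the final recursion, which you state correctly.
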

Together with Proposition \ref{prop:dec}, Theorem \ref{thm:mainthm} implies results about the list decodability of random linear codes with independent symbols, which we present next. % These consequences are presented in Section \ref{sec:consequence}

\begin{remark}We have chosen the statement of the theorem which gives the best bounds for Reed-Solomon codes, where $q \gg L$ is a reasonable parameter regime.  An inspection of the proof shows that we may replace one $\log(L)$ factor with $\min\{ \log(L),\log(q) \}$.
\end{remark}

\subsection{Consequences of Theorem \ref{thm:mainthm}: list decodability of Reed-Solomon codes and random linear codes}\label{sec:consequence}
In this section, we derive some consequences of Theorem \ref{thm:mainthm} for randomly sampled codes, in terms of the distance of the original code. 
Our motivating examples are Reed-Solomon codes with random evaluation points, and random linear codes, which both fit into this framework.
Indeed, Reed-Solomon codes with random evaluation points are obtained by sampling symbols from the Reed-Solomon code with block length $n=q$, and a random linear code is a randomly sampled Hadamard code.
We'll discuss the implications
and optimality for the two motivating examples below in Sections \ref{ssec:rs} and \ref{ssec:randlin} respectively.

Our corollaries are split into two cases: 
the first holds for all $q$, but only yields the correct list size when $q$ is small.  The second holds for $q \gtrsim 1/\eps^2$, and gives an improved list size in this regime.
As discussed below in Section \ref{ssec:randlin}, our results are nearly optimal in both regimes.

First, we prove a result for intended for use with small $q$.
\begin{cor}[Small $q$]\label{cor:smallq}
%Suppose that $q \leq \Csmallq/\eps$, where $\Csmallq$ is an absolute constant.  
Let $\mathcal{C}'$ be a linear code over $\F_q$ with distance $1 -\frac{1}{q}-\frac{\eps^2}{2}$.
Suppose that
\[ n \geq \frac{ \Cthm \log(N) \log^5(L) }{\min \inset{ \eps, q\eps^2 }},\]
and choose $\mathcal{C}$ to be a randomly sampled version of $\mathcal{C}'$, of block length $n$.
Then, with constant probability over the choice of $\mathcal{C}$,
the code $\mathcal{C}$ is $(1 - \nicefrac{1}{q} - \eps', 2/\eps^2)$-list decodable,
where $\eps' = \Csmallq \eps$.% for some constant $\Csmallq$.
\end{cor}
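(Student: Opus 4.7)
The plan is to combine Proposition \ref{prop:dec} with Theorem \ref{thm:mainthm}, with the bulk of the work devoted to controlling $\mathcal{E}$ via the distance hypothesis on $\mathcal{C}'$. Set $L = \lceil 2/\eps^2\rceil + 1$; by Proposition \ref{prop:dec}, it suffices to show that with constant probability over the randomly sampled $\mathcal{C}$,
\[ \max_{z \in \F_q^n} \max_{|\Lambda| = L} \sum_{x \in \Lambda} \agr(c(x), z) < nL\inparen{\eps' + \frac{1}{q}}. \]
Since $\mathcal{C}$'s columns are drawn independently from those of $\mathcal{C}'$, it has independent symbols and Theorem \ref{thm:mainthm} applies. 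The hypothesis on $n$ is tailored so that $Y \leq nL \min\inset{\eps, q\eps^2} \leq nL\eps$, bounding the chaining term.

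The main step is bounding $\mathcal{E}$. For a fixed $\Lambda$ with $|\Lambda| = L$, the optimal $z_j$ is the most common symbol in $\inset{c(x)_j : x \in \Lambda}$, so $\max_z \sum_{x \in \Lambda} \agr(c(x), z) = \sum_{j=1}^n \text{plurality}_j(\Lambda)$. Let $n_\alpha(j) = \inabs{\inset{x \in \Lambda : c(x)_j = \alpha}}$ and $T_j = \sum_\alpha n_\alpha(j)^2 = \inabs{\inset{(x,x') \in \Lambda^2 : c(x)_j = c(x')_j}}$. The $q-1$ non-plurality multiplicities at position $j$ sum to $L - \text{plurality}_j(\Lambda)$ and (by Cauchy--Schwarz) have squared sum at least $(L - \text{plurality}_j(\Lambda))^2/(q-1)$, so solving a quadratic in $\text{plurality}_j(\Lambda)$ gives the deterministic bound
\[ \text{plurality}_j(\Lambda) \leq \frac{L}{q} + \frac{1}{q}\sqrt{(q-1)\inparen{qT_j - L^2}}. \]
The distance hypothesis on $\mathcal{C}'$ then controls $\EE T_j$: for distinct $x, x' \in \Lambda$, the codeword $c'(x - x')$ has at most $(1/q + \eps^2/2)n'$ zeros, so $\Pr_{\mathcal{C}}[c(x)_j = c(x')_j] \leq 1/q + \eps^2/2$, whence
\[ \EE_\mathcal{C} \sum_j T_j \leq nL + nL(L-1)\inparen{\frac{1}{q} + \frac{\eps^2}{2}}. \]
Summing the plurality bound over $j$, applying Cauchy--Schwarz to the $\sum_j \sqrt{\cdot}$, and pushing the expectation through the square root via Jensen's inequality yields
\[ \mathcal{E} \leq \frac{nL}{q} + n\sqrt{L} + \frac{nL\eps}{\sqrt{2}}, \]
which for $L = \Theta(1/\eps^2)$ collapses to $O\inparen{nL(\eps + 1/q)}$.

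Plugging this back into Theorem \ref{thm:mainthm}, and using AM--GM on the cross term $\sqrt{\mathcal{E} Y}$ (via $\sqrt{\eps/q} \leq (\eps + 1/q)/2$), yields $\EE_\mathcal{C} \max \leq C_1 \cdot nL(\eps + 1/q)$ for an absolute constant $C_1$. Markov's inequality then gives, with constant probability, $\max \leq 2C_1 \cdot nL(\eps + 1/q) < nL(\eps' + 1/q)$ after careful tracking of constants with $\eps' = \Csmallq\eps$, and Proposition \ref{prop:dec} finishes the argument. The main delicate point is obtaining the correct $1/q$ dependence in the bound on $\mathcal{E}$: a naive Cauchy--Schwarz using $\text{plurality}_j \leq \sqrt{T_j}$ would only yield $nL/\sqrt{q}$, which is too weak for large $q$. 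The refined plurality bound above, which exploits the full structure of the $q$-element probability simplex, is precisely what extracts the tight $1/q$ term.
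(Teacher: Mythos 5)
Your proposal is correct, and it follows the same high-level structure as the paper's proof: bound $\mathcal{E}$ from the distance of $\mathcal{C}'$ via a Johnson-bound argument, feed this into Theorem~\ref{thm:mainthm} together with the hypothesis on $n$ that makes $Y \leq nL\min\{\eps, q\eps^2\}$, and finish through Proposition~\ref{prop:dec}. The one genuine divergence is in \emph{how} the Johnson bound is instantiated. The paper first proves a deterministic average-radius Johnson bound (Theorem~\ref{thm:jb}) via the standard embedding of $\F_q^n$ into $\R^{nq}$ and expanding $\bigl\| \sum_{x\in\Lambda}(c'(x)-v) \bigr\|^2 \ge 0$; this bound is \emph{linear} in the pairwise distances, so taking $\EE_\cC$ is immediate. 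You instead bound each plurality pointwise by the quadratic formula applied to $T_j \geq p^2 + (L-p)^2/(q-1)$, giving $\text{plurality}_j \le L/q + q^{-1}\sqrt{(q-1)(qT_j - L^2)}$, and then need an extra Cauchy--Schwarz over $j$ plus Jensen to push $\EE_\cC$ through the square root. The two routes are mathematically equivalent (the free parameter $\eps$ in the paper's Theorem~\ref{thm:jb} plays exactly the role of your Cauchy--Schwarz/Jensen optimization), but the paper's version is linear from the start and so obtains the slightly tighter $\mathcal{E} \leq nL(1/q + \eps)$ versus your $nL(1/q + \sqrt{2}\eps)$; this only shifts the constant in $\eps'$. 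Your observation that the naive bound $\text{plurality}_j \leq \sqrt{T_j}$ would lose the $1/q$ term is correct and is precisely why an embedding or quadratic refinement is needed. One small note: both your write-up and the paper's gloss over the fact that going from $\EE_\cC[\max] \leq nL(1/q + C\eps)$ to ``with constant probability the max is $< nL(1/q + \eps')$'' requires Markov applied to the centered quantity $\max - nL/q \geq 0$ (using $\pl_j \geq 1/q$ deterministically), since the raw Markov ratio $\frac{1/q + C\eps}{1/q + \eps'}$ does not stay bounded away from $1$ when $q\eps$ is small; you flagged a Markov step, which is the right instinct, but the specific form matters.
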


Corollary \ref{cor:smallq} holds for all values of $q$, but the list size $L \gtrsim \eps^{-2}$ is suboptimal when $q \gtrsim 1/\eps$.
To that end, we include the following corollary, which holds when $q \gtrsim 1/\eps^2$ and attains the ``correct'' list size.\footnote{As discussed below, we do not know good lower bounds on list sizes for large $q$; by ``correct" we mean matching the performance of a general random code.}

\begin{cor}[Large $q$]\label{cor:largeq}
Suppose that $q > 1/\eps^2$, and that $\eps$ is sufficiently small.
Let $\mathcal{C}'$ be a linear code over $\F_q$ with distance $1 -\eps^2$.
Let
\[ n \geq \frac{ 2\Cthm \log(N) \log^5(L) }{\eps},\]
and choose $\mathcal{C}$ to be a randomly sampled version of $\mathcal{C}'$, of block length $n$.
Then, with constant probability over the choice of $\mathcal{C}$,
the code $\mathcal{C}$ is $(1 - \eps', 1/\eps)$-list decodable,
where $\eps' = 5\eps$.
\end{cor}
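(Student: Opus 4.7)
The plan is to instantiate Theorem~\ref{thm:mainthm} at list size $L := 1/\eps + 1$, so that Proposition~\ref{prop:dec} with $\eps'' := 5\eps - 1/q$ gives list size $L-1 = 1/\eps$ and decoding radius $1 - 1/q - \eps'' = 1 - 5\eps$. Writing $M := \max_{z \in \F_q^n} \max_{|\Lambda|=L} \sum_{x \in \Lambda} \agr(c(x),z)$, I would aim to show $\EE_{\mathcal{C}} M$ is less than a constant fraction of $5nL\eps = nL(\eps'' + 1/q)$ by controlling the two ingredients $\mathcal{E}$ and $Y$ of Theorem~\ref{thm:mainthm}, and then apply Markov's inequality to get $M < 5nL\eps$ with constant probability, which triggers Proposition~\ref{prop:dec}.

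The main step is the bound on $\mathcal{E}$. Fix any $\Lambda$ of size $L$. Because $\mathcal{C}$ is randomly sampled from $\mathcal{C}'$, each column $G_j$ is uniformly distributed over the columns of $G'$, so for any distinct $x, x' \in \Lambda$ and any $j \in [n]$,
\[ \PR{x^T G_j = (x')^T G_j} = \frac{\agr(c'(x), c'(x'))}{n'} \le \eps^2 \]
by the distance assumption on $\mathcal{C}'$. The key observation is that if the plurality class at position $j$ has size $k \ge 2$, it already contributes $\binom{k}{2} \ge k-1$ colliding pairs, hence
\[ \text{plurality}_j(\Lambda) \le 1 + \inabs{\inset{(i,i') \suchthat i < i',\; x_i^T G_j = x_{i'}^T G_j}}. \]
Taking expectations, summing over $j$, and using $\binom{L}{2}\eps^2 = (1+\eps)/2 \le 1$, I would obtain $\mathcal{E} \le n\inparen{1 + \binom{L}{2}\eps^2} \le 2n$.

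The hypothesis $n \ge 2\Cthm\log(N)\log^5(L)/\eps$ directly gives $Y = \Cthm L \log(N)\log^5(L) \le Ln\eps/2 = n(1+\eps)/2 \le n$. Plugging in, Theorem~\ref{thm:mainthm} yields $\EE_{\mathcal{C}} M \le \mathcal{E} + Y + \sqrt{\mathcal{E}Y} \le 2n + n + \sqrt{2}\,n < 5n \le 5nL\eps$, and Markov's inequality then gives that $M < 5nL\eps$ holds with probability bounded away from zero (roughly $1/10$), as required by Proposition~\ref{prop:dec}.

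The step I expect to be the main obstacle is the $\mathcal{E}$ bound. A naive second-moment estimate $\EE[\text{plurality}_j] \le \sqrt{\EE[\text{plurality}_j^2]} = O(\sqrt{L})$ blows $\mathcal{E}$ up to $O(n\sqrt{L}) = O(n/\sqrt{\eps})$, which dwarfs the $5nL\eps = O(n)$ budget. What rescues the argument is the ``plurality $\le 1 +$ collision pairs'' inequality: when $L \approx 1/\eps$, the expected number of collision pairs per position is only $O(L^2\eps^2) = O(1)$ rather than $O(L)$. This is precisely the advantage that larger $q$ (together with the sharper distance $1 - \eps^2$ instead of $1 - 1/q - \eps^2/2$ used in Corollary~\ref{cor:smallq}) affords in this regime, and it is what lets us shrink the list size from $2/\eps^2$ down to $1/\eps$.
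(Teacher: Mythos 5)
Your proof is correct, and it follows the same overall skeleton as the paper (bound $\mathcal{E}$, feed into Theorem~\ref{thm:mainthm}, Markov, then Proposition~\ref{prop:dec}), but it replaces the paper's key lemma for bounding $\mathcal{E}$ with a genuinely different and more elementary one. The paper proves an average-radius Johnson bound (Theorem~\ref{thm:jb2}, in the style of MacWilliams--Sloane) via Cauchy--Schwartz: setting $a_j=|\{x\in\Lambda: c(x)_j=z_j\}|$, one relates $\bigl(\sum_j a_j\bigr)^2$ to $\sum_j \binom{a_j}{2}$ and solves the resulting quadratic, yielding $\mathcal{E}\le\frac{1}{2}\bigl(n+\sqrt{n^2+4n^2L(L-1)\eps^2}\bigr)\le 2nL\eps$. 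You instead use the pointwise inequality $\mathrm{plurality}_j(\Lambda)\le 1+\#\{\text{colliding pairs at }j\}$, which is just the observation $\binom{k}{2}\ge k-1$, and then take a first moment: since $\mathbb{P}\{c(x)_j=c(y)_j\}\le\eps^2$ by the distance hypothesis on $\mathcal{C}'$ and the sampling, you get $\mathcal{E}\le n\bigl(1+\binom{L}{2}\eps^2\bigr)$. With $L=1/\eps+1$ this is $n(3+\eps)/2\le 2n$, which is slightly sharper than the paper's bound in this regime, and it slots into $\mathcal{E}+Y+\sqrt{\mathcal{E}Y}<5n\le 5nL\eps$ exactly as you say; Markov then gives the constant success probability, as it must (the paper is terse on this point but uses the same implicit Markov step). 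The trade is that your collision-counting bound is linear in $L^2\eps^2$ and hence degrades faster than the square-root bound of Theorem~\ref{thm:jb2} when the expected number of collisions per coordinate exceeds $O(1)$ --- which is precisely why the paper keeps the Cauchy--Schwartz version in reserve and, for the small-$q$ corollary, switches to yet another average-radius Johnson bound (Theorem~\ref{thm:jb}). In the $q>1/\eps^2$, $L\approx 1/\eps$ regime of this corollary, though, the collision count per coordinate is $O(1)$, so your shortcut is both valid and cleaner. One minor point worth flagging in both your write-up and the paper's: one implicitly assumes $G'$ has full rank so that distinct messages in $\Lambda$ map to distinct codewords of $\mathcal{C}'$, which is needed to invoke the distance hypothesis for the pair; this is a standard convention for linear codes but deserves a word.
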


The proofs of Corollaries \ref{cor:smallq} and \ref{cor:largeq} amount to controlling the worst expectation $\mathcal{E}$.  
This control follows from standard Johnson bound-type statements, and the proofs are given in Appendix \ref{app:avgjb}.
Below, we discuss the consequences (and optimality) of these corollaries for Reed-Solomon codes and random linear codes.

\begin{remark}[Average-radius list decodability]
We remark that the proofs of both Corollaries \ref{cor:smallq} and \ref{cor:largeq} go through Proposition \ref{prop:dec}, and thus actually show average-radius list decodability, not just list decodability.  In particular, the applications to both Reed-Solomon codes and random linear codes hold under this stronger notion as well.
\end{remark}

\subsection{Most Reed-Solomon codes are list-decodable beyond the Johnson bound}\label{ssec:rs}
Our results imply that a Reed-Solomon code with random evaluation points is, with high probability, list decodable beyond the Johnson bound.  

We briefly recall the definition of Reed-Solomon codes, and set notation for our discussion.
Fix $q \geq n$, and an integer $k$, and let $\inset{\alpha_1,\ldots,\alpha_n} \subseteq \F_q$ be a list of ``evaluation points."
The corresponding \textbf{Reed-Solomon code}  $\mathcal{C} \subset \F_q^n$ encodes a polynomial (message) $f \in \F_q[x]$ of degree at most $k-1$ as
\[ c(f) = ( f(\alpha_1), f(\alpha_2), \ldots, f(\alpha_n) ) \in \F_q^n. \]
Note that there are $q^k$ polynomials of degree at most $k-1$, and thus $|\mathcal{C}| = q^k$.

For Reed-Solomon codes, we are often interested in the parameter regime when $q \ge n$ is quite large.
In particular, below we will be especially interested in the regime when $q \gg 1/\eps^2$, and so we will use Corollary \ref{cor:largeq} for this application.
To apply Corollary \ref{cor:largeq}, let $\mathcal{C}'$ be the Reed-Solomon code of block length $q$ (that is, every point in $\F_q$ is evaluated), and choose the $n$ evaluation points $(\alpha_i)_{i=1}^n$ for $\mathcal{C}$ independently from $\F_q$.
We will choose the block length $n$ so that 
\[ n \lesssim \frac{ \log(N) \log^5(1/\eps) }{\eps}.\]
It is well known that 
the generator matrix for $\mathcal{C}$ will have full rank.  In the favorable case, the rate of $\mathcal{C}$ is at least
\begin{equation}\label{eq:rsrate}
R \gtrsim \frac{ \eps }{\log(q)\log^5(1/\eps) }.
\end{equation}

Before we investigate the result of Corollary \ref{cor:largeq}, let us pause to
observe what the Johnson bound predicts for $\mathcal{C}$.  The distance of
$\mathcal{C}$ is exactly $1 - (k-1)/n$.  Indeed, any two polynomials of degree
$k-1$ agree on at most $k-1$ points, and this is attained by, say, the zero
polynomial and any polynomial with $k$ distinct roots in $\{\alpha_1,\ldots,
\alpha_n\}$.  Thus, letting $\eps = (k-1)/n$, the Johnson bound predicts that
$\mathcal{C}$ has rate $\eps$, distance $1 - \eps$, and is list decodable up to
$1 - O(\sqrt{\eps})$, with polynomial list sizes.

Now, we compare this to the result of Corollary \ref{cor:largeq}.
The distance of $\mathcal{C}'$ is $1 - (k-1)/q$, so as long as $q \gtrsim k/\eps^2$, we may apply Corollary \ref{cor:largeq}.  
Then, Corollary \ref{cor:largeq} implies that the resulting Reed-Solomon code $\mathcal{C}$ has rate 
\[ \Omega\inparen{ \frac{\eps}{\log(q)\log^5(1/\eps)} }, \]
distance $1 - \eps$, and is list decodable up to radius $1 - 5\eps$, with list sizes at most $1/\eps$.

In particular, the tolerable error rate may be as large as $1 - O(\eps)$, rather than $1 - O(\sqrt{\eps})$, and the rate suffers only by logarithmic factors.

\subsection{Near-optimal bounds for random linear codes over large alphabets}\label{ssec:randlin}
In addition to implying that most Reed-Solomon codes are list decodable beyond the Johnson bound, Corollaries \ref{cor:smallq} and \ref{cor:largeq} provide the best known bounds on random linear codes over large fields.  This improves the recent work of one of the authors in~\cite{woot2013} for large $q$; further, our new results are tight up to logarithmic factors. 

Suppose that $\mathcal{C}'$ is the Hadamard code over $\F_q$ of dimension
$k$; that is, the generator matrix of $\mathcal{C}' \in \F_q^{k \times q^k}$
has all the elements of $\F_q^k$ as its columns.  The relative distance of
$\mathcal{C}'$ is $1 - \nicefrac{1}{q}$, and so we may
apply the corollaries with any $\eps > 0$ that we choose. 

To this end, fix $\eps > 0$, and let $\mathcal{C}$ be a randomly sampled version of $\mathcal{C}'$, of block length
\[ n = \frac{ 2\Cthm \log(q^k) \log^5(1/\eps) }{\eps}. \]
It is not hard to see that the generator matrix of $\mathcal{C}$ will have full rank with high probability,
and so the rate of $\mathcal{C}$ will be at least
\begin{equation}\label{eq:rate}
R = k/n = \frac{ \min\inset{ \eps, q\eps^2 } }{ 2\Cthm \log(q) \log^5(1/\eps) }. 
\end{equation}
By Corollary \ref{cor:smallq}, $\mathcal{C}$ is list decodable up to error radius $1 - \nicefrac{1}{q} - O(\eps)$, with list sizes at most $2/\eps^2$.
When $q \gtrsim 1/\eps^2$, Corollary \ref{cor:largeq} applies, and we get the same result with an improved list size of $1/\eps$.

We compare these results to known results on random linear codes in Figure \ref{table:randlin}.
\begin{figure}
\begin{tikzpicture}[node distance=0]
\tikzstyle{entry}=[rectangle,minimum width=1.4in,text width=1in, align=center,minimum height=1in]
\tikzstyle{axis}=[rectangle,minimum width=1.4in, text width=1in, align=center,minimum height=.6in]
\node[axis](rate) {Best known rate for random linear codes};
\node[axis,right=of rate](upper) {Upper bound on rate};
\node[axis,right=of upper](list) {Best known list size for random linear codes};
\node[axis,right=of list](lower) {Lower bound on list size};

\node[entry,below=of rate,minimum height=.6in](woot){$\frac{ \eps^2}{\log(q)}$,\cite{woot2013}};
\node[entry,below=of woot](ussmall){\[\frac{ q \eps^2}{\log(q) \log^5(1/\eps)}\] this work \\ Cor. \ref{cor:smallq} };
\node[entry,below=of ussmall,minimum height=1.4in](uslarge){ \[\frac{ \eps}{\log(q) \log^5(1/\eps)}\]  this work \\ Cors. \ref{cor:smallq}, \ref{cor:largeq}};

\node[entry,below=of upper, minimum height=1.6in](optimum){\[\frac{ q \eps^2 }{ \log(q)}\]};
\node[entry,below=of optimum, minimum height=1in](opt2){\[1-H_q\inparen{1 - \frac{1}{q} - \eps}\]};
\node[entry,below=of opt2, minimum height=.4in](opt3){$\eps$};

\node[entry,below=of list,minimum height=2in](easy){\[\frac{1}{ \eps^2}\] \cite{cgv2012,woot2013}, this work \\ Cor. \ref{cor:smallq}};
\node[entry,below=of easy,minimum height=1in](eps){\[\frac{1}{ \eps}\]  this work \\ Cor. \ref{cor:largeq}};

\node[entry,below=of lower,minimum height=2in](gv){\[\frac{1}{ q^5 \eps^2}\] \cite{GV10}};
\node[entry,below=of gv, minimum height=1in](filler) {};

\draw (woot.north east) -- (uslarge.south east);
\draw (optimum.north east) -- (opt3.south east);
\draw (easy.north east) -- (eps.south east);
\draw[very thick] (gv.north east) -- (filler.south east);
\draw[very thick] (woot.north west) -- (uslarge.south west);
\draw[very thick] (rate.south west) -- (lower.south east);
\draw[very thick] (uslarge.south west) -- (filler.south east);

\draw[dashed] (woot.south east) -- (woot.south west);
\draw[dashed] (optimum.south east) -- (ussmall.south west);
\draw[dashed] (opt2.south east) -- (opt2.south west);
\draw[dashed] (easy.south west) -- (gv.south east);

\node[left=.1in of woot.south west](t1) {$q = \log^5(1/\eps)$};
\node[left=.1in of ussmall.south west](t2) {$q = 1/\eps$};
\node[left=1.5in of opt2.south west](t3) {$q = 2^{\Omega(1/\eps)}$};
\node[left=4.3in of gv.south west](t4) {$q = 1/\eps^2$};
\node[left= .1in of woot.north west](t4) {Regime};

\end{tikzpicture}
\caption{The state of affairs for $q$-ary random linear codes.  Above, the list
decoding radius is $1 - 1/q - \eps$, and we have suppressed constant factors.}
\label{table:randlin}
\end{figure}
The best known results on the list decodability of random linear codes, from ~\cite{woot2013}, state that a random linear code of rate on the order of $\eps^2 / \log(q)$ is $(1 - 1/q - \eps, O(1/\eps^2))$-list decodable.  This is optimal (up to constant factors) for constant $q$, but it is suboptimal for large $q$.  In particular, the bound on the rate is surpassed by our bound \eqref{eq:rate} when $q \gtrsim \log^5(1/\eps)$.

When the error rate is $1 - 1/q - \eps$, the optimal information rate for list decodable codes is given by the list decoding capacity theorem, which implies that we must have $R \leq 1 - H_q(1 - 1/q - \eps).$
This expression behaves differently for different parameter regimes; in particular, when $q \leq 1/\eps$ and $\eps$ is sufficiently small, we have
\[ 1 - H_q(1 - 1/q - \eps) = \frac{ q\eps^2 }{2 \log(q) (1 - 1/q) } + O(\eps^3), \]
while when $q \geq 2^{\Omega(1/\eps)}$, the optimal rate is linear in $\eps$.
For the first of these two regimes---and indeed whenever $q \leq 1/\poly(\eps)$---our bound \eqref{eq:rate} is optimal up to polylogarithmic factors in $1/\eps$.  In the second regime, when $q$ is exponentially large, our bound slips by an additional factor of $\log(q)$.  

For the $q \leq 1/\eps^2$ regime, our list size of $1/\eps^2$ matches existing results, and when $q$ is constant it matches the lower bounds of~\cite{GV10}.
For $q \geq 1/\eps^2$, our list size of $1/\eps$ is the best known. 
There is a large gap between the lower bound of~\cite{GV10} and our upper bounds for large $q$.  However, there is evidence that the most of discrepancy is due to the difficulty of obtaining lower bounds on list sizes.  Indeed, a (general) random code of rate $1 - H_q(1-1/q-\eps) - 1/L$ is list-decodable with list size $L$, implying that $L = O(1/\eps)$ is the correct answer for $q \gtrsim 1/\eps$.  
Thus, while our bound seems like it is probably weak for $q$ super-constant but smaller than $1/\eps^2$, it seems correct for $q \gtrsim 1/\eps^2$.

\section{Proof of Theorem \ref{thm:mainthm}: reduction to Gaussian processes}\label{sec:redux}

In this section, we prove Theorem \ref{thm:mainthm}.  
For the reader's convenience, we restate the theorem here.
\begin{unlabeledthm}[Theorem \ref{thm:mainthm}, restated]
Fix $\eps > 0$.
Let $\mathcal{C}$ be a random linear code with independent symbols.  Let
\[ \mathcal{E} = \max_{\Lambda \subset \F_q^k, |\Lambda| = L} \EE_{\mathcal{C}} \max_{z\in \F_q^k} \inparen{ \sum_{x \in \Lambda} \agr(c(x),z) }.\]
Then
\[ \EE_\mathcal{C} \max_{z\in \F_q^n} \max_{\Lambda \subset \F_q^k, |\Lambda| = L} \sum_{x \in \Lambda} \agr(c(x),z) \leq 
\mathcal{E} + Y + \sqrt{ \mathcal{E} Y },
\]
where
\[ Y = \Cthm L \log(N) \log^5(L)\]
for an absolute constant $\Cthm$.
\end{unlabeledthm}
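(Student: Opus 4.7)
The plan is to argue in three steps: (a) reduce the target bound to controlling a centered deviation of the sum of coordinate-wise pluralities, (b) symmetrize that deviation into a Gaussian process, and (c) invoke the chaining bound of Theorem \ref{thm:gaussian} to obtain a self-referential inequality, which is then solved.

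First, for any fixed $\Lambda \subseteq \F_q^k$ of size $L$, the inner maximization over $z \in \F_q^n$ decouples across coordinates:
\[ \max_{z \in \F_q^n} \sum_{x \in \Lambda} \agr(c(x),z) \;=\; \sum_{j=1}^n \max_{\alpha \in \F_q} \inabs{\inset{x \in \Lambda : c(x)_j = \alpha}} \;=:\; P(\Lambda).\]
Write $M := \EE_\mathcal{C} \max_{|\Lambda|=L} P(\Lambda)$ for the quantity to be bounded. Since $\EE_\mathcal{C} P(\Lambda) \leq \mathcal{E}$ for every admissible $\Lambda$, linearity gives
\[ M \;\leq\; \mathcal{E} + \EE_\mathcal{C} \max_{|\Lambda|=L}\,\inparen{P(\Lambda) - \EE_\mathcal{C} P(\Lambda)}.\]

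For the deviation term, the independent-symbols hypothesis on $\mathcal{C}$ makes the summands $\text{plurality}_j(\Lambda) := \max_\alpha \inabs{\inset{x \in \Lambda : c(x)_j = \alpha}}$ mutually independent across $j$, since each depends only on column $j$ of the generator matrix. Standard symmetrization followed by Rademacher-to-Gaussian comparison (Appendix \ref{app:gauss}) yields
\[ \EE_\mathcal{C} \max_{|\Lambda|=L} \inparen{P(\Lambda) - \EE_\mathcal{C} P(\Lambda)} \;\lesssim\; \EE_\mathcal{C}\,\EE_g\, \max_{|\Lambda|=L}\, X(\Lambda),\]
where $X(\Lambda) = \sum_{j=1}^n g_j \cdot \text{plurality}_j(\Lambda)$ is exactly the Gaussian process of \eqref{eq:fakegp}, the $g_j$ being i.i.d.\ standard normals independent of $\mathcal{C}$.

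Now I apply Theorem \ref{thm:gaussian} (proved in Section \ref{sec:gaussian}), which, conditional on $\mathcal{C}$, will bound the Gaussian supremum in the form $\EE_g \max_\Lambda X(\Lambda) \lesssim \sqrt{Y \cdot \max_\Lambda P(\Lambda)}$ with $Y = \Cthm L \log(N) \log^5(L)$. Taking expectation over $\mathcal{C}$ and pulling it inside the square root by Jensen's inequality produces
\[ \EE_\mathcal{C}\,\EE_g\, \max_\Lambda X(\Lambda) \;\lesssim\; \sqrt{Y \cdot M}.\]
Combining with the centering inequality gives the self-referential relation $M \leq \mathcal{E} + c\sqrt{YM}$ for an absolute constant $c$. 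Viewing this as a quadratic in $\sqrt{M}$ yields $\sqrt{M} \leq c\sqrt{Y}/2 + \sqrt{c^2 Y/4 + \mathcal{E}}$; squaring and rearranging delivers the advertised bound $M \leq \mathcal{E} + Y + \sqrt{\mathcal{E} Y}$ after absorbing numerical constants into $\Cthm$.

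The reduction above is essentially bookkeeping; the real work is deferred to Theorem \ref{thm:gaussian}. The main obstacle there is that the coordinate-wise pluralities are strongly correlated across different $\Lambda$ through the shared generator columns, so a naive union bound over $\binom{N}{L}$ sets loses an extra factor of $L$ relative to what is needed, and more importantly it ignores the fact that typical pluralities shrink with $q$. The chaining argument in Section \ref{sec:gaussian} must therefore chain over both $\Lambda$ and the set of active coordinates $j$ simultaneously in order to scale correctly with $q$; this double chaining, together with the self-referential way that $Q_0 = \max_\Lambda P(\Lambda)$ reappears on the right-hand side, is where the technical difficulty (and the improvement over \cite{woot2013}) lies.
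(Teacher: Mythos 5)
Your proof is correct and follows essentially the same route as the paper: reduce to controlling the centered deviation of $\max_\Lambda \sum_j \mathrm{plurality}_j(\Lambda)$, symmetrize and compare to a Gaussian process, invoke Theorem~\ref{thm:gaussian} conditional on $\mathcal{C}$ (whose bound is in terms of $Q = \frac{1}{L}\max_\Lambda P(\Lambda)$, which matches your $\sqrt{Y\cdot\max_\Lambda P(\Lambda)}$ after the trivial normalization), pull $\EE_\mathcal{C}$ under the square root via Jensen, and close a self-referential inequality. The only difference is bookkeeping: you set up the quadratic directly as $M \leq \mathcal{E} + c\sqrt{YM}$, whereas the paper isolates the deviation $\mathcal{F}$ and solves $\mathcal{F} \leq \sqrt{Y(\mathcal{E}+\mathcal{F})}$ before adding $\mathcal{E}$ back; both solve to the same form after absorbing constants into $\Cthm$, so this is cosmetic rather than a different argument.
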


To begin, we introduce some notation.  
\begin{notation}
For a set $\Lambda \subseteq \F_q^k$, let $\pl_j$ denote the (fractional) \em plurality \em of index $j\in [n]$:
\[ \pl_j(\Lambda) = \frac{1}{|\Lambda|}\max_{ \alpha \in \F_q}  \inabs{ \inset{ x \in \Lambda \suchthat c(x)_j = \alpha } }.\]
For a set $I \subseteq [n]$, let 
\[ \pl_I(\Lambda) \in [0,1]^n \]
be the the vector $(\pl_j(\Lambda))_{j =1}^n$ restricted to the coordinates in $I$, with the remaining coordinates set to zero.
\end{notation}
%\mkw{Does the word ``plurality" make sense when it's a fraction?}
%\ar{Technically, it should be an integer but I'm fine with your choice.}
%\mkw{I'm not particularly attached to it, and it's a macro that's easy to change---do you have a preferred notation?}

Rephrasing the goal in terms of our new notation, the quantity we wish to bound is
\begin{equation}\label{eq:want}
\EE_{\cC} \max_{z \in \F_q^n} \max_{ |\Lambda| = L} \sum_{x \in \Lambda} \agr(c(x),z)
=
L \cdot \EE_\mathcal{C} \max_{|\Lambda| = L} \sum_{j \in [n]} \pl_j(\Lambda) .
\end{equation}

Moving the expectation inside the maximum recovers the quantity
\[ \mathcal{E} =  L\cdot \max_{|\Lambda| = L} \EE_\cC \sum_{j \in [n]} \pl_j(\Lambda),\]
which appears in
the statement of Theorem \ref{thm:mainthm}.
Since Theorem \ref{thm:mainthm} outsources 
a bound on $\mathcal{E}$
to the user (in our case, Corollaries \ref{cor:smallq} and \ref{cor:largeq}),
we seek to control the worst deviation
\begin{align}
\mathcal{F} &:= L \cdot \EE_\cC \max_{|\Lambda| = L}  \inabs{ \sum_{j \in [n]} \pl_j(\Lambda) - \EE_\cC \sum_{j \in [n]} \pl_j(\Lambda) } \notag\\
&= L\cdot\EE_\cC \max_{|\Lambda| = L} \inabs{ \sum_{j \in [n]} \inparen{\pl_j(\Lambda) - \EE_\cC \pl_j(\Lambda)}  } \label{eq:deviation}.
\end{align}
Indeed, let 
\[Q = Q(\mathcal{C}) = \max_{|\Lambda| = L} \sum_{j\in [n]} \pl_j(\Lambda),\]
so that 
$L\cdot \EE_\mathcal{C} Q$ is the quantity in \eqref{eq:want}. 
Then, 
\begin{align}
\EE_\mathcal{C} Q &= \EE_{\mathcal{C}} \max_{|\Lambda| = L}  \inparen{ \sum_{j \in [n]} \pl_j(\Lambda) - \EE_{\mathcal{C}} \sum_{j \in I} \pl_j(\Lambda)  + \EE_{\mathcal{C}} \sum_{j \in I} \pl_j(\Lambda) }\notag\\
&\leq \EE_{\mathcal{C}} \max_{|\Lambda| = L} \inabs{ \sum_{j \in [n]} \pl_j(\Lambda) - \EE_{\mathcal{C}} \sum_{j \in I} \pl_j(\Lambda)} + \max_{|\Lambda| = L} \EE_{\mathcal{C}} \sum_{j \in [n]} \pl_j(\Lambda) \notag\\
&= \frac{1}{L} \inparen{\mathcal{F} + \mathcal{E}}, \label{eq:triangleineq}
\end{align}
so getting a handle on $\mathcal{F}$ would be enough.  With that in mind, we return our attention to \eqref{eq:deviation}.
By the assumption of independent symbols, the summands in \eqref{eq:deviation} are independent.  
By a standard symmetrization argument followed by a comparison argument (made precise in Appendix \ref{app:justsymm} as Lemmas \ref{lem:symmetrize} and \ref{lem:compare}, respectively), we may bound
\begin{align}\label{eq:symmetrize}
\frac{1}{L} \mathcal{F} &= \EE_\cC \max_{|\Lambda| = L} \inabs{ \sum_{j \in [n]} \inparen{\pl_j(\Lambda) - \EE_\cC \pl_j(\Lambda)}  }\\
&\leq \Ccmp\, \EE_\cC \EE_g \max_{|\Lambda| = L} \inabs{ \sum_{j\in [n]} g_j \pl_j(\Lambda) }
\end{align}
Above, $g_j$ are independent standard normal random variables.

Let
\begin{equation}\label{eq:defS0}
 \mathcal{S}_0 = \{[n]\} \times \inset{ \Lambda \subset \F_q^k \suchthat |\Lambda| = L },
\end{equation}
so that we wish to control
\[ \EE_\cC \EE_g \max_{(I,\Lambda) \in \mathcal{S}_0} \inabs{ \sum_{j \in I} g_j \pl_j(\Lambda) }. \]
At this stage, maximimizing $I$ over the one-element collection $\{[n]\}$ may seem like a silly use of notation, but we will use the flexibility as the argument progresses.

Condition on the choice of $\mathcal{C}$ until further notice, and consider only the randomness over the Gaussian random vector $g = (g_1, \ldots, g_n)$.
In particular, this fixes $Q = Q(\mathcal{C})$.
In order to take advantage of \eqref{eq:symmetrize}, we will study the Gaussian process
\begin{equation}\label{eq:gp}
X(I, \Lambda) = \sum_{j\in I} g_j \pl_j(\Lambda) 
\end{equation}
indexed by $(I,\Lambda) \in \mathcal{S}_0$.
The bulk of the proof of Theorem \ref{thm:mainthm} is the following theorem, which controls the expected supremum of $X(I,\Lambda)$, in terms of $Q$.
\begin{theorem}\label{thm:gaussian}
Condition on the choice of $\cC$.  Then
\begin{align*}
        \EE_g \max_{(I,\Lambda) \in \mathcal{S}_0} |X(I,\Lambda)| &\leq \Cgauss 
                 \sqrt{Q\log(N)\log^5(L)}
\end{align*}
for some constant $\Cgauss$.
\end{theorem}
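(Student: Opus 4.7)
The plan is to prove Theorem \ref{thm:gaussian} by a Dudley-style chaining argument, driven by Lemma \ref{lem:chaining}, which will supply the nets and metric bounds feeding into the chain. The canonical Gaussian metric induced by the process $X$ is
\[ d\bigl((I,\Lambda),(I',\Lambda')\bigr)^2 \;=\; \sum_{j} \bigl(\pl_j(\Lambda)\ind{j \in I} - \pl_j(\Lambda')\ind{j \in I'}\bigr)^2, \]
and the aim is to exploit the fact that $\Lambda$ and a random subset $\Lambda' \subset \Lambda$ of half the size tend to be close in this metric, provided $I$ is concentrated on the coordinates carrying most of $Q = \max_{|\Lambda| = L}\sum_j \pl_j(\Lambda)$.

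First I would set $T = \log L$ and construct a nested sequence of nets $\mathcal{S}_0 \supseteq \mathcal{S}_1 \supseteq \cdots \supseteq \mathcal{S}_T$, where every pair $(I,\Lambda) \in \mathcal{S}_t$ satisfies $|\Lambda| = L/2^t$, together with parent maps $\pi_t:\mathcal{S}_t\to\mathcal{S}_{t+1}$ produced by Lemma \ref{lem:chaining}. I expect the lemma to guarantee, for each $(I,\Lambda)\in\mathcal{S}_t$, that the parent $(I',\Lambda')=\pi_t(I,\Lambda)$ has $\Lambda'\subset\Lambda$, $I'\subseteq I$, and a distance bound of the form $d((I,\Lambda),(I',\Lambda'))^2 \lesssim (Q \cdot 2^t / L)\cdot \polylog(L)$. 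Coupled with the naive count $|\mathcal{S}_t|\leq\binom{N}{L/2^t}$ this yields $\log|\mathcal{S}_t|\lesssim (L/2^t)\log N$.

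Next, for each $s_0\in\mathcal{S}_0$ I would iterate the parent maps to form a chain $s_0,s_1,\dots,s_T$ and telescope
\[ X(s_0) \;=\; X(s_T) + \sum_{t=0}^{T-1}\bigl(X(s_t)-X(s_{t+1})\bigr). \]
Each increment $X(s_t)-X(s_{t+1})$ is a centered Gaussian with variance bounded by $d(s_t,s_{t+1})^2$, so a union bound over the at most $|\mathcal{S}_t|$ candidate pairs together with the standard Gaussian tail gives
\[ \EE_g\max_{s_t\in\mathcal{S}_t}\bigl|X(s_t)-X(\pi_t(s_t))\bigr| \;\lesssim\; \sqrt{Q \cdot 2^t/L \cdot \polylog(L)}\cdot\sqrt{(L/2^t)\log N} \;\lesssim\; \sqrt{Q\log N\cdot\polylog(L)}, \]
a bound uniform in $t$. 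Summing over the $T=\log L$ levels produces the target $\sqrt{Q\log N \log^5 L}$ once the polylogarithmic contributions per level (roughly $\log^3 L$) are combined. The terminal contribution $\EE_g \max_{s_T}|X(s_T)|$ is absorbed in the same way, provided the lemma keeps $\|\pl_{I_T}(\Lambda_T)\|_2$ small at the finest level.

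The main obstacle, and the reason Lemma \ref{lem:chaining} is the technical heart of the paper, is the simultaneous chaining over both $I$ and $\Lambda$. Without it, a singleton terminal $\Lambda_T$ would force $\pl_j(\Lambda_T)=1$ at every coordinate, yielding $\|\pl_{[n]}(\Lambda_T)\|_2^2=n$ and a terminal contribution of $\sqrt{n\log N}$; this would recover, up to polylogarithmic factors, the weaker bound of \cite{woot2013} and completely lose the benefit of large $q$. The essential work must therefore be to restrict $I$ to the coordinates contributing nontrivially to $Q$ at each scale, so that distances to parents (and the terminal norm) scale as above. Granted Lemma \ref{lem:chaining}, the calculation above is mechanical, and, as the authors note, amounts to an unwrapping of Dudley's theorem specialized to the chain they construct.
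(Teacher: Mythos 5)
Your proposal is correct and follows essentially the same route as the paper's Section \ref{sec:prooffromlemma}: a Dudley-style chaining argument driven by Lemma \ref{lem:chaining}, with the increment distance $d(s_t,s_{t+1})^2\lesssim (Q\,2^t/L)\polylog(L)$ balancing the entropy $\log|\mathcal{S}_t|\lesssim (L/2^t)\log N$ to give a level-uniform bound, and with the terminal term controlled because the $I$-chaining keeps $\|\pl_{I_T}(\Lambda_T)\|_2$ small. You also correctly identify the key structural point (chaining in $I$ as well as $\Lambda$) that the paper emphasizes as the deviation from~\cite{woot2013}.
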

We will prove Theorem \ref{thm:gaussian} in Section \ref{sec:gaussian}.  First, let us show how it implies Theorem \ref{thm:mainthm}.
By \eqref{eq:symmetrize}, and applying Theorem \ref{thm:gaussian}, we have
\begin{align*}
 \mathcal{F} &\leq \Ccmp \, L \,\EE_\cC \EE_g \max_{(I, \Lambda) \in \mathcal{S}_0} \inabs{\sum_{j\in I} g_j v_j(z,\Lambda)} \\
		&\leq \Cgauss \Ccmp \, L \, \EE_\cC  \left[
                 \sqrt{Q\log(N)\log^5(L)}  \right]\\
		&\leq \Cgauss \Ccmp \, L \,  
                 \sqrt{\EE_{\cC}Q\,\log(N)\log^5(L)} 
\end{align*}
Using the fact \eqref{eq:triangleineq} that $\EE_\cC Q \leq \frac{1}{L} \inparen{\mathcal{E} + \mathcal{F}}$,
\begin{align*}
	\mathcal{F} &\leq \Cgauss\Ccmp \sqrt{ L\inparen{ \mathcal{E} + \mathcal{F}} \log(N) \log^5(L) } \\
			&=: \sqrt{ Y( \mathcal{E} + \mathcal{F} ) }, 
\end{align*}
where 
\[ Y := \Cgauss^2 2\pi L \log(N) \log^5(L). \]
Solving for $\mathcal{F}$, this implies that
\[ \mathcal{F} \leq \frac{Y + \sqrt{ Y^2 + 4Y\mathcal{E} }}{2} \leq Y + \sqrt{Y\mathcal{E}}.\]
Then, from \eqref{eq:triangleineq} and the definition of $Q$ 
(recall that 
$L\cdot \EE_\mathcal{C} Q$ is the quantity in \eqref{eq:want}),
\begin{align*}
\EE_{\cC} \max_{I, \Lambda} \sum_{x \in \Lambda} \agr(c(x),z) &= L\EE_{\mathcal{C}} Q \\
&\leq \mathcal{E} + \mathcal{F} \\
& \leq \mathcal{E} + Y + \sqrt{Y\mathcal{E}} ,
\end{align*}
as claimed.
This proves Theorem \ref{thm:mainthm}.

\section{Proof of Theorem \ref{thm:gaussian}: controlling a Gaussian process}\label{sec:gaussian}
In this section, we prove Theorem \ref{thm:gaussian}.  Recall that the goal was to control the Gaussian process \eqref{eq:gp} given by
\begin{equation*}
	X(I, \Lambda) = \sum_{j \in I} g_j \pl_j(\Lambda). 
\end{equation*}
Recall also that we are conditioning on the choice of $\mathcal{C}$.  Because of this, for notational convenience, we will identify $\Lambda \subset \F_q^k$ with the corresponding set of codewords $\inset{ c(x) \suchthat x \in \Lambda } \subset \mathcal{C}$.  That is, for this section, we will imagine that $\Lambda \subset \mathcal{C}$ is a set of codewords.
\begin{notation}
	When the code $\mathcal{C}$ is fixed (in particular, for the entirety of Section \ref{sec:gaussian}), we will identify $\Lambda \subset \F_q^k$ with $\Lambda \subset \mathcal{C}$, given by
		\[ \Lambda \gets \inset{ c(x) \suchthat x \in \Lambda }. \]
\end{notation}

To control the Gaussian process \eqref{eq:gp}, we will use a so-called ``chaining argument."  
That is, we will define a series of nets, $\mathcal{S}_t \subset 2^{[n]} \times 2^\mathcal{C}$ and write, for any $(\startI,\startL) \in \mathcal{S}_0$,
\begin{equation*}
       \inabs{ X(\startI,\startL) } \leq  \inparen{ \sum_{t=0}^{t_{\max}-1} \left| X( \pi_{t}(\startI,\startL) ) - X(\pi_{t+1}(\startI, \startL) ) \right| } + \inabs{  X( \pi_{t_{\max}} (\startI,\startL) )},
\end{equation*}
where $\pi_t(\startI,\startL) \in \mathcal{S}_t$ will shortly be determined, and $\pi_0(\startI,\startL) = (\startI,\startL)$.  Then we will argue that each step in this ``chain" (that is, each summand in the first term) is small with high probability, and union bound over all possible chains.  

For Gaussian processes, such chaining arguments come in standard packages, for
example Dudley's integral inequality~\cite{lt}, or Talagrand's generic chaining
inequality~\cite{genchain}.   However, we choose to unpack the argument
for two reasons.  The first is that our choice of nets is informed by the
structure of the chaining argument, and so we feel it is clearer to define the nets in 
the context of the complete argument.  
The second reason is to make the exposition self-contained.

We remark that, due to the nature of our argument, it is convenient for us to start with the large nets indexed by small $t$, and the small nets indexed by large $t$; this is in contrast with convention.

\subsection{Defining the nets}
We will define nets $\mathcal{S}_t$, for each $t$ recursively. 
Begin by defining $\mathcal{S}_0$ as in \eqref{eq:defS0}, and let $\pi_0: \mathcal{S}_0 \to \mathcal{S}_0$ be the identity map.
Given $\mathcal{S}_t$, we will define $\mathcal{S}_{t+1}$, as well as the maps $\pi_{t+1}: \mathcal{S}_0 \to \mathcal{S}_{t+1}$.
Our maps $\pi_t$ will satisfy the guarantees of the following lemma.

\begin{lemma}\label{lem:chaining}
Fix a parameter $\eta = 1/\log(L)$, and suppose $c_0 < L < N/2$ is sufficiently large, for some constant $c_0$.
Let
\begin{equation}\label{def:tm}
 \tm = \frac{ \log(L) - 2\log(1/\eta) - 2}{\log(2/(1 - \eta))}.
\end{equation}
Then there is %some $\tm \leq \log(L)$ and
a sequence of maps
\[\pi_t : \mathcal{S}_0 \to 2^{[n]} \times 2^{\mathcal{C}} \]
for $t = 0,\ldots, \tm$ so that $\pi_0$ is the identity map and
so that the following hold.

First, for all $(\startI,\startL) \in \mathcal{S}_0$, and for all $t =0,\ldots,\tm$, the pair $(\thisI,\thisL) = \pi_t(\startI, \startL)$ obeys
\begin{equation}\label{eq:induct}
 \sum_{j \in \thisI} \pl_j(\thisL) \leq Q_t := \inparen{ 1 + \eta }^t Q. 
\end{equation}
and
\begin{equation}\label{eq:Lsize}
\inparen{ \frac{1 - \eta}{2}}^{t} L  \leq  |\thisL|  \leq \inparen{ \frac{1 + \eta}{2}}^t L.
\end{equation}

In addition, for all $(\startI,\startL) \in \mathcal{S}_0$, and for all $t = 0,\ldots, \tm -1$, the pair $(\nextI,\nextL) = \pi_{t+1}(\startI,\startL)$ obeys
	\begin{equation}\label{eq:width}
	\norm{ \pl_{\thisI}(\thisL) -  \pl_{\nextI}(\nextL) }_2 \leq  \frac{ \Clem \sqrt{Q_t \log(L)}}{\eta \sqrt{|\thisL|} }
	\end{equation}
	for some constant $\Clem$.

Finally, for all $t = 0,\ldots,\tm$, define
\[ \mathcal{S}_t := \inset{ \pi_t( \startI,\startL) \suchthat (\startI,\startL) \in \mathcal{S}_0 }. \]
Then, for $t \geq 1$, the size of the net $\mathcal{S}_t$ satisfies
\begin{equation}\label{eq:size}
|\mathcal{S}_t| \leq \Csize {N \choose eL/2^t}{N \choose eL/2^{t-1}},
\end{equation}
for some constant $\Csize$,
while $|\mathcal{S}_0| = {N \choose L}$.
\end{lemma}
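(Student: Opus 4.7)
My plan is to define the maps $\pi_{t+1}$ inductively by a probabilistic method. Fix $(\startI,\startL) \in \mathcal{S}_0$ and set $(\thisI,\thisL) = \pi_t(\startI,\startL)$; I will produce $(\nextI,\nextL) = \pi_{t+1}(\startI,\startL)$ by first sampling $\nextL \subseteq \thisL$ as a uniform random subset of exact size $\lfloor |\thisL|/2 \rfloor$, and then choosing $\nextI \subseteq \thisI$ deterministically as a function of $\nextL$ and $(\thisI,\thisL)$. Existence of a good realization of $\nextL$ will follow by showing that each of the relevant failure events has probability strictly less than a constant, then union-bounding.

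\textbf{Defining $\nextI$ and checking the conditions.} Let $N_j^{(c)}$ denote the number of $x \in \thisL$ with $c(x)_j = c$, and let $M_j^{(c)}$ be the analogous count in $\nextL$; each $M_j^{(c)}$ is (hyper)geometrically distributed with mean $N_j^{(c)}/2$. The central tool will be Chernoff (Theorem~\ref{thm:chernoff}): since at most $|\thisL| \le L$ distinct symbols appear in column $j$, a union bound over these symbols shows that for an adaptive threshold $\delta_j \gtrsim \sqrt{\pl_j(\thisL)\log(L)/|\thisL|}$, the event $|\pl_j(\nextL) - \pl_j(\thisL)| > \delta_j$ has probability at most $L^{-\Omega(1)}$. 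I will define $\nextI$ to be the set of $j \in \thisI$ satisfying both this typicality condition and the multiplicative growth constraint $\pl_j(\nextL) \le (1+\eta)\pl_j(\thisL)$. Then \eqref{eq:induct} on $\sum_{j \in \nextI} \pl_j(\nextL) \le Q_{t+1}$ is immediate from the growth constraint, and \eqref{eq:Lsize} holds by construction. For \eqref{eq:width} I will split $\|\pl_{\thisI}(\thisL)-\pl_{\nextI}(\nextL)\|_2^2$ into a sum over $\thisI \cap \nextI$, bounded deterministically by $\sum_{j \in \thisI} \delta_j^2 \lesssim Q_t\log(L)/|\thisL|$, and a sum over $\thisI \setminus \nextI$, which is $O(Q_t/\mathrm{poly}(L))$ in expectation using the atypicality probability together with the trivial bound $\pl_j(\thisL)^2 \le \pl_j(\thisL)$. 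The net-size bound \eqref{eq:size} then follows from the observation that $\nextI$ is determined by the pair $(\thisL,\nextL)$, each of which is a subset of $\mathcal{C}$ of size at most a constant times $L/2^{t-1}$ or $L/2^t$ respectively, combined with the standard estimate $\binom{N}{k}\le (eN/k)^k$.

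\textbf{Main obstacle.} The delicate part will be the simultaneous calibration of the typicality threshold $\delta_j$ and the growth constraint so that all three invariants \eqref{eq:Lsize}, \eqref{eq:induct}, \eqref{eq:width} hold with positive probability. The factor $1/\eta^2 = \log^2(L)$ appearing in \eqref{eq:width} provides just enough slack when $|\thisL|$ is large, but the budget is tight in both directions: $\delta_j$ must be loose enough for Chernoff to deliver atypicality probability $L^{-\Omega(1)}$, tight enough that $\sum_j \delta_j^2$ fits the width budget, and compatible with the multiplicative growth cutoff so that the set $\thisI \setminus \nextI$ does not drag extra $\pl_j(\thisL)^2$ mass into the width. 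Closely related, and perhaps the real crux, is that chaining over $\thisL$ alone is insufficient: if we simply took $\nextI = \thisI$, then the contribution of the many small-plurality $j$'s forces a per-step increase of $Q_t(1 + O(\sqrt{\log q}))$, and the invariant blows up with $q$. Pruning $\thisI$ down to $\nextI$ -- i.e.\ chaining over the index set in addition to the codeword set -- is precisely what removes the $q$-dependence and recovers the improvement over~\cite{woot2013}.
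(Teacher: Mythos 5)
Your high-level strategy---induct, subsample $\thisL$ by half, prune the index set, and establish all four invariants with positive probability---is exactly the paper's strategy, and your closing paragraph correctly identifies that pruning to $\nextI$ is the key idea that removes the $q$-dependence. However, there are two concrete gaps in how you set up the pruning, and both trace to the same source: you choose $\nextI$ \emph{after} sampling $\nextL$, as a function of $\nextL$, whereas the paper fixes $\nextI$ \emph{before} sampling $\nextL$, purely as a function of $\thisL$ (the ``heavy'' coordinates $\nextI = \{j : |\thisL|\pl_j(\thisL) \geq \gamma\}$, $\gamma \asymp \log(L)/\eta^2$).

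\emph{Gap 1: the net-size bound.} You write that the count \eqref{eq:size} follows because ``$\nextI$ is determined by the pair $(\thisL,\nextL)$.'' But your definition of $\nextI$ explicitly requires $j \in \thisI$, so $\nextI$ actually depends on $(\thisI, \thisL, \nextL)$. Since $\thisI$ is itself built recursively from the whole chain $\Lambda_0 \supseteq \cdots \supseteq \thisL$, the number of distinct $(\nextI, \nextL)$ that can arise is not controlled by $\binom{N}{eL/2^{t+1}}\binom{N}{eL/2^t}$, and the inductive count blows up. If instead you drop the requirement $j\in\thisI$ so that $\nextI$ genuinely depends only on $(\thisL,\nextL)$, then $\nextI\subseteq\thisI$ is no longer automatic---and that containment is exactly what you use to deduce $\sum_{j\in\nextI}\pl_j(\thisL)\leq Q_t$ in your proof of \eqref{eq:induct}, and to bound the cross terms in \eqref{eq:width}. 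The paper escapes this trap by making $\nextI$ a deterministic function of $\thisL$ alone; the containment $\nextI\subseteq\thisI$ then follows for free from $\thisL\subseteq\prevL$ (which makes the heaviness threshold monotone along the chain), and the count of $(\nextI,\nextL)$ is bounded by the count of $(\thisL,\nextL)$ as required.

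\emph{Gap 2: the $\thisI\setminus\nextI$ contribution is not $O(Q_t/\mathrm{poly}(L))$.} You bound $\sum_{j\in\thisI\setminus\nextI}\pl_j(\thisL)^2$ in expectation ``using the atypicality probability,'' implicitly treating $j\notin\nextI$ as a rare event for every $j$. That is false for the light coordinates. For instance, if $\pl_j(\thisL)=1/|\thisL|$, then any nonempty $\nextL$ forces $\pl_j(\nextL)\geq 1/|\nextL|\approx 2/|\thisL|$, so your growth constraint $\pl_j(\nextL)\leq(1+\eta)\pl_j(\thisL)$ fails \emph{deterministically}. Light coordinates are therefore systematically excluded, not rarely excluded. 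The correct way to bound their contribution is deterministic H\"{o}lder: all excluded light $j$ satisfy $\pl_j(\thisL) < \gamma/|\thisL|$, so $\sum_j \pl_j(\thisL)^2 \leq (\gamma/|\thisL|)\sum_j\pl_j(\thisL) \leq \gamma Q_t/|\thisL|$, which is $\Theta(Q_t\log^3(L)/|\thisL|)$, not $O(Q_t/\mathrm{poly}(L))$. This does fit inside the budget $\Clem^2 Q_t\log(L)/(\eta^2|\thisL|)$---indeed it is what dictates the $1/\eta$ in \eqref{eq:width}---but the mechanism is H\"{o}lder plus the $\ell_\infty$ threshold, not Chernoff rarity. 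Both gaps close once you adopt the paper's heaviness-based definition of $\nextI$: the index-pruning is then fixed before randomization, the count works, the containment is automatic, and the width splits cleanly into a Markov-controlled piece over $\nextI$ (via Lemma~\ref{lem:concentration}) and an H\"{o}lder-controlled piece over $\thisI\setminus\nextI$.

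Two smaller points: sampling $\nextL$ of exact size $\lfloor|\thisL|/2\rfloor$ is fine and in fact simplifies \eqref{eq:Lsize}, though you would want to recheck that a (hyper)geometric version of Lemma~\ref{lem:concentration} still holds; and the paper establishes the growth constraint at the level of the full sum $\sum_{j\in\nextI}\pl_j(\cdot)$ via Markov on Lemma~\ref{lem:concentration} rather than per-coordinate, which is what allows a single constant-probability good event.
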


\subsection{Proof of Theorem \ref{thm:gaussian} from Lemma \ref{lem:chaining}: a chaining argument}\label{sec:prooffromlemma}
Before we prove Lemma \ref{lem:chaining}, we will show how to use it to prove Theorem \ref{thm:gaussian}.
This part of the proof follows the standard proof of Dudley's theorem~\cite{lt}, and can be skipped by the reader already familiar with it.\footnote{Assuming that the reader is willing to take our word on the calculations.}
As outlined above, we will use a chaining argument to control the Gaussian process in Theorem \ref{thm:gaussian}.
We wish to control
\[ \EE \max_{(I,\Lambda) \in \mathcal{S}_0} \inabs{ X(I,\Lambda) }. \]
For any $(\startI,\startL) \in \mathcal{S}_0$,  write
\begin{align}
       \inabs{ X(\startI,\startL) } &\leq  \inparen{ \sum_{t=0}^{t_{\max}-1} \left| X( \pi_{t}(\startI,\startL) ) - X(\pi_{t+1}(\startI, \startL) ) \right| } + \inabs{ X( \pi_{t_{\max}} (\startI,\startL) )}\notag\\
 &=: S(\startI,\startL) + \inabs{ X(\pi_{\tm}(\startI,\startL) ) },\label{eq:chain}
\end{align}
where Lemma \ref{lem:chaining} tells us how to pick $(\thisI, \thisL) := \pi_t(\startI,\startL)$, and where we have used the fact that $\pi_0(\startI,\startL) = (\startI,\startL)$.

Each increment
\[ X(\pi_t(\startI,\startL)) - X(\pi_{t+1}(\startI,\startL)) = \sum_{j=1}^n g_j \left[ \ind{j \in \thisI} \pl_j( \thisL) - \ind{j \in \nextI} \pl_j(\nextL) \right] \]
is a Gaussian random variable (see Fact \ref{fact:2stab} in Appendix \ref{app:gauss}) with variance
\begin{align*}
 \sum_{j=1}^n \inparen{ \ind{j \in \thisI} \pl_j( \thisL) -  \ind{j \in \nextI } \pl_j( \nextL)  }^2 
&= \norm{ \pl_{\thisI}(\thisL) - \pl_{\nextI}(\nextL) }_2^2 \\
&\leq \frac{ \Clem^2  Q_t \log(L) }{\eta^2 |\thisL| } \qquad \text{ by \eqref{eq:width} }\\
&\leq \frac{ \Clem^2 Q_t \log(L) }{\eta^2 \inparen{ \frac{1 - \eta}{2} }^{t} L } \qquad \text{ by \eqref{eq:Lsize}}\\
&\leq \frac{ \Clem^2 (1 + \eta)^t Q \log(L) }{\eta^2 \inparen{ \frac{1 - \eta}{2} }^{t} L } \qquad \text{ by \eqref{eq:induct}}\\
&\leq \inparen{\frac{ \Clem}{\eta} }^2 \inparen{ \frac{ Q \log(L) (2(1+2\eta))^{t} }{L} } \qquad \text{ using $\eta \le 1/2$. }\\
&\leq \inparen{\frac{ e\Clem}{\eta} }^2 \inparen{ \frac{ Q \log(L) 2^{t} }{L} } \qquad \text{ using $\eta = 1/\log(L)$ and $\tm \leq \log(L)$. }
\end{align*}
Thus, for each $0 \leq t < \tm$, and for any $u,a_t \geq 0$,
\begin{align}
	\PR{ | X(\pi_t(z,\Lambda)) - X(\pi_{t+1}(z,\Lambda)) | > u \cdot a_t } &\leq \exp\inparen{ \frac{ -u^2\cdot a_t^2 }{ 2\sum_{j=1}^n \inparen{\ind{j \in \thisI} \pl_j(\thisL) -\ind{j \in \nextI} \pl_j(\nextL)}^2  }}\notag\\
	&\leq \exp\inparen{ \frac{-u^2 \cdot a_t^2 }{ 2\inparen{ \frac{e\Clem}{\eta} }^2 \inparen{ \frac{  Q\log(L) 2^{t}}{L}}  }}\notag\\
	&=: \exp\inparen{ \frac{-u^2 \cdot a_t^2 }{ \delta_t^2 } }.\label{eq:probbound}
\end{align}
In the above, we useed the fact that for a Gaussian variable $g$ with variance $\sigma$, $\PR{|g|>u} \le \exp(-u^2/(2\sigma^2))$.
Now we union bound over all possible ``chains" (that is, sequences $\inset{ \pi_t(\startI,\startL) }_t$) to bound the probability that there exists a $(\startI,\startL) \in \mathcal{S}_0$ 
so that the first term $S(\startI,\startL)$ in \eqref{eq:chain} is large.  Consider the event that for all $(\startI, \startL) \in \mathcal{S}_0$,
\[  | X(\pi_t(\startI,\startL)) - X(\pi_{t+1}(\startI,\startL)) | \leq u\cdot a_t, \]
for $a_t$ to be determined shortly.  In the favorable case that this event occurs, the first term in \eqref{eq:chain} is bounded by
\[  S(\startI,\startL) = \sum_{t=0}^{t_{\max}-1} \left| X( \pi_{t}(\startI,\startL) ) - X(\pi_{t+1}(\startI,\startL) ) \right| \leq u \cdot \sum_{t=0}^{\tm - 1} a_t,\]
for all $(\startI,\startL)$.  
Let 
\begin{equation}\label{eq:defNt}
 N_t = \begin{cases} \Csize {N \choose eL/2^t}{N \choose eL/2^{t-1}} & t \geq 1 \\ {N \choose L} &t = 0 \end{cases}
\end{equation}
be our bound on $|\mathcal{S}_t|$, given by \eqref{eq:size} in Lemma \ref{lem:chaining}.
Then probability that the above good event fails to occur is at most, by the union bound,
\[ \PR{ \max_{(\startI,\startL) \in \mathcal{S}_0} S(\startI,\startL) > u \cdot \sum_{t=0}^{\tm - 1} a_t } \leq  \sum_{t=0}^{\tm-1} N_t N_{t+1} \exp\inparen{ \frac{-u^2 \cdot a_t^2}{\delta_t^2}}.\]
Indeed, there are at most $N_tN_{t+1}$ possible ``steps" between $\pi_t(\startI,\startL)$ and $\pi_{t+1}(\startI,\startL)$, and the probability that any step at level $t$ fails is given by \eqref{eq:probbound}.

Choose
\begin{equation}\label{eq:defat}
 a_t =  \sqrt{2\ln\inparen{N_tN_{t+1}}}\, \delta_t. 
\end{equation}
This choice will imply that
\begin{equation}\label{eq:expgauss}
 \EE \max_{(\startI,\startL) \in \mathcal{S}_0} S(\startI,\startL) \leq 2 \sum_{t=1}^{\tm - 1} a_t.
\end{equation}
For the reader's convenience, a brief (standard) proof of \eqref{eq:expgauss} is included in Appendix \ref{app:justexp}.
Plugging in our definition \eqref{eq:defat} of $a_t$ and then of $\delta_t$ and $N_t$ (Equations \eqref{eq:probbound} and \eqref{eq:defNt}, respectively),
\begin{align}
	 \EE \max_{(z, \Lambda) \in \mathcal{S}_0} S(\startI,\startL) &\leq 
		2\sum_{t=0}^{\tm-1}  \sqrt{ 2\ln\inparen{N_tN_{t+1}} }\,\delta_t\notag\\
		&\lesssim \sum_{t=0}^{\tm-1} \sqrt{ \frac{L}{2^t} \log(N ) } \inparen{ \frac{1}{\eta}\sqrt{ \frac{Q \log(L) 2^t}{L}}  } \notag \\
		&= \tm \inparen{\frac{\sqrt{ Q\log(N)\log(L) }}{\eta}} \notag \\
		&\leq \log^2(L) \sqrt{ Q\log(N)\log(L) },	
	 \label{eq:favorable}
\end{align}
after using the choice of $\eta = 1/\log(L)$ and $\tm \leq \log(L)$ in the final line.

With the first term $S(\startI,\startL)$ of \eqref{eq:chain} under control by \eqref{eq:favorable}, we turn to the second term, and we now
bound the probability that the final term $ X(\pi_{t_{\max}} (z,\Lambda))$ is large.
\newcommand{\maxI}{I_{\max}}
\newcommand{\maxL}{\Lambda_{\max}}
Let $(\maxI, \maxL) = \pi_{\tm}(\startI,\startL)$, so we wish to bound the Gaussian random variable
\[ X(\pi_{\tm}(\startI,\startL)) = \sum_{j \in \maxI} g_j  \pl_j (\maxL). \]
As with the increments in $S(\startI,\startL)$, we will first bound the variance of $X(\pi_{\tm}(\startI,\startL))$.
By \eqref{eq:induct}, we know that
\[ \sum_{j \in \maxI} \pl_j(\maxL) \leq Q_{\tm} \leq eQ.\]
Further, since $\pl_j(\maxL)$ is a fraction, we always have
\[ \pl_j(\maxL) \leq 1. \]
By H\"{o}lder's inequality,
\[ \sum_{j \in \maxI} \pl_j( \maxL )^2 \leq \inparen{ \sum_{j \in \maxI} \pl_j(\maxL) } \inparen{ \max_{j \in \maxI} \pl_j(\maxL) }
\leq eQ. \] 
Thus, for each $(\startI,\startL) \in \mathcal{S}_0$, $X(\pi_{\tm}(\startI,\startL))$ is a Gaussian random variable with variance at most $eQ$ (using Fact \ref{fact:2stab} in Appendix \ref{app:gauss}). 
We recall the choice from \eqref{def:tm} of
\begin{equation}\label{eq:boundtm}
 \tm = \frac{\log(L) - 2\log(1/\eta) - 2}{1 + \log(1/(1 - \eta))} \geq \log(L) - 2\log\log(L) - \Ctm,
\end{equation}
for some constant $\Ctm$, for sufficiently large $L$.
Because there are $|\mathcal{S}_{\tm}| \leq {N \choose eL/2^{\tm}}$ of these, a standard estimate for the maximum of Gaussians (see Proposition \ref{prop:maxofgauss} in Appendix \ref{app:gauss}) gives
\begin{align*}
 \EE  \max_{(\startI,\startL) \in \mathcal{S}_0} |X(\pi_{\tm}(\startI,\startL))| & \lesssim \sqrt{ \ln|\mathcal{S}_{\tm}|} \cdot \sqrt{Q} \\
&\lesssim  \sqrt{ \frac{ LQ \log(N) }{2^{\tm} } } \\
&\lesssim \log(L) \sqrt{ Q \log(N) },
\end{align*}
using the choice of $\tm$ (and the bound on it in \eqref{eq:boundtm}) in the final line.
Finally, putting together the two parts of \eqref{eq:chain}, we have
\begin{equation}\label{eq:gauss}
\EE \max_{(\startI,\startL) \in \mathcal{S}_0} X(\startI,\startL) 
\lesssim
\log^2(L) \sqrt{ Q\log(N) \log(L) } + \log(L) \sqrt{Q \log(N)}
\lesssim \log^2(L) \sqrt{Q\log(N) \log(L) }.
\end{equation}
This completes the proof of Theorem \ref{thm:gaussian} (assuming Lemma \ref{lem:chaining}).

\subsection{Proof of Lemma \ref{lem:chaining}: the desired nets exist}\label{sec:chaininglemma}
Finally, we prove Lemma \ref{lem:chaining}.
We proceed inductively.  In addition to the conclusions of the lemma, we will maintain the inductive hypotheses
\begin{equation}\label{eq:contained}
\nextI \subseteq \thisI \qquad \text{and} \qquad \nextL \subseteq \thisL
\end{equation}
for all $t$.

For the base case, $t = 0$, we set $\pi_0(\startI, \startL) = (\startI, \startL)$.
The definition of $Q$ guarantees \eqref{eq:induct}, and the definition of $\mathcal{S}_0$ guarantees \eqref{eq:Lsize}.  By definition $|\mathcal{S}_0| \leq {N \choose L}$.
Further, since by definition $I_0 = [n]$, the first part of \eqref{eq:contained} is automatically satisfied.  (We are not yet in a position to verify the base case for the second part of \eqref{eq:contained}, having not yet defined $\Lambda_1$, but we will do so shortly).

We will need to keep track of how the pluralities $\pl_j(\thisL)$ change, and for this we need the following notation.
\begin{notation}
For $\alpha \in \F_q$ and $\Lambda \subset \mathcal{C}$, let
\[ v_j(\alpha, \Lambda) = \frac{\inabs{\inset{ c \in \Lambda \suchthat c_j = \alpha }}}{|\Lambda|}\]
be the fraction of times the symbol $\alpha$ appears in the $j$'th symbol in $\Lambda$.
\end{notation}

Now we define $\mathcal{S}_t$ for $t \geq 1$.
Suppose we are given $(\thisI,\thisL) = \pi_t(\startI,\startL) \in \mathcal{S}_t$ satisfying the hypotheses of the lemma.  We need to produce $(\nextI,\nextL) \in \mathcal{S}_{t+1}$, and we will use the probabilistic method.
We will choose $\nextI$ deterministically based on $\thisL$.  Then we will choose $\nextL$ randomly, based on $\thisL$, and show that with positive probability, $(\nextI, \nextL)$ obey the desired conclusions.  Then we will fix a favorable draw of $(\nextI, \nextL)$ and call it $\pi_{t+1}(\startI,\startL)$.

We choose $\nextI$ to be the ``heavy" coordinates,
\[ \nextI := \inset{ j \suchthat |\thisL| \pl_j( \thisL) \geq \gamma}, \]
for
\begin{equation}\label{eq:gamma}
 \gamma := \frac{ 4c_1 \log(L) }{(1 - \eta)^2\eta^2}, 
\end{equation}
where $c_1$ is a suitably large constant to be fixed later. 
Notice that $\nextI$ depends only on $\thisL$ (and on $\cC$, which for the moment is fixed).

Now consider drawing $\nextL \subset \thisL$ at random by including each element of $\thisL$ in $\nextL$ independently with probability $1/2$. 
We will choose some $\nextL$ from the support of this distribution.

Before we fix $\nextL$, observe that we are already in a position to establish \eqref{eq:contained}. 
Indeed, the  second part of \eqref{eq:contained} holds for all $t$, because $\nextL \subseteq \thisL$ by construction.
To establish the first part of \eqref{eq:contained} for $t, t+1$, we use that $\thisL \subseteq \prevL$ (by induction, using \eqref{eq:contained} for $t-1,t$),  and this implies that
for all $j \in \nextI$, 
\begin{align*}
\gamma &\leq |\thisL| \pl_j(\thisL) \\
&= \max_\alpha \inabs{ \inset{ c \in \thisL \suchthat c_j = \alpha } } \\
&\leq \max_\alpha \inabs{\inset{ c \in \prevL \suchthat c_j = \alpha } } \\
&=  |\prevL| \pl_j(\prevL), 
\end{align*}
and hence $j \in \thisI$.  Thus,
\begin{equation}\label{eq:thiscontained}
\nextI \subseteq \thisI.
\end{equation}

Before we move on to the other inductive hypotheses, stated in Lemma \ref{lem:chaining},
we must fix a ``favorable" draw of $\nextL$.
In expectation, $\nextL$ behaves like $\thisL$, and so the hope is that the ``step"
\[ \pl_{\thisI}(\thisL) - \pl_{\nextI}(\nextL) \]
is small.  We quantify this in the following lemma.

\begin{lemma}\label{lem:concentration}
For all $j$, 
\[ \EE\left[ |\nextL| |\pl_j(\thisL) - \pl_j(\nextL)| \right]\leq \sqrt{\Cprop |\thisL| \log(L) \pl_j(\thisL)} \]
and
\[ \EE\left[ |\nextL|^2( \pl_j(\thisL) - \pl_j(\nextL) )^2\right] \leq \Cprop |\thisL| \log(L) \pl_j(\thisL) \]
for some constant $\Cprop$.
\end{lemma}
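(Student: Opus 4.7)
The plan is to reduce the quantity to controlling the deviation of a maximum of \emph{independent} binomial random variables. Fix $j$ throughout. For each $\alpha \in \F_q$ let $m_\alpha = |\{c \in \thisL : c_j = \alpha\}|$, let $\alpha^*$ be a symbol achieving $m := \max_\alpha m_\alpha = |\thisL|\pl_j(\thisL)$, and set $X_\alpha = |\{c \in \nextL : c_j = \alpha\}|$ and $Y = |\nextL|$. Since $\nextL$ is obtained by keeping each element of $\thisL$ independently with probability $1/2$, each $X_\alpha \sim \mathrm{Bin}(m_\alpha, 1/2)$, and crucially the family $(X_\alpha)_\alpha$ is \emph{mutually independent} because the fibers $\{c \in \thisL : c_j = \alpha\}$ partition $\thisL$. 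The quantity of interest is
\[
 |\nextL|\,|\pl_j(\thisL) - \pl_j(\nextL)| \;=\; \Bigl|\,\tfrac{Ym}{|\thisL|} - \max_\alpha X_\alpha\,\Bigr|,
\]
which by the triangle inequality is bounded by $\bigl|\tfrac{Ym}{|\thisL|} - \tfrac{m}{2}\bigr| + \bigl|\max_\alpha X_\alpha - \tfrac{m}{2}\bigr|$.

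The first term is easy: $Y$ is Binomial with variance $|\thisL|/4$, so $\EE\bigl(\tfrac{Ym}{|\thisL|} - \tfrac{m}{2}\bigr)^2 = m^2/(4|\thisL|) \le m/4$ and $\EE\bigl|\tfrac{Ym}{|\thisL|} - \tfrac{m}{2}\bigr| \le m/(2\sqrt{|\thisL|}) \le \sqrt{m}/2$ (using $m \le |\thisL|$), both comfortably dominated by the target $\sqrt{m \log L}$ and $m \log L$.

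The main content is the second term. The key pointwise inequality is
\[
 \Bigl|\max_\alpha X_\alpha - \tfrac{m}{2}\Bigr| \;\leq\; \max_\alpha \Bigl|X_\alpha - \tfrac{m_\alpha}{2}\Bigr|,
\]
whose upper direction follows from $X_\alpha \leq m_\alpha/2 + |X_\alpha - m_\alpha/2| \leq m/2 + \max_\beta |X_\beta - m_\beta/2|$ for every $\alpha$, and whose lower direction uses $\max_\alpha X_\alpha \geq X_{\alpha^*} \geq m/2 - |X_{\alpha^*} - m_{\alpha^*}/2|$. This replaces a max of dependent quantities by a max of independent subgaussian quantities. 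Hoeffding (Theorem \ref{thm:chernoff}) applied to $X_\alpha$ as a sum of $m_\alpha$ independent Bernoulli$(1/2)$'s gives $\PR{|X_\alpha - m_\alpha/2| > t} \leq 2\exp(-2t^2/m_\alpha) \leq 2\exp(-2t^2/m)$. At most $|\thisL| \leq L$ symbols have $m_\alpha \geq 1$, so a union bound yields $\PR{\max_\alpha |X_\alpha - m_\alpha/2| > t} \leq 2L\exp(-2t^2/m)$. Integrating tails (splitting at $t = \sqrt{m \log L}$) gives the standard subgaussian-max estimates
\[
 \EE \max_\alpha \Bigl|X_\alpha - \tfrac{m_\alpha}{2}\Bigr| \lesssim \sqrt{m \log L}, \qquad \EE \max_\alpha \Bigl(X_\alpha - \tfrac{m_\alpha}{2}\Bigr)^2 \lesssim m \log L.
\]

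Finally, combine the two terms via the triangle inequality (and $(a+b)^2 \le 2a^2 + 2b^2$ for the second moment), and recall that $m = |\thisL| \pl_j(\thisL)$, to obtain both claimed bounds. The only genuine obstacle is handling the fact that $\max_\alpha X_\alpha$ is a max of dependent objects (the plurality symbol in $\nextL$ need not equal $\alpha^*$); the reduction to $\max_\alpha |X_\alpha - m_\alpha/2|$ sidesteps this by centering each $X_\alpha$ at its own mean, after which the disjoint-support independence of the $X_\alpha$ makes everything a routine Chernoff-and-union-bound calculation.
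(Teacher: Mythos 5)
Your proof is correct and follows essentially the same strategy as the paper: reduce to a Chernoff bound plus a union bound over the at most $|T_j|\le L$ symbols that actually occur in coordinate $j$, with a deterministic sandwiching lemma to handle the fact that the plurality symbol in $\nextL$ may differ from that in $\thisL$. The one place you diverge is in how you organize the sandwiching and the centering. The paper works with $Y_j(\alpha) = |\nextL|\bigl(v_j(\alpha,\nextL) - v_j(\alpha,\thisL)\bigr) = X_\alpha - Y m_\alpha/|\thisL|$, which mixes the fluctuation of $X_\alpha$ and the fluctuation of $Y=|\nextL|$; it then proves the ``four numbers'' inequality $|B-D|\le\max\{|B-C|,|D-A|\}$ to get $|\nextL|\,|\pl_j(\nextL)-\pl_j(\thisL)|\le\max_\alpha|Y_j(\alpha)|$, and handles both fluctuations inside the single Chernoff/union bound via the split $Y_j(\alpha)=Z_j(\alpha)+W_j(\alpha)$. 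You instead center at the deterministic point $m_\alpha/2$, which lets you pull the $Y$-fluctuation out of the max as a separate additive term (handled trivially since $\mathrm{Var}(Y)=|\thisL|/4$) and apply your inequality $|\max_\alpha X_\alpha - m/2|\le\max_\alpha|X_\alpha - m_\alpha/2|$ only to the genuinely independent $X_\alpha$. The two sandwiching lemmas are logically equivalent in this context and both give the same bound; your version has the small advantage that the union bound sees only the $X_\alpha$-deviations and not the shared $Y$-deviation, at the cost of an extra triangle-inequality term. Either way the calculation lands on $\EE\,|\nextL|^2(\pl_j(\thisL)-\pl_j(\nextL))^2\lesssim m\log L = |\thisL|\pl_j(\thisL)\log L$, so the lemma follows.
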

\begin{proof}
The second statement implies the first, by Jensen's inequality, so we prove only the second statement.  
For each $\alpha \in \F_q$, and each $j \in [n]$, consider the random variable
\begin{align*}
Y_j(\alpha) &:= |\nextL|\inparen{ v_j(\alpha, \nextL) - v_j(\alpha,\thisL)} \\
&= \sum_{c \in \thisL : c_j = \alpha} \inparen{ \xi_c - \frac{ |\nextL|}{|\thisL|} } \\
&= \sum_{c \in \thisL : c_j = \alpha} \inparen{\xi_c - \hf} + \sum_{c \in \thisL : c_j = \alpha} \inparen{ \hf  - \frac{ |\nextL|}{|\thisL|}} \\
&= \sum_{c \in \thisL : c_j = \alpha} \inparen{\xi_c - \hf} + v_j(\alpha, \thisL)  \sum_{c \in \thisL} \inparen{ \hf - \xi_c } \\
&=: Z_j(\alpha) + W_j(\alpha),
\end{align*}
where above $\xi_c$ is $1$ if $c \in \nextL$ and $0$ otherwise.
Both $Z_j(\alpha)$ and $W_j(\alpha)$ are sums of independent mean-zero random variables, and we use Chernoff bounds to control them.
First, $Z_j(\alpha)$ is a sum of $|\thisL| v_j(\alpha,\thisL)$ independent mean-zero random variables, and a Chernoff bound (Theorem \ref{thm:chernoff}) yields
\[ \PR{ |Z_j(\alpha)| > u } \leq 2\exp\inparen{ \frac{-2 u^2}{|\thisL| v_j(\alpha,\thisL)  }}
\leq 2 \exp \inparen{ \frac{ -2u^2}{|\thisL| \pl_j(\thisL)} }
.\]
Similarly, $W_j(\alpha)$ is a sum of $|\thisL|$ independent mean-zero random variables, each contained in 
\[ \left[-\frac{v_j(\alpha, \thisL)}{2} , \frac{v_j(\alpha,\thisL)}{2} \right] \subseteq \left[ -\frac{ \pl_j(\thisL)}{2}, \frac{ \pl_j(\thisL)}{2} \right],\] and we have
\[ \PR{ |W_j(\alpha)| > u } \leq 2\exp \inparen{ \frac{ -2u^2 }{ |\thisL| \pl_j(\thisL)^2 } } \leq 2\exp \inparen{ \frac{-2u^2 }{|\thisL|\pl_j(\thisL)}},\]
using the fact that $\pl_j(\thisL) \leq 1$. 
Together,
\[ \PR{ |Y_j(\alpha)| > u} \leq \PR{ |W_j(\alpha)| > u/2 } + \PR{ |Z_j(\alpha)| > u/2 } \leq 4\exp\inparen{ \frac{ -u^2 }{ 2\pl_j(\thisL)|\thisL| } }, \]
%\mkw{This is not this nicest way to bound $Y_j(\alpha)$.  Is there something cleaner I can do?}
Let 
\[ T_j = \inset{ \alpha \in \F_q \suchthat \exists c \in \thisL, c_j = \alpha } \]
be the set of symbols that show up in the $j$'th coordinates of $\thisL$.  Then
\[ |T_j| \leq \min\{q, |\thisL|\} \leq L. \]
By the union bound, and letting $v = u^2$,
\begin{equation}\label{eq:orig_chernoff}
\PR{ \max_{ \alpha \in \F_q } Y_j(\alpha)^2 > v } = \PR{ \max_{ \alpha \in T_j } Y_j(\alpha)^2 > v } \leq 4L \exp\inparen{ \frac{-v }{2 \pl_j(\thisL) |\thisL| }}.
\end{equation}
Next, we show that if all of the $Y_j(\alpha)$ are under control, then so are the pluralities $\pl_j(\thisL)$.
For any four numbers $A,B,C,D$ with $A \leq B$ and $C \leq D$, we have
\begin{equation}\label{eq:erg}
 |B - D| \leq \max\inset{ |B - C|, |D - A| }. 
\end{equation}
Indeed, we have
\[ B - D \leq (B - D) + (D - C) = B - C \qquad \text{and} \qquad  D - B \leq (D - B) + (B - A) = D - A. \]
The claim \eqref{eq:erg} follows.
Now, for fixed $j$, let 
\[\alpha =  \argmax_{\sigma \in T_j} v_j( \sigma, \thisL) \qquad \text{and} \qquad \beta = \argmax_{ \sigma \in T_j} v_j(\sigma,\nextL),\]
so that
\[ |\nextL| v_j(\alpha, \nextL) \leq |\nextL| v_j(\beta,\nextL) \qquad \text{and} \qquad |\nextL| v_j(\beta, \thisL) \leq |\nextL| v_j(\alpha, \thisL).\]
By \eqref{eq:erg}, we have
\begin{align*}
|\nextL| | \pl_j(\nextL) - \pl_j(\thisL) | &= |\nextL| | v_j(\beta,\nextL) - v_j(\alpha,\thisL) | \\
&\leq |\nextL| \max\inset{ |v_j(\alpha, \thisL) - v_j(\alpha, \nextL)|, |v_j(\beta, \thisL) - v_j(\beta,\nextL) |}\\
&\leq \max_{\alpha \in T_j} |Y_j(\alpha)|.
\end{align*}
Thus, the probability that $|\pl_j(\nextL) - \pl_j(\thisL)|$ is large is no more than the probability that $\max_{\alpha \in T_j} |Y_j(\alpha)|$ is large, 
and we conclude from \eqref{eq:orig_chernoff} that
\[ \PR{ |\nextL|^2 (\pl_j(\thisL) - \pl_j(\nextL))^2 > v } \leq 4L \exp\inparen{ \frac{-v }{2 \pl_j(\thisL) |\thisL|} }.\]
Integrating, we bound the expectation by
\begin{align*}
	\EE  |\nextL|^2 (\pl_j(\thisL) - \pl_j(\nextL))^2 &= \int_0^\infty \PR{ \max_{\alpha \in T_j} Y_j(\alpha)^2 > v}\,dv\\
	&\leq A + 4L \int_A^\infty \exp \inparen{ \frac{-v}{2 \pl_j(\thisL)|\thisL| }} \,dv\\
	&= A + 4L \cdot 2\pl_j(\thisL) |\thisL| \cdot \exp\inparen{ \frac{-A }{2 \pl_j(\thisL) |\thisL|}}
\end{align*}
for any $A \geq 0$.  Choosing $A = 2\pl_j(\thisL)|\thisL| \ln(4L)$ gives
\[ \EE  |\nextL|^2 (\pl_j(\thisL) - \pl_j(\nextL))^2 \leq 2|\thisL|\pl_j(\thisL)\inparen{\ln(4L) + 1}.\]
Setting $\Cprop$ correctly proves the second item in Lemma \ref{lem:concentration}, and the first follows from Jensen's inequality.
\end{proof}

The next lemma uses Lemma \ref{lem:concentration} to argue that a number of good things happen all at once.
\begin{lemma}\label{lem:almostdone}
There is some $\nextL \subseteq \thisL$ so that:
\begin{enumerate}
	\item \label{item:sizeL}
\[ \inparen{ \frac{1 - \eta}{2} }^{t+1} L \leq \inparen{ \frac{1 - \eta}{2} }|\thisL| \leq |\nextL| \leq \inparen{ \frac{1 + \eta}{2} }|\thisL| \leq \inparen{ \frac{1 + \eta}{2}}^{t+1} L. \]
	\item \label{item:Q}
	\[ \sum_{j \in \nextI} \pl_j(\nextL) \leq \sum_{j \in \nextI} \pl_j(\thisL) + \sum_{j\in \nextI} \sqrt{  \frac{c_1 |\thisL| \log(L) \pl_j(\thisL)}{|\nextL|^2}}  \]
	\item \label{item:mainterm} 
	\[ \inparen{ \sum_{j \in \nextI} (\pl_j(\nextL) - \pl_j(\thisL))^2 }^{1/2} \leq \frac{\sqrt{c_1 |\thisL| \log(L) Q_t}}{|\nextL|}\]
%	\item \label{item:jumpers}
%	\[ \inparen{ \sum_{j \in \thisI\setminus \nextI} \pl_j(\nextL)^2 }^{1/2} \leq \inparen{ \sum_{j \in \thisI \setminus \nextI} \pl_j(\thisL)^2}^{1/2} +  \frac{\sqrt{ c_1|\thisL|\log(q)Q_t}}{|\nextL|} \]
\end{enumerate}
for some constant $c_1$.
\end{lemma}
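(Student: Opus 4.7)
The plan is to establish all three items simultaneously by the probabilistic method: draw $\nextL$ by including each codeword $c \in \thisL$ independently with probability $1/2$, and show that each of the three conditions fails with probability at most some small constant (say $1/4$). A union bound then guarantees a positive probability that all three hold simultaneously, and we fix any such $\nextL$ as the desired set.

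For item \ref{item:sizeL}, we have $\EE|\nextL| = |\thisL|/2$ since each codeword is included independently with probability $1/2$. A standard multiplicative Chernoff bound (Theorem \ref{thm:chernoff}) yields
\[
\PR{\bigl||\nextL| - |\thisL|/2\bigr| > \eta|\thisL|/2} \leq 2\exp\!\left(-\Omega(\eta^2 |\thisL|)\right).
\]
By the inductive bound \eqref{eq:Lsize}, together with the choice of $\tm$ in \eqref{def:tm}, we have $|\thisL| \geq ((1-\eta)/2)^{t} L \gtrsim 1/\eta^2$ for all $t \leq \tm - 1$, which (for $L$ sufficiently large, so that $1/\eta^2 = \log^2(L)$ dominates the required $\log L$ factor) makes the above probability at most $1/4$. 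The chain of inequalities in item \ref{item:sizeL} then follows by combining with the inductive bound on $|\thisL|$.

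For item \ref{item:mainterm}, I will apply Lemma \ref{lem:concentration} termwise and sum. The second bound in that lemma gives
\[
\EE \sum_{j \in \nextI} |\nextL|^2 (\pl_j(\thisL) - \pl_j(\nextL))^2 \leq \Cprop |\thisL| \log(L) \sum_{j \in \nextI} \pl_j(\thisL).
\]
Since $\nextI \subseteq \thisI$ (established in \eqref{eq:thiscontained}) and $\sum_{j \in \thisI} \pl_j(\thisL) \leq Q_t$ by the inductive hypothesis \eqref{eq:induct}, the right-hand side is at most $\Cprop |\thisL| \log(L) Q_t$. A Markov inequality then ensures that with probability at least $3/4$,
\[
\sum_{j \in \nextI} |\nextL|^2 (\pl_j(\thisL) - \pl_j(\nextL))^2 \leq 4\Cprop |\thisL| \log(L) Q_t,
\]
and taking square roots yields item \ref{item:mainterm} with $c_1 = 4\Cprop$ (or similar).

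For item \ref{item:Q}, I will apply the first bound in Lemma \ref{lem:concentration}: $\EE[|\nextL| |\pl_j(\thisL) - \pl_j(\nextL)|] \leq \sqrt{\Cprop |\thisL| \log(L) \pl_j(\thisL)}$. Summing over $j \in \nextI$ and applying Markov's inequality gives, with probability at least $3/4$,
\[
\sum_{j \in \nextI} |\pl_j(\thisL) - \pl_j(\nextL)| \leq \frac{4}{|\nextL|}\sum_{j \in \nextI} \sqrt{\Cprop |\thisL| \log(L) \pl_j(\thisL)}.
\]
Combined with the trivial inequality $\pl_j(\nextL) \leq \pl_j(\thisL) + |\pl_j(\nextL) - \pl_j(\thisL)|$, this delivers item \ref{item:Q} after absorbing constants into $c_1$. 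Taking a union bound over the three failure events (each at most $1/4$) leaves positive probability of joint success, proving existence of the desired $\nextL$. The main technical annoyance will be verifying that $|\thisL|$ is large enough throughout the inductive range $t = 0, \ldots, \tm-1$ for the Chernoff bound in item \ref{item:sizeL} to give a small enough failure probability; this is exactly why $\tm$ in \eqref{def:tm} is chosen to cut off once $|\thisL|$ shrinks to $\Theta(1/\eta^2) = \Theta(\log^2 L)$.
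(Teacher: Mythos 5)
Your proposal is correct and follows essentially the same approach as the paper's proof: use the probabilistic method (random $\nextL \subseteq \thisL$ with each element kept independently with probability $1/2$), apply the Chernoff bound (Theorem~\ref{thm:chernoff}) together with the inductive lower bound $|\thisL| \geq 4/\eta^2$ for Item~\ref{item:sizeL}, apply Lemma~\ref{lem:concentration} plus linearity of expectation plus Markov's inequality for Items~\ref{item:Q} and~\ref{item:mainterm}, and conclude by union bound. The only cosmetic differences are that the paper applies Jensen's inequality to bound $\EE(\sum_j \cdot)^{1/2}$ before Markov in Item~\ref{item:mainterm} while you apply Markov to the unsquared sum and take roots afterward, and the paper obtains failure probabilities of $1/3, 1/4, 1/4$ (summing to $< 1$) while you target $1/4$ for all three; both routes yield the same conclusion with the same constant constraint $c_1 \gtrsim \Cprop$.
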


%\ar{I think the 4th item is not needed. See next page for more.}
\begin{proof}
We show that (for an appropriate choice of $c_1$), 
each of these items occurs with probability at least $2/3$, $3/4$, and $3/4$, respectively.
Thus, all three occur with probability at least $1/6$, and in particular there is a set $\nextL$ which satisfies all three.

First, we address Item \ref{item:sizeL}.  By a Chernoff bound,
\[ \PR{ \inabs{|\nextL| - \frac{1}{2}|\thisL|} > u } \leq 2\exp \inparen{ - 2u^2 /|\thisL| }, \]
By the inductive hypothesis \eqref{eq:Lsize}, 
\[ |\thisL| \geq \inparen{ \frac{ 1 - \eta}{2}}^t L, \]
and so by our choice of $\tm$ and the fact that $t \leq \tm$, we have
\begin{equation}\label{eq:Lbd}
 |\thisL| \geq 4/\eta^2. 
\end{equation}
Thus, 
\[ \PR{ \inabs{ |\nextL| - \frac{|\thisL|}{2} } \geq \frac{ \eta |\thisL|}{2}  } \leq 2e^{-2} < 1/3. \]
Again by the inductive hypothesis \eqref{eq:Lsize} applied to $|\thisL|$, 
we conclude that
\[ \inparen{ \frac{1 - \eta}{2} }^{t+1} L \leq \inparen{ \frac{1 - \eta}{2} }|\thisL| \leq |\nextL| \leq \inparen{ \frac{1 + \eta}{2} }|\thisL| \leq \inparen{ \frac{1 + \eta}{2}}^{t+1} L. \]

For Item \ref{item:Q}, we invoke
Lemma \ref{lem:concentration} and linearity of expectation to obtain
\[ \EE \sum_{j \in \nextI} |\nextL| | \pl_j(\thisL) - \pl_j(\nextL)| \leq \sum_{j \in \nextI} \sqrt{ \Cprop\log(L)\pl_j(\thisL)|\thisL|}.\]
By Markov's inequality, as long as $c_1 \geq 16\Cprop$, with probability at least $3/4$, 
\[ \sum_{j \in \nextI} |\nextL| | \pl_j(\thisL) - \pl_j(\nextL)| \leq \sum_{j \in \nextI} \sqrt{ c_1 \log(L)\pl_j(\thisL)|\thisL|},\]
and in the favorable case the triangle inequality implies
\begin{align*}
\sum_{j \in \nextI} \pl_j(\nextL) &\leq \sum_{j \in \nextI}  \pl_j(\thisL) + \sum_{j \in \nextI} | \pl_j(\thisL) - \pl_j(\nextL)| \\
&\leq \sum_{j \in \nextI} \pl_j(\thisL) + \sum_{j \in \nextI} \frac{ \sqrt{c_1 \log(L) \pl_j(\thisL)|\thisL|} }{|\nextL|}.
\end{align*}
Thus, Item \ref{item:Q} holds with probability at least $3/4$.

Similarly, for Item \ref{item:mainterm}, Lemma \ref{lem:concentration} and linearity of expectation (as well as Jensen's inequality) implies that 
\begin{align*}
\EE \inparen{ \sum_{j \in \nextI} |\nextL|^2 (\pl_j(\nextL) - \pl_j(\thisL) )^2 }^{1/2}  &\leq 
\inparen{ \sum_{j \in \nextI}   \Cprop |\thisL| \log(L) \pl_j(\thisL) }^{1/2}\\
&\leq \inparen{ \sum_{j \in \thisI}   \Cprop |\thisL| \log(L) \pl_j(\thisL) }^{1/2} \qquad \text{ since $\nextI \subseteq \thisI$}\\
&\leq \sqrt{ \Cprop |\thisL| \log(L) Q_t } \qquad \text{ by the inductive hypothesis \eqref{eq:induct} }.
\end{align*}
Again, Markov's inequality and an appropriate restriction on $c_1$ implies that Item \ref{item:mainterm} occurs with probability strictly more than $3/4$.  

%Finally, for Item \ref{item:jumpers}, 
%	\[ \inparen{ \sum_{j \in \thisI\setminus \nextI} |\nextL|^2 \pl_j(\nextL)^2 }^{1/2} \leq |\nextL| \inparen{ \sum_{j \in \thisI \setminus \nextI} \pl_j(\thisL)^2}^{1/2} + \inparen{ \sum_{j \in \thisI \setminus \nextI} |\nextL|^2 (\pl_j(\nextL) - \pl_j(\thisL))^2 }^{1/2},\]
%and the same reasoning as above applies to the second term.
This concludes the proof of Lemma \ref{lem:almostdone}.
\end{proof}

Finally, we show how Lemma \ref{lem:almostdone} implies the conclusions of Lemma \ref{lem:chaining} for $t+1$, notably 
\eqref{eq:induct},
\eqref{eq:Lsize}
and \eqref{eq:width}.
First, we observe that \eqref{eq:Lsize} follows immediately from Lemma \ref{lem:almostdone}, Item \ref{item:sizeL}.
Next we consider \eqref{eq:induct}.  The definition of $\nextI$ and the choice of $\gamma$, along with the fact from Lemma \ref{lem:almostdone}, Item \ref{item:sizeL} that
$|\nextL| \geq \inparen{ \frac{1 - \eta}{2} }|\thisL|$,
 imply that for $j \in \nextI$, 
\[ |\thisL| \pl_j(\thisL) \geq \gamma \geq \inparen{ \frac{ |\thisL|}{|\nextL|}}^2 \frac{ c_1 \log(L) }{\eta^2}, \]
and so 
\begin{equation}\label{eq:nextIgood}
 \frac{ \sqrt{ c_1 |\thisL| \log(L) \pl_j(\thisL)} }{|\nextL|} \leq  \eta \pl_j(\thisL). 
\end{equation}
Thus, 
\begin{align*}
\sum_{j \in \nextI} \pl_j(\nextL) &\leq \sum_{j \in \nextI} \inparen{ 1 + \eta} \pl_j(\thisL) \qquad \text{ by Lemma \ref{lem:almostdone}, Item \ref{item:Q} and from \eqref{eq:nextIgood}} \\
&\leq \inparen{ 1 + {\eta} } \sum_{j \in \thisI} \pl_j(\thisL) \qquad \text{ since $\nextI \subseteq \thisI$, by \eqref{eq:thiscontained} }\\
&\leq \inparen{ 1 + { \eta}} Q_t \qquad \text{ by the inductive hypothesis \eqref{eq:induct} for $t$ } \\
& = \inparen{ 1 + \eta }^{t+1} Q \qquad \text{ by the definition of $Q_t$ }\\
&= Q_{t+1}.
\end{align*}
This establishes \eqref{eq:induct}.

To establish the distance criterion \eqref{eq:width}, we use the triangle inequality to write
\begin{align}
\| \pl_{\thisI}(\thisL) - \pl_{\nextI}(\nextL) \|_2 &= \| \pl_{\nextI}(\thisL) + \pl_{\thisI \setminus \nextI}(\thisL) - \pl_{\nextI}(\nextL) \|_2  \label{eq:monster}\\
&\leq  \| \pl_{\nextI}(\thisL) - \pl_{\nextI}(\nextL) \|_2 \label{eq:goodguys}\\
&\qquad\qquad + \| \pl_{\thisI \setminus \nextI} (\thisL) \|_2 \label{eq:stepping}
\end{align}
The first term \eqref{eq:goodguys} is bounded by Lemma \ref{lem:almostdone}, Item \ref{item:mainterm}, by
\[ \| \pl_{\nextI}(\thisL) - \pl_{\nextI}(\nextL) \|_2 \leq  \frac{ \sqrt{ c_1 |\thisL| \log(L) Q_t }}{|\nextL|}. \]
%By Lemma \ref{lem:almostdone}, Item \ref{item:jumpers}, the final term \eqref{eq:stepping} is bounded by
%\begin{equation}\label{eq:breakdown}
%\| \pl_{\nextI}(\thisL) - \pl_{\nextI}(\nextL) \|_2 \leq \| \pl_{\nextI}(\thisL) \|_2 + \frac{ \sqrt{c_1|\thisL|\log(q) Q_t}}{|\nextL|} . 
%\end{equation}
%\ar{I'm guessing you meant the RHS to remain above. Even then I do not see why the bound follows Lemma \ref{lem:almostdone}, Item \ref{item:jumpers}. It does not from the exact statement at least. Anyhow, I think it is not needed. I think you can directly bound $\| \pl_{\thisI \setminus \nextI} (\thisL) \|_{\infty}$ and $\| \pl_{\thisI \setminus \nextI} (\thisL) \|_1$. See comments below.}
%To bound the first term in \eqref{eq:breakdown}, we observe that by the definition of $\nextI$, 
%\[ \|\pl_{\nextI}(\thisL) \|_\infty \leq \frac{ \gamma }{|\thisL|} .\]
%\ar{In the above, the LHS should be $\| \pl_{\thisI \setminus \nextI} (\thisL) \|_{\infty}$.}
To bound \eqref{eq:stepping}, we will bound both the $\ell_\infty$ and $\ell_1$ norms of $\pl_{\thisI \setminus \nextI} (\thisL)$ and use H\"{o}lder's inequality to control the $\ell_2$ norm.  
By the inductive hypothesis \eqref{eq:induct} and the fact  \eqref{eq:thiscontained} that $\nextI \subseteq \thisI$, 
\[ \|\pl_{\thisI \setminus \nextI}(\thisL) \|_1 \leq \|\pl_{\thisI}(\thisL)\|_1 \leq Q_t. \]
Also, by the definition of $\nextI$, 
\[ \|\pl_{\thisI \setminus \nextI}(\thisL) \|_\infty \leq \frac{\gamma}{|\thisL|}.\]
Together, H\"{o}lder's inequality implies that
\[ \|\pl_{\thisI \setminus \nextI}(\thisL) \|_2 \leq \sqrt{ \|\pl_{\thisI \setminus \nextI}(\thisL)\|_1 \|\pl_{\thisI \setminus \nextI}(\thisL)\|_\infty } \leq \sqrt{ \frac{ \gamma Q_t }{|\thisL|} }.\]
This bounds the second term \eqref{eq:stepping} of \eqref{eq:monster}, and putting it all together we have
\begin{align*}
\| \pl_{\thisI}(\thisL) - \pl_{\nextI}(\nextL) \|_2 &\leq   \frac{ \sqrt{ c_1 |\thisL| \log(L) Q_t }}{|\nextL|} +  \sqrt{ \frac{ \gamma Q_t }{|\thisL|} }.
\end{align*}
Using the fact from Lemma \ref{lem:almostdone}, Item \ref{item:sizeL} that $|\thisL|/|\nextL| \leq 2/(1 - \eta)$, as well as the definition of $\gamma$ in \eqref{eq:gamma}, we may bound the above expression by
\begin{align*}
\| \pl_{\thisI}(\thisL) - \pl_{\nextI}(\nextL) \|_2 &\leq  \inparen{1 +\frac{1}{\eta}} \inparen{ \frac{2}{1 - \eta}} \sqrt{\frac{ c_1 \log(L) Q_t }{|\thisL|}}.
\end{align*}
This establishes \eqref{eq:width}, for an appropriate choice of $\Clem$, and for sufficiently large $L$ (and hence sufficiently small $\eta$).

Finally, we verify the condition \eqref{eq:size} on the size $|\mathcal{S}_{t+1}|$.
By \eqref{eq:Lsize}, and the fact that our choices of $\eta$ and $\tm$ imply that $(1 + \eta)^t \leq e$, 
$|\thisL| \leq eL/2^t. $
We saw earlier that $\nextI$ depends only on $\thisL$, so (using the fact that $L \leq N/2$), there are at most 
\[\sum_{r = 1}^{eL/2^t} {N \choose r}  \lesssim {N \choose {eL/2^t}} \]
choices for $\nextI$.
Similarly, we just chose $\nextL$ so that 
$|\nextL| \leq  eL/2^{t+1},$ so
there are at most
$\sum_{r=1}^{eL/2^t}  {N \choose r} \lesssim { N \choose {eL/2^{t+1}} } $
choices for $\nextL$.  Altogether, there are at most
\[ \Csize {N \choose eL/2^t }{N \choose eL/2^{t+1} } \]
choices for the pair $(\nextI, \nextL)$, for an appropriate constant $\Csize$, and this establishes \eqref{eq:Lsize}.

This completes the proof of Lemma \ref{lem:chaining}.

\section{Conclusion and future work}
\label{sec:concl}

We have shown that ``most" Reed-Solomon codes are list decodable beyond the Johnson bound, answering a long-standing open question (Question~\ref{q:rs}) of~\cite{GS98,venkat-thesis,atri-thesis,salil-book}.  More precisely, we have shown that with high probability, a Reed-Solomon code with random evaluation points of rate 
\[ \Omega\inparen{ \frac{\eps}{\log(q) \log^5(1/\eps) } } \]
is list decodable up to a $1 - \eps$ fraction of errors with list size $O(1/\eps)$.  This beats the Johnson bound whenever $\eps \leq \tilde{O}\inparen{ 1/\log(q)}$.

Our proof actually applies more generally to randomly punctured codes, 
and provides a positive answer to our second motivating question, Question~\ref{q:largeq}, about 
whether randomly punctured codes with good distance can beat the Johnson bound.
As an added corollary, we have obtained improved bounds on the list decodability of random linear codes over large alphabets.   Our bounds are nearly optimal (up to polylogarithmic factors), and are the best known whenever $q \gtrsim \log^5(1/\eps)$.

The most obvious open question that remains is to remove the polylogarithmic
factors from the rate bound.  The factor of $\log(q)$ is especially
troublesome: it bites when $q = 2^{\Omega(1/\eps)}$ is very large, but this
parameter regime can be reasonable for Reed-Solomon codes.  Removing this logarithmic
factor seems as though it may require a restructuring of the argument.  A
second question is to resolve the discrepancy between our upper bound on
list sizes and the bound associated with general random codes of the same
rate; there is a gap of a factor of $\eps$ in the parameter regime $1/\eps \leq q \leq 1/\eps^2$.

To avoid ending on the shortcomings of our argument, we mention a few hopeful
directions for future work.  Our argument applies to randomly punctured codes
in general, and it is natural to ask for more examples of codes where Theorem
\ref{thm:mainthm} can improve the status quo.  Additionally, list decodable
codes are connected to many other pseudorandom objects; it would be extremely
interesting to explore the ramifications of our argument for random families of
extractors or expanders, for example.

\section*{Acknowledgments}
We are very grateful to Mahmoud Abo Khamis, Venkat Guruswami, Yi Li, Anindya Patthak, and Martin Strauss for careful proof-reading and helpful comments. We thank Venkat for suggesting the title of the paper. AR would like to thank Swastik Kopparty and Shubhangi Saraf for some discussions on related questions at \href{http://www.dagstuhl.de/en/program/calendar/semhp/?semnr=12421}{Dagstuhl} that led to questions considered in this paper.

\bibliographystyle{alpha}
\bibliography{refs}

\appendix
%\section{Proof of Proposition \ref{prop:allclose}}\label{app:allclose}
%\input{propsim}
\section{Proofs of Corollaries \ref{cor:smallq} and \ref{cor:largeq}}\label{app:avgjb}
In this appendix, we first prove a few variants on the Johnson bound, which are needed for the proofs of Corollaries \ref{cor:smallq} and \ref{cor:largeq}.
We require average-radius versions of two statements of the Johnson bound, found in~\cite{gs2001} and~\cite{MS77}, respectively.
It appears to be folklore that such statements are true (and follow from the proofs in the two works cited above).  For completeness, we
include proofs below in Section \ref{sapp:avgjb}.
Finally, we prove Corollaries \ref{cor:smallq} and \ref{cor:largeq} in Section \ref{sapp:cors}.

\subsection{Average radius Johnson bounds}\label{sapp:avgjb}
\begin{theorem}\label{thm:jb} Let $\mathcal{C}:\F_q^k \to \F_q^n$ be any code.  Then for all $\Lambda \subset \F_q^k$ of size $L$ and for all $z \in \F_q^n$, 
\[ \sum_{x \in \Lambda} \agr( c(x), z) \leq \frac{nL}{q} + \frac{ nL}{2\eps} \inparen{ 1 + \eps^2 }\inparen{1 - \frac{1}{q}} - \frac{n}{2L\eps}\sum_{x \neq y \in \Lambda} d(c(x),c(y)). \]
\end{theorem}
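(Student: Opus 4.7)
}

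The plan is to prove this via a standard double-counting argument followed by Cauchy--Schwarz and a carefully chosen AM--GM step. For each position $j \in [n]$ and symbol $\alpha \in \F_q$, let $N_j(\alpha) = |\{x \in \Lambda : c(x)_j = \alpha\}|$, and write $a_j := N_j(z_j)$. Then the left-hand side is
\[ A := \sum_{x \in \Lambda} \agr(c(x),z) = \sum_{j=1}^n a_j. \]
A key identity, obtained by double counting pairs that agree position-wise, is
\[ \sum_{j=1}^n \sum_{\alpha \in \F_q} N_j(\alpha)^2 = \sum_{x, y \in \Lambda} \agr(c(x),c(y)) = nL^2 - n\,B, \]
where $B := \sum_{x \neq y \in \Lambda} d(c(x), c(y))$ (here I use that $\agr(c(x),c(x)) = n$ and $\agr(c(x),c(y)) = n(1-d(c(x),c(y)))$ for $x \neq y$).

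The second step is to lower bound $\sum_\alpha N_j(\alpha)^2$ in terms of $a_j$. Splitting off $\alpha = z_j$ and applying Cauchy--Schwarz to the remaining $q-1$ terms (using $\sum_{\alpha \neq z_j} N_j(\alpha) = L - a_j$), we get
\[ \sum_{\alpha} N_j(\alpha)^2 \geq a_j^2 + \frac{(L - a_j)^2}{q-1}. \]
Summing over $j$, combining with the identity above, and then applying the power-mean inequality $\sum_j a_j^2 \geq A^2/n$ and $\sum_j (L - a_j)^2 \geq (nL - A)^2/n$, I obtain
\[ \frac{A^2}{n} + \frac{(nL - A)^2}{n(q-1)} \leq nL^2 - nB. \]
Clearing denominators yields a quadratic inequality in $A$ which, after simplification, is equivalent to
\[ qA^2 - 2nLA \leq n^2L^2(q-2) - n^2(q-1)B, \]
whose positive root gives
\[ A \leq \frac{nL + n\sqrt{L^2(q-1)^2 - q(q-1)B}}{q}. \]

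The final step is to linearize the square root. The trick is to apply AM--GM in the sharp form $2\eps \sqrt{UV} \leq \eps^2 U + V$, with $U = L^2(q-1)^2$ and $V = L^2(q-1)^2 - q(q-1)B$; here equality holds between $L^2(q-1)^2$ and $U$ so no slack is lost. This converts the bound into the linear-in-$B$ form stated in the theorem, matching the coefficients $nL/q$, $\tfrac{nL(1+\eps^2)(1-1/q)}{2\eps}$, and $-\tfrac{nB}{2L\eps}$ after dividing by appropriate factors of $q-1$, $L$, and $\eps$. The main obstacle I anticipate is simply bookkeeping: ensuring that the choice of $\eps$ in the AM--GM step is the \emph{same} $\eps$ as in the theorem statement and that the constants on the $(1-1/q)$ and $B/(L\eps)$ terms line up, rather than any deep inequality. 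Aside from that, every step above is elementary.
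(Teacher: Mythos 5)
Your proof is correct, and it takes a genuinely different route from the one in the paper. The paper embeds each vector of $\F_q^n$ into $\R^{nq}$ (via indicator blocks), shifts the codewords by a carefully chosen vector $v = \eps z' + \tfrac{1-\eps}{q}\ind{}$, and extracts the entire inequality in one stroke from the positivity of $\bigl\|\sum_{x\in\Lambda}(c'(x)-v)\bigr\|^2$; the parameter $\eps$ is baked into $v$ from the start. You instead work directly with the counts $N_j(\alpha)$: double-count agreeing pairs, lower-bound $\sum_\alpha N_j(\alpha)^2$ by Cauchy--Schwarz on the $q-1$ ``wrong'' symbols, aggregate over $j$ by the power-mean inequality, solve the resulting quadratic in $A = \sum_j a_j$, and only then introduce $\eps$ via AM--GM to linearize the square root. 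I verified the algebra at each stage, including the final bookkeeping: with $U = L^2(q-1)^2$ one indeed gets
\[
\sqrt{L^2(q-1)^2 - q(q-1)B} \;\le\; \frac{(1+\eps^2)L(q-1)}{2\eps} - \frac{qB}{2\eps L},
\]
and substituting recovers the theorem exactly. A noteworthy byproduct of your route is that the intermediate bound $A \le \tfrac{nL}{q}\bigl(1 + \sqrt{(q-1)^2 - q(q-1)B/L^2}\,\bigr)$ is a $q$-ary sharpening of the square-root Johnson bound in Theorem~\ref{thm:jb2}; your proof thus exhibits Theorem~\ref{thm:jb} as an AM--GM relaxation of a strengthened Theorem~\ref{thm:jb2}, a connection the paper's two separate proofs leave implicit. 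The trade-off is that the paper's quadratic-form argument is shorter and makes the nonnegativity manifest in a single step, whereas your derivation passes through four distinct relaxations (two Cauchy--Schwarz, one power-mean, one AM--GM) but clarifies exactly where $\eps$ enters as a free linearization parameter.
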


\begin{remark} An average-radius $q$-ary Johnson bound follows from Theorem \ref{thm:jb} by bounding $d(c(x),c(y))$ by $1 - 1/q - \eps^2$.  In this case, the theorem implies that any code of distance $1 - 1/q - \eps^2$ is average-radius list decodable up to error rate $\rho = 1 - 1/q - \eps$ as long as the list size $L$ obeys $L \geq 2/\eps^2$.
\end{remark}

\begin{proof}
Fix a $z\in\F_q^n$. 
The crux of the proof is to map the relevant vectors over $\F_q^n$ to vectors in $\R^{nq}$ as follows.  Given a vector $u\in\F_q^n$, let $u'\in\R^{nq}$ denote the concatenation 
\[ u' = ( e_{u_1}, e_{u_2}, \ldots, e_{u_n} ), \]
where $e_{u_i} \in \{0,1\}^q$ is the vector which is one in the $u_i$'th index and zero elsewhere.
(Above, we fix an arbitrary mapping of $\F_q$ to $[q]$). 
In particular, for an
$x\in \Lambda$, we will use $c'(x)$ to denote the mapping of the codeword
$c(x)$. Finally let $v\in\R^{nq}$ be
\[ v = \eps\cdot z'+\inparen{  \frac{ 1 - \eps }{q} } \cdot \ind{}, \]
where $\ind{}$ denotes the all-ones vector.

Given the definitions above, it can be verified that the identities below hold for every $x\neq y\in\Lambda$:
\begin{equation}
\label{eq:jb-c-v}
\ip{c'(x)}{v}=\eps\cdot \agr(c(x),z)+\frac{(1-\eps)n}{q},
\end{equation}

\begin{equation}
\label{eq:jb-v-v}
\ip{v}{v}=\frac{n}{q}+\eps^2\left(1-\frac{1}{q}\right)n,
\end{equation}

\begin{equation}
\label{eq:jb-c-c-diff}
\ip{c'(x)}{c'(y)} = n(1-d(c(x),c(y)),
\end{equation}

and

\begin{equation}
\label{eq:jb-c-c-same}
\ip{c'(x)}{c'(x)}=n.
\end{equation}

Now consider the following sequence of relations:
{\allowdisplaybreaks
\begin{align}
\label{step:jb-1}
0&\le \ip{\sum_{x\in\Lambda} \inparen{c'(x)-v}}{\sum_{x\in\Lambda} \inparen{c'(x)-v}}\\
&=\sum_{x,y\in\Lambda} \ip{c'(x)}{c'(y)} -\sum_{x,y\in\Lambda}(\ip{c'(x)}{v}+\ip{c'(y)}{v})+\sum_{x,y\in\Lambda}\ip{v}{v} \nonumber\\
&=\sum_{x\in\Lambda} \ip{c'(x)}{c'(x)} +\sum_{x\neq y\in\Lambda} \ip{c'(x)}{c'(y)}- 2L\cdot \sum_{x\in\Lambda}\ip{c'(x)}{v} +\sum_{x,y\in\Lambda}\ip{v}{v} \nonumber\\
\label{step:jb-2}
&=nL +n\sum_{x\neq y\in\Lambda} (1-d(c(x),c(y)))-2L\cdot\sum_{x\in\Lambda}\inparen{\eps\cdot \agr(c(x),z)+\frac{(1-\eps)n}{q}}+L^2\cdot\inparen{\frac{n}{q}+\eps^2\left(1-\frac{1}{q}\right)n}\\
&=nL^2\cdot\inparen{1+\frac{1}{q}+\eps^2\left(1-\frac{1}{q}\right)-\frac{2(1-\eps)}{q}} -n\sum_{x\neq y\in\Lambda} d(c(x),c(y))-2L\eps\cdot\sum_{x\in\Lambda} \agr(c(x),z) \nonumber\\
\label{step:jb-last}
&=nL^2\cdot\inparen{(1+\eps^2)\left(1-\frac{1}{q}\right)+\frac{2\eps}{q}} -n\sum_{x\neq y\in\Lambda} d(c(x),c(y))-2L\eps\cdot\sum_{x\in\Lambda} \agr(c(x),z)
\end{align}
}
In the above, (\ref{step:jb-1}) follows from the fact that the norm of a vector is always positive and (\ref{step:jb-2}) follows from (\ref{eq:jb-c-v}), (\ref{eq:jb-v-v}), (\ref{eq:jb-c-c-diff}) and (\ref{eq:jb-c-c-same}).

Equation (\ref{step:jb-last}) then implies that
\[2L\eps\cdot\sum_{x\in\Lambda} \agr(c(x),z) \le nL^2\cdot\inparen{(1+\eps^2)\left(1-\frac{1}{q}\right)+\frac{2\eps}{q}} -n\sum_{x\neq y\in\Lambda} d(c(x),c(y)),\]
which implies the statement after rearranging terms.
\end{proof}

Next, we prove a second average-radius variant of the Johnson bound, which has been copied almost verbatim from~\cite{MS77}.

\begin{theorem}
\label{thm:jb2} Let $\mathcal{C}:\F_q^k \to \F_q^n$ be any code.  Then for all $\Lambda \subset \F_q^k$ of size $L$ and for all $z \in \F_q^n$, 
\[\sum_{x\in\Lambda} \agr(c(x),z) \le \frac{1}{2}\inparen{n+\sqrt{n^2+ 4n^2L(L-1)-4n^2\sum_{x\neq y\in\Lambda} d(c(x),c(y))}}.\]
\end{theorem}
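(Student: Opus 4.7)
The plan is to embed agreements with $z$ into $\{0,1\}$-indicators and apply Cauchy--Schwarz, in the spirit of the classical geometric proof of the Johnson bound in~\cite{MS77}. Fix $z \in \F_q^n$, and for each $x \in \Lambda$ and each coordinate $i \in [n]$ define the indicator
\[ \chi_x(i) = \ind{c(x)_i = z_i}, \]
so that $\agr(c(x),z) = \sum_{i=1}^n \chi_x(i)$. Let $a_x = \agr(c(x),z)$, $A = \sum_{x \in \Lambda} a_x$ (the quantity we want to bound), and $S_i = \sum_{x \in \Lambda} \chi_x(i)$, so that $\sum_{i=1}^n S_i = A$.

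The first step is the key bookkeeping observation: for distinct $x,y \in \Lambda$, whenever $\chi_x(i)\chi_y(i) = 1$ we have $c(x)_i = c(y)_i = z_i$, so
\[ \sum_{i=1}^n \chi_x(i)\chi_y(i) \leq \agr(c(x),c(y)) = n\inparen{1 - d(c(x),c(y))}. \]
Since $\chi_x(i)^2 = \chi_x(i)$, expanding the square gives
\[ \sum_{i=1}^n S_i^2 = \sum_{x \in \Lambda} a_x + \sum_{x \neq y} \sum_{i=1}^n \chi_x(i)\chi_y(i) \leq A + n L(L-1) - n \sum_{x \neq y} d(c(x),c(y)). \]

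Next, apply Cauchy--Schwarz in the form $A^2 = \inparen{\sum_i S_i}^2 \leq n \sum_i S_i^2$ to obtain the quadratic inequality
\[ A^2 \leq nA + n^2 L(L-1) - n^2 \sum_{x \neq y} d(c(x),c(y)). \]
Solving this quadratic in $A$ yields exactly
\[ A \leq \frac{1}{2}\inparen{ n + \sqrt{n^2 + 4n^2 L(L-1) - 4n^2 \sum_{x \neq y} d(c(x),c(y))}}, \]
as claimed.

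There is no real obstacle: the only non-trivial step is recognizing that the indicator product $\chi_x(i)\chi_y(i)$ is dominated by the agreement of $c(x)$ and $c(y)$, which linearizes the problem and lets Cauchy--Schwarz do the rest. The rest is routine algebra for the quadratic.
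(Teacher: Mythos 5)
Your proof is correct and follows essentially the same route as the paper: both define per-coordinate counts of messages agreeing with $z$, bound the pairwise overlap by codeword-to-codeword agreement, apply Cauchy--Schwarz to relate the sum of squares to the square of the sum, and solve the resulting quadratic. The only difference is cosmetic---you expand $\sum_i S_i^2$ directly while the paper works with $\sum_j \binom{a_j}{2}$---so the two arguments are the same proof in slightly different bookkeeping.
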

\begin{proof}
For every $j\in [n]$, define
\[a_j= |\inset{x\in\Lambda| c(x)_j=z_j}|.\]
Note that
\begin{equation}
\label{eq:sum-agr}
\sum_{j=1}^n a_j=\sum_{x\in\Lambda} \agr(c(x),z),
\end{equation}
and
\begin{align}
\sum_{j=1}^n \binom{a_j}{2}&= \frac{1}{2} \cdot \sum_{j=1}^n \sum_{x \neq y \in \Lambda} \ind{c(x)_j = z_j}\ind{c(y)_j = z_j} \notag\\
&\leq \sum_{j=1}^n \sum_{x \neq y \in \Lambda} \ind{c(x)_j = c(y)_j}\notag\\
&= \frac{1}{2}\cdot \sum_{x\neq y\in\Lambda} \agr(c(x),c(y)) \notag \\
\label{eq:sumsq-agr}
&=\frac{L(L-1)n}{2}-\frac{n}{2}\sum_{x\neq y\in\Lambda} d(c(x),c(y)).
\end{align}
%where the inequality follows from the fact that RHS also counts pairs $x\neq y\in \Lambda$ such that $c(x)_j=c(y)_j$ but $c(x)_j\neq z_j$.
Next, note that by the Cauchy-Schwartz inequality,
\[ \sum_{j=1}^n \binom{a_i}{2} =\frac{1}{2}\inparen{\sum_{j=1}^n a_j^2 -\sum_{j=1}^n a_j} \ge \frac{1}{2n}\inparen{\sum_{j=1}^n a_j}^2 -\frac{1}{2}\sum_{j=1}^n a_j.\]
Combining the above with (\ref{eq:sum-agr}) and (\ref{eq:sumsq-agr}) implies that
\[\inparen{\sum_{x\in\Lambda} \agr(c(x),z)}^2 -n\cdot \sum_{x\in\Lambda} \agr(c(x),z) -\inparen{ n^2L(L-1)-n^2\sum_{x\neq y\in\Lambda} d(c(x),c(y))}\le 0,\]
which in turn implies (by the fact that the sum we care about lies in between the two roots of the quadratic equation) that
\[\sum_{x\in\Lambda} \agr(c(x),z) \le \frac{1}{2}\inparen{n+\sqrt{n^2+ 4n^2L(L-1)-4n^2\sum_{x\neq y\in\Lambda} d(c(x),c(y))}},\]
which completes the proof.
\end{proof}

%\ar{Here is the reason why I feel a bit uncomfortable with the proof above. Note that we could have also considered the sum $\sum_{j=1}^n a_j^2$, in which case we would have by Cauchy Schwarz
%\[\sum_{j=1}^n a_j^2 \ge \frac{1}{n}\inparen{\sum_{j=1}^n a_j}^2,\]
%and by looking at pair-wise agreements have
%\[ \sum_{j=1}^n a_j^2 \le \sum_{x,y\in\Lambda} \agr(c(x),c(y))= nL + \sum_{x\neq y} \agr(c(x),c(y)).\]
%So the above gives what you had earlier, which needed $L\ge 1/\eps^2$. So the slight difference in which sum we bounded made a big difference. So either I'm missing something or the argument is brittle by nature. Let me know if you spot a flaw.}

\subsection{Proofs of Corollaries \ref{cor:smallq} and \ref{cor:largeq}}
\label{sapp:cors}
The proofs of both Corollaries follow essentially from the proofs of the Johnson bound in Section \ref{sapp:avgjb}.  We use two versions of the Johnson bound,
one from~\cite{gs2001} which is more useful for our ``small $q$" regime, and another proof from~\cite{MS77} which produces better
results in the ``large $q$" regime.  

\begin{proof}[Proof of Corollary \ref{cor:smallq}] 
Suppose that $L \geq 2/\eps^2$ and that the distance of $\mathcal{C}'$ is at least $1 - 1/q - \eps^2/2$.
We need an average-radius version of the Johnson bound, which we provide in Theorem \ref{thm:jb} in Appendix \ref{sapp:avgjb}.
By Theorem \ref{thm:jb}, for any $z \in \F_q^n$ and for all $\Lambda \subset \F_q^k$ of size $L$, 
\begin{equation}\label{eq:jb}
 \sum_{x \in \Lambda} \agr( c(x), z) \leq \frac{nL}{q} + \frac{ nL}{2\eps} \inparen{ 1 + \eps^2 }\inparen{1 - \frac{1}{q}} - \frac{n}{2L\eps}\sum_{x \neq y \in \Lambda} d(c(x),c(y)). 
\end{equation}
By Theorem \ref{thm:mainthm}, it suffices to control $\mathcal{E}$. 
Since the right hand side above does not depend on $z$,
{\allowdisplaybreaks
\begin{align}
\mathcal{E}&= \max_{|\Lambda| = L} \EE_{\cC} \max_{z \in \F_q^k} \sum_{x \in \Lambda} \agr(c(x), z) \notag \\
&\leq \max_{|\Lambda| = L} \EE_{\cC} \max_{z \in \F_q^k}\inparen{ \frac{nL}{q} + \frac{ nL}{2\eps} \inparen{ 1 + \eps^2 }\inparen{1 - \frac{1}{q}} - \frac{n}{2L\eps}\sum_{x \neq y \in \Lambda} d(c(x),c(y)) } \notag \\
&= \max_{|\Lambda| = L}\inparen{\frac{nL}{q} + \frac{ nL}{2\eps} \inparen{ 1 + \eps^2 }\inparen{1 - \frac{1}{q}} - \frac{n}{2L\eps}\sum_{x \neq y \in \Lambda} \EE_{\cC} d(c(x),c(y)) }\notag \\
\label{step:jb-4}
&\leq \frac{nL}{q} + \frac{ nL}{2\eps} \inparen{ 1 + \eps^2 }\inparen{1 - \frac{1}{q}} - \frac{n(L-1)\inparen{1 - \frac{1}{q} - \frac{\eps^2}{2} }}{2\eps}  \\
&= \frac{nL}{q} + \frac{ nL \eps}{2} \inparen{ \frac{3}{2} - \frac{1}{q} } + \frac{ n \inparen{ 1 - \frac{1}{q} - \frac{\eps^2}{2} } }{2 \eps} \notag \\
&\leq \frac{nL}{q} + \frac{ 3nL\eps}{4} + \frac{ n}{2\eps }\notag \\
\label{step:jb-bound}
&\le nL\inparen{\frac{1}{q}+\eps}.
\end{align}
}
In the above, \eqref{step:jb-4} follows from the fact that the original code had (relative) distance $1 - 1/q - \eps^2/2$ and that in the construction of $\mathcal{C}$ from $\mathcal{C}'$, pairwise Hamming distances are preserved in expectation. Finally, \eqref{step:jb-bound} follows from the assumption that $L\ge 2/\eps^2$.

Recall from the statement of Theorem \ref{thm:mainthm} that we have defined
\[ Y = \Cthm L \log(N) \log^5(L),\]
so the assumption on $n$ implies that
\[ Y \leq nL\min\{\eps, q\eps^2\}. \]
Suppose that $q\eps \leq 1$, so that $Y \leq nLq\eps^2$.  
Plugging this along with \eqref{step:jb-bound} into Theorem \ref{thm:mainthm}, we obtain
\begin{align}
\EE_\mathcal{C} \max_{z\in \F_q^n} \max_{\Lambda \subset \F_q^k, |\Lambda| = L} \sum_{x \in \Lambda} \agr(c(x),z) &\leq 
\mathcal{E} + Y + \sqrt{ \mathcal{E} Y } \notag\\
&\leq nL\inparen{\frac{1}{q} + \eps} + nLq\eps^2 + nL \sqrt{ q\eps^2\inparen{\frac{1}{q} + \eps } } \notag\\
&= nL\inparen{\frac{1}{q} + \eps\inparen{ 1 + q\eps + \sqrt{1 + q\eps} }}\notag\\
&\leq nL \inparen{ \frac{1}{q} + \eps \inparen{ 2 + \sqrt{2} } }, \notag
\end{align}
using the assumption that $q\eps \leq 1$ in the final line.
Thus, Proposition \ref{prop:dec} implies that $\mathcal{C}$ is 
$\inparen{1 - \nicefrac{1}{q} - (2 + \sqrt{2})\eps, 2/\eps^2}$-list-decodable.

On the other hand, suppose that $q\eps \geq 1$, so that $Y \leq nL\eps$.  Then following the same outline, we have
\begin{align}
\EE_\mathcal{C} \max_{z\in \F_q^n} \max_{\Lambda \subset \F_q^k, |\Lambda| = L} \sum_{x \in \Lambda} \agr(c(x),z) &\leq 
\mathcal{E} + Y + \sqrt{ \mathcal{E} Y } \notag\\
&\leq nL\inparen{\frac{1}{q} + \eps} + nL\eps + nL \sqrt{ \eps\inparen{\frac{1}{q} + \eps } } \notag\\
&= nL\inparen{\frac{1}{q} + \eps\inparen{ 2 + \sqrt{\frac{1}{q\eps} + 1} }}\notag\\
&\leq nL \inparen{ \frac{1}{q} + \eps \inparen{ 2 + \sqrt{2} } }, \notag
\end{align}
using the assumption that $q\eps \geq 1$ in the final line.  Thus, in this case as well, $\mathcal{C}$ is 
$\inparen{1 - \nicefrac{1}{q} - (2 + \sqrt{2})\eps, 2/\eps^2}$-list-decodable.

This completes the proof of Corollary \ref{cor:smallq}.
\end{proof}

\begin{proof}[Proof of Corollary \ref{cor:largeq}]
As with Corollary \ref{cor:smallq}, we need an average-radius version of the Johnson bound.  In this case, we 
follow a proof of the Johnson bound from~\cite{MS77}, which gives a better dependence on $\eps$ in the list size when $q$ is large.
For completeness, our average-radius version of the proof is given in Appendix \ref{sapp:avgjb}, Theorem \ref{thm:jb2}.

We proceed with the proof of Corollary \ref{cor:largeq}.  
By Theorem \ref{thm:jb2}, for any $z \in \F_q^n$ and for all $\Lambda \subset \F_q^k$ of size $L$, 
\begin{equation}\label{eq:jb2}
[\sum_{x\in\Lambda} \agr(c(x),z) \le \frac{1}{2}\inparen{n+\sqrt{n^2+ 4n^2L(L-1)-4n^2\sum_{x\neq y\in\Lambda} d(c(x),c(y))}}.
\end{equation}

By Theorem \ref{thm:mainthm}, it suffices to control $\mathcal{E}$. 
Since the right hand side above does not depend on $z$,
{\allowdisplaybreaks
\begin{align}
\mathcal{E}&= \max_{|\Lambda| = L} \EE_{\cC} \max_{z \in \F_q^k} \sum_{x \in \Lambda} \agr(c(x), z) \notag \\
\label{eq:jb2-step1}
&\le \max_{|\Lambda| = L} \EE_{\cC} \max_{z \in \F_q^k} \inparen{\frac{1}{2}\inparen{n+\sqrt{n^2+ 4n^2L(L-1)-4n^2\sum_{x\neq y\in\Lambda} d(c(x),c(y))}}}\\
\label{eq:jb2-step2}
&\le \max_{|\Lambda| = L} \frac{1}{2}\inparen{n+\sqrt{n^2+ 4n^2L(L-1)-4n^2\sum_{x\neq y\in\Lambda} \EE_{\cC} d(c(x),c(y))}}\\
\label{eq:jb2-step3}
&\le \frac{1}{2}\inparen{n+\sqrt{n^2+ 4n^2L(L-1)-4n^2\sum_{x\neq y\in\Lambda} (1-\eps^2)}}\\
&\le \frac{1}{2}\inparen{n+\sqrt{n^2+ 4n^2L(L-1)\eps^2}} \notag\\
&< \frac{1}{2}\inparen{n+\sqrt{n^2+ 4n^2L^2\eps^2}} \notag\\
\label{eq:jb2-bound}
&\le 2nL\eps.
\end{align}
}
In the above, \eqref{eq:jb2-step1} follows from \eqref{eq:jb2}. \eqref{eq:jb2-step2} follows from Jensen's inequality. \eqref{eq:jb2-step3} 
follows from the fact that the original code had (relative) distance $1 - \eps^2$ and that in the construction of $\mathcal{C}$ from $\mathcal{C}'$, pairwise Hamming distances are preserved in expectation. Finally, \eqref{eq:jb2-bound} follows from the assumption that $L\ge 1/\eps$. 

Now, Theorem \ref{thm:mainthm} implies that
\begin{align*}
\EE_\mathcal{C} \max_{z\in \F_q^n} \max_{\Lambda \subset \F_q^k, |\Lambda| = L} \sum_{x \in \Lambda} \agr(c(x),z) &\leq 
\mathcal{E} + Y + \sqrt{ \mathcal{E} Y } \\
&\leq 2\inparen{ \mathcal{E} + Y } \\
&\leq 2\inparen{ 2nL\eps + Y }\\
&\leq 5nL\eps
\end{align*}
where as before
\[ Y = \Cthm L\log(N) \log^5(L)\]
and where we used the choice of $n$ in the final line.
Choose $\eps' = 5\eps$, so that whenever $5\eps > 1/q$, Proposition \ref{prop:dec} applies and completes the proof. 
Because we have chosen $\eps > 1/\sqrt{q}$ (which is necessary in order for $\mathcal{C}'$ to have distance $1 - \eps^2$), 
the condition that $5\eps > 1/q$ holds for sufficiently small $\eps$.
\end{proof}

\section{Background on Gaussian random variables}\label{app:gauss}
In this appendix, we record a few useful facts about Gaussian random variables which we use in the body of the paper.  
Next, we justify the claim \eqref{eq:expgauss} from Section \ref{sec:prooffromlemma}.  
Finally, we justify the claim \eqref{eq:symmetrize} from Section \ref{sec:redux}.
These facts are standard, and can be found, for example, in~\cite{lt}.

\subsection{Some facts about Gaussians}

A \textbf{gaussian random variable} $g \sim N(0,\sigma^2)$ with variance $\sigma^2$ has a probability density function
\[ f(t) = \frac{1}{\sqrt{2\pi\sigma^2}} \exp( -t^2/2\sigma^2 ). \]
The cumulative distribution function, 
\[ \PR{ g > t } = \frac{1}{\sqrt{2\pi\sigma^2}} \int_{u=t}^\infty \exp(-u^2/2\sigma^2)\, du \]
obeys the estimate 
\begin{equation}\label{eq:tail}
 \PR{ g > t} \leq \frac{\sigma}{t} \cdot \frac{1}{\sqrt{2\pi}} \exp( -t^2/ 2\sigma^2) 
\end{equation}
for all $t > 0$.  Indeed, because on the domain $u \geq t$, $(u/t) \geq 1$, we have
\begin{equation}\label{eq:tailderiv}
 \frac{1}{\sqrt{2\pi\sigma^2} }\int_{u=t}^\infty \exp\inparen{\frac{-u^2}{2\sigma^2} }\, du 
\leq \frac{1}{\sqrt{2\pi\sigma^2}}  \int_{u=t}^\infty \frac{u}{t} \exp\inparen{ \frac{-u^2}{2\sigma^2}}\,du = \frac{ \sigma }{t \sqrt{2\pi} }\exp\inparen{\frac{-t^2}{2\sigma^2}}. 
\end{equation}

Linear combinations of Gaussian random variables are again Gaussian.  
\begin{fact}\label{fact:2stab} Let $g_1, \ldots, g_n$ be Gaussian random variables with variances $\sigma_1^2,\ldots, \sigma_n^2$.  Then the random variable $\sum_i a_i g_i$
is again a Gaussian random variable, with variance
$ \sum a_i^2 \sigma_i^2.$
\end{fact}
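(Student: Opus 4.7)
The plan is to use the moment generating function (MGF), which characterizes the distribution of a Gaussian uniquely and behaves multiplicatively under sums of independent random variables (the fact is implicitly using independence, as is standard and as is the case in every application in the paper).

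First I would record the one-variable computation: for $g \sim N(0,\sigma^2)$, a direct evaluation of the Gaussian integral (completing the square in the exponent) gives
\[
\EE \, e^{tg} \;=\; \frac{1}{\sqrt{2\pi\sigma^2}} \int_{-\infty}^\infty e^{tu} e^{-u^2/(2\sigma^2)}\, du \;=\; e^{t^2\sigma^2/2}
\]
for every $t \in \R$. In particular, for scalars $a_i$, the random variable $a_i g_i$ has MGF $e^{t^2 a_i^2 \sigma_i^2/2}$.

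Next, using independence of the $g_i$, the MGF of the sum factors:
\[
\EE \exp\!\inparen{ t \sum_{i=1}^n a_i g_i } \;=\; \prod_{i=1}^n \EE \, e^{t a_i g_i} \;=\; \prod_{i=1}^n e^{t^2 a_i^2 \sigma_i^2 / 2} \;=\; \exp\!\inparen{ \tfrac{t^2}{2} \sum_{i=1}^n a_i^2 \sigma_i^2 }.
\]
This matches the MGF of $N\!\inparen{0, \sum_i a_i^2 \sigma_i^2}$ for all $t$ in a neighborhood of $0$, so by the standard uniqueness theorem for MGFs (which applies since the Gaussian MGF is finite on all of $\R$), the distribution of $\sum_i a_i g_i$ is exactly $N\!\inparen{0, \sum_i a_i^2 \sigma_i^2}$, as claimed.

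No step here is especially delicate; the only place to be careful is invoking MGF uniqueness, which is justified because the Gaussian MGF has infinite radius of convergence. An alternative, if one wishes to avoid citing the MGF uniqueness theorem, is to use the characteristic function $\EE e^{i t g} = e^{-t^2 \sigma^2/2}$ in exactly the same way and appeal to Fourier inversion; or, more pedestrianly, to induct on $n$ and compute the density of a sum of two independent Gaussians directly by convolution, completing the square to recognize the result as Gaussian. The MGF approach is cleanest and fits the self-contained spirit of the appendix.
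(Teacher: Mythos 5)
Your proof is correct. Note that the paper does not actually prove Fact \ref{fact:2stab}; it is stated without argument in Appendix \ref{app:gauss} and deferred to the reference \cite{lt} (``These facts are standard, and can be found, for example, in~\cite{lt}''). So there is no paper proof to compare against, but your MGF argument is the standard and cleanest route. You were right to flag that independence (and, implicitly, zero mean) must be assumed for the statement to hold; these hypotheses are omitted from the Fact as written, but they are indeed satisfied in every invocation in the paper, where the $g_i$ are independent standard normals.
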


In the body of the paper, we use the following bound on the expected value of the maximum of $n$ Gaussian random variables with variances $\sigma_1^2, \ldots, \sigma_n^2 \leq \sigma^2$.  
\begin{proposition}\label{prop:maxofgauss}
Let $g_i \sim N(0,\sigma_i^2)$, for $i = 1,\ldots,n$, and suppose that $\max_i \sigma_i \leq \sigma$.  Then 
\[ \EE \max_{i \in [n]} |g_i| \leq \sigma \sqrt{2\ln(n)}\cdot (1 + o(1)).\]
\end{proposition}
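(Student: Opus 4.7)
The plan is to use the standard moment generating function (Chernoff-style) technique, since for Gaussians the MGF is explicit and the bound is tight up to the $(1+o(1))$ factor.

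First I would reduce from $\max_i |g_i|$ over $n$ variables to $\max_i g_i$ over $2n$ variables. Indeed, $\max_i |g_i| = \max_i \max\{g_i, -g_i\}$, and each $-g_i$ is itself $N(0,\sigma_i^2)$, so it suffices to bound $\EE \max_{i \in [2n]} g_i$ where each $g_i$ has variance at most $\sigma^2$.

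Next, for any $\lambda > 0$, by Jensen's inequality applied to the convex function $x \mapsto e^{\lambda x}$,
\[
\exp\bigl(\lambda \EE \max_i g_i\bigr) \leq \EE \exp\bigl(\lambda \max_i g_i\bigr) = \EE \max_i \exp(\lambda g_i) \leq \sum_{i=1}^{2n} \EE \exp(\lambda g_i).
\]
Using the explicit Gaussian MGF, $\EE \exp(\lambda g_i) = \exp(\lambda^2 \sigma_i^2/2) \leq \exp(\lambda^2 \sigma^2/2)$. Taking logarithms yields
\[
\EE \max_i g_i \leq \frac{\ln(2n)}{\lambda} + \frac{\lambda \sigma^2}{2}.
\]
Optimizing the right-hand side over $\lambda > 0$ by taking $\lambda = \sqrt{2\ln(2n)}/\sigma$ gives $\EE \max_i g_i \leq \sigma \sqrt{2\ln(2n)}$.

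Finally, since $\ln(2n) = \ln(n)(1 + o(1))$ as $n \to \infty$, we obtain
\[
\EE \max_{i \in [n]} |g_i| \leq \sigma \sqrt{2\ln(2n)} = \sigma\sqrt{2\ln(n)}\cdot(1 + o(1)),
\]
which is the claimed bound. There is no genuine obstacle here; the only choice is whether to keep the constant factor ($\sqrt{2\ln(2n)}$) or absorb the harmless factor of $2$ into the $(1+o(1))$ term, and the proposition as stated permits the latter.
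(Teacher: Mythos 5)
Your proof is correct but takes a different (and equally standard) route from the paper's. You use the moment-generating-function trick: apply Jensen's inequality to $\exp(\lambda\cdot\max_i g_i)$, bound the maximum of exponentials by their sum, plug in the explicit Gaussian MGF, and optimize over $\lambda$. The paper instead writes $\EE \max_i |g_i| = \int_0^\infty \PR{\max_i |g_i| > u}\,du$, splits the integral at $A = \sigma\sqrt{2\ln n}$, and uses the Gaussian tail estimate $\PR{g > t} \leq \tfrac{\sigma}{t\sqrt{2\pi}}e^{-t^2/2\sigma^2}$ together with a union bound on the tail. Both give the claimed $\sigma\sqrt{2\ln n}(1+o(1))$. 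The MGF route is shorter and avoids dealing with the tail integral explicitly; the integral route produces a more transparent additive error term ($\sigma\sqrt{2\ln n} + \sigma/\sqrt{\pi\ln n}$) rather than hiding the slack in a $\ln(2n)$ versus $\ln(n)$ comparison, which is why the paper uses the integral technique here and reuses the same calculation in the justification of \eqref{eq:expgauss}. Either is a fine proof of the stated proposition.
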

\begin{proof}
We have
\begin{align*}
 \EE \max_{i \in [n]} |g_i| &= \int_{u=0}^\infty \PR{ \max_{i \in [n]} |g_i| > u}\,du \\
&\leq A + \frac{2}{\sqrt{2\pi } } \int_{u=A}^\infty n\exp\inparen{ \frac{-u^2}{ 2\sigma^2} }\,du
\end{align*}
for any $A\ge \sigma$ (which we will choose shortly). In the above inequality, we have used \eqref{eq:tail} (with the fact that $A\ge \sigma$) and the fact that for every $i$, $\PR{|g_i|>u} =2\PR{g_i>u}$. We may estimate the integral using \eqref{eq:tailderiv}, so
\[ \frac{2}{\sqrt{2\pi} } \int_{u=A}^\infty \exp\inparen{\frac{-u^2}{2\sigma^2} }\,du \leq \frac{ 2\sigma^2 }{A \sqrt{2\pi}} \exp\inparen{ -\frac{A^2}{2\sigma^2}}. \]
Choosing $A = \sigma\sqrt{2 \ln(n) }$, we get
\[ \EE \max_{i \in [n]} |g_i| \leq \sigma \sqrt{2\ln(n)} + \frac{\sigma}{\sqrt{\pi \ln(n)}}. \]
\end{proof}

\subsection{Justification of \eqref{eq:expgauss}}\label{app:justexp}
We use a computation similar to that in the proof of Proposition \ref{prop:maxofgauss} to justify
\eqref{eq:expgauss}, which states that
\begin{equation*}
 \EE \max_{(I,\Lambda) \in \mathcal{S}_0} S(I,\Lambda) \leq 2\sum_{t=1}^{\tm - 1} a_t  =: 2A.
\end{equation*}
Recall that we had shown that 
\[ \PR{ \max_{(I,\Lambda) \in \mathcal{S}_0} S(I,\Lambda) > u \cdot \sum_{t=0}^{\tm - 1} a_t } \leq  \sum_{t=0}^{\tm-1} N_tN_{t+1} \exp\inparen{ - \frac{u^2 \cdot a_t^2}{\delta_t^2}},\]
and that we had chosen
\[ a_t =  \sqrt{2\ln\inparen{N_tN_{t+1}}}\,\delta_t. \]
Now \eqref{eq:expgauss} follows from a computation similar to the proof of Proposition \ref{prop:maxofgauss}.  Indeed, we have
\begin{align*}
\EE \max_{ (I,\Lambda) \in \mathcal{S}_0} S(I,\Lambda) &= \int_{u=0}^\infty \PR{ \max_{(I,\Lambda)} S(I,\Lambda) > u}\,du\\
&\leq A+\int_{u=A}^\infty \sum_{t=0}^{\tm-1} N_tN_{t+1} \exp\inparen{ \frac{-u^2 \cdot a_t^2 }{\delta_t^2 A^2 }}\,du\\
&= A+ \int_{u=A}^\infty \sum_{t=0}^{\tm -1} N_tN_{t+1} \exp\inparen{ \frac{-2 u^2 \ln\inparen{N_tN_{t+1}}}{ A^2 } } \,du\\
%&\leq A + \int_{u=A}^\infty \sum_{t=0}^{\tm -1} N_t N_{t+1} \exp\inparen{ \frac{-2 u^2 \ln\inparen{N_tN_{t+1}}}{ A^2 } }\,du\\
&\leq A + \sum_{t=0}^{\tm -1}N_t N_{t+1} \int_{u=A}^\infty \exp\inparen{ \frac{-2 u^2 \ln\inparen{N_tN_{t+1}}}{ A^2 } }\,du.
\end{align*}
Repeating the trick \eqref{eq:tailderiv}, we estimate
\[ \int_{u=A}^\infty \exp\inparen{ \frac{-2 u^2 \ln\inparen{ N_t N_{t+1} }}{A^2} }
\leq \frac{A}{4  \ln\inparen{ N_t N_{t+1}}} \exp\inparen{ - 2 \ln\inparen{ N_t N_{t+1}}}
\leq \frac{A}{4 N_t^2N_{t+1}^2 }.\]
Plugging this in, we get
\[ \EE \max_{ (I,\Lambda) \in \mathcal{S}_0} S(I,\Lambda) \leq A\inparen{ 1 + \frac{1}{4} \sum_{t=0}^{\tm - 1} \frac{1}{N_t N_{t+1}} } \leq 2A. \]
In the last inequality, we used the definition of $N_t = \Csize {N \choose eL/2^t}{N \choose e L/2^{t+1}}$ if $t\ge 1$ and $N_0=\binom{N}{L}$. In particular, we have used the fact that $N_t\ge 2$ for our setting of parameters.

\subsection{Justification of \eqref{eq:symmetrize}}\label{app:justsymm}
Finally, we justify \eqref{eq:symmetrize}, which read
\[ \EE_\cC \max_{|\Lambda| = L} \inabs{ \sum_{j \in [n]} \inparen{\pl_j(\Lambda) - \EE_\cC \pl_j(\Lambda)}}  \leq \Ccmp\, \EE_\cC \EE_g \max_{|\Lambda| = L} \inabs{ \sum_{j\in [n]} g_j \pl_j(\Lambda) }.\]
Recall that the $\pl_j(\Lambda)$ are independent random variables.  We proceed in two steps; first, a symmetrization argument will introduce Rademacher random variables\footnote{That is, random variables which take the values $+1$ and $-1$ with probability $1/2$ each.} $\xi_i$, and next a comparison argument will replace these with Gaussian random variables.  Both steps are standard, 
 and more general versions are given in~\cite{lt} as Lemma 6.3 and Equation (4.8), respectively.  Here, we state and prove simplified versions for our needs.

We begin by symmetrizing the left hand side of \eqref{eq:symmetrize}.
\begin{lemma}\label{lem:symmetrize}
With $\pl_j(\Lambda)$ as above, 
\[ \EE \max_{|\Lambda| = L} \inabs{ \sum_{j \in [n]} \pl_j(\Lambda) - \EE \pl_j(\Lambda) } \leq 2 \EE \max_{|\Lambda| = L} \inabs{ \sum_{j \in [n]} \xi_j \pl_j(\Lambda) }, \]
where the $\xi_i$ are independent Rademacher random variables.
\end{lemma}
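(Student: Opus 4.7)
The plan is to run the standard decoupling-and-symmetrization argument. First I would introduce an independent copy $\mathcal{C}'$ of the code $\mathcal{C}$ (the generator matrix of $\mathcal{C}'$ has independently sampled columns, with the same marginal distribution as those of $\mathcal{C}$, but independent of $\mathcal{C}$). Write $\pl_j'(\Lambda)$ for the pluralities computed from $\mathcal{C}'$. Since $\pl_j'(\Lambda)$ has the same marginal distribution as $\pl_j(\Lambda)$, we have $\EE \pl_j(\Lambda) = \EE_{\mathcal{C}'}\pl_j'(\Lambda)$ for each $j$ and $\Lambda$. Pulling the expectation over $\mathcal{C}'$ outside the maximum by Jensen's inequality yields
\[ \EE \max_{|\Lambda|=L} \inabs{\sum_{j} \pl_j(\Lambda) - \EE \pl_j(\Lambda)} \leq \EE_{\mathcal{C}}\EE_{\mathcal{C}'} \max_{|\Lambda|=L} \inabs{\sum_{j} \pl_j(\Lambda) - \pl_j'(\Lambda)}. \]

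Next I would exploit that the columns of $G$ (and of $G'$) are independent across $j$, so the random pairs $(\pl_j(\Lambda), \pl_j'(\Lambda))_{j\in[n]}$ are jointly independent across $j$. Swapping the $j$-th column of $\mathcal{C}$ with the $j$-th column of $\mathcal{C}'$ is a measure-preserving operation which negates the $j$-th coordinate $\pl_j(\Lambda) - \pl_j'(\Lambda)$ and leaves the other coordinates untouched. Hence, for any deterministic $\xi \in \{-1,+1\}^n$, the random vector $\inparen{\xi_j(\pl_j(\Lambda) - \pl_j'(\Lambda))}_{j}$ has the same joint distribution (over all $\Lambda$ simultaneously) as $\inparen{\pl_j(\Lambda) - \pl_j'(\Lambda)}_{j}$. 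Averaging over independent Rademacher signs $\xi_j$ (independent of $\mathcal{C}, \mathcal{C}'$) preserves the expectation:
\[ \EE_{\mathcal{C}}\EE_{\mathcal{C}'} \max_{|\Lambda|=L} \inabs{\sum_{j} \pl_j(\Lambda) - \pl_j'(\Lambda)} = \EE_{\mathcal{C}}\EE_{\mathcal{C}'}\EE_{\xi} \max_{|\Lambda|=L} \inabs{\sum_{j} \xi_j\inparen{\pl_j(\Lambda) - \pl_j'(\Lambda)}}. \]

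Finally I would apply the triangle inequality inside the maximum and split the sum, using that $\pl_j(\Lambda)$ and $\pl_j'(\Lambda)$ have identical distributions, to bound the right-hand side by $2\,\EE_{\mathcal{C}}\EE_{\xi}\max_{|\Lambda|=L} \inabs{\sum_{j}\xi_j\pl_j(\Lambda)}$, which is the desired conclusion. There is no real obstacle: every step is a routine manipulation once one has fixed that ``independent symbols'' makes the $j$-indexed summands independent, which is exactly the hypothesis that licenses the sign-flip symmetry used in the middle step.
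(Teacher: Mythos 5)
Your proposal is correct and follows essentially the same route as the paper's proof: introduce an independent copy $\mathcal{C}'$, apply Jensen's inequality to replace $\EE\pl_j(\Lambda)$ by $\pl_j'(\Lambda)$, insert Rademacher signs using the exchangeability of $(\pl_j,\pl_j')$ across independent symbols, and finish with the triangle inequality. Your explicit ``swap the $j$-th columns of $G$ and $G'$'' justification for the sign-flip step is the same invariance the paper invokes tersely as ``by independence and identical distribution.''
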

\begin{proof}
Let $\cC'$ be an independent copy of $\mathcal{C}$, and let
$\pl_j'(\Lambda)$ denote an independent copy of $\pl_j(\Lambda)$.  Then, 
\begin{align*}
\EE_{\cC} \max_{\Lambda} \inabs{ \sum_{j \in [n]} \pl_j(\Lambda) - \EE_{\cC} \pl_j(\Lambda)}  
&= \EE_{\cC} \max_{\Lambda} \inabs{ \sum_{j \in [n]} \pl_j(\Lambda) - \EE_{\cC} \pl_j(\Lambda) - \EE_{\cC'} \left[ \pl_j'(\Lambda) - \EE_{\cC'} \pl_j'(\Lambda) \right] }\\
&\leq \EE_{\cC}\EE_{\cC'} \max_{\Lambda} \inabs{ \sum_{j \in [n]} \pl_j(\Lambda) - \pl'_j(\Lambda) } \qquad \text{ \begin{tabular}{l} by Jensen's inequality,\\ and because $\EE_{\cC}\pl_j(\Lambda) = \EE_{\cC'} \pl_j'(\Lambda)$ \end{tabular} } \\
&= \EE_\xi \EE_{\cC} \EE_{\cC'} \max_{\Lambda} \inabs{ \sum_{j \in [n]} \xi_j (\pl_j(\Lambda) - \pl'_j(\Lambda) ) } \qquad \text{ \begin{tabular}{l} by independence, and \\ the fact that $\pl_j(\Lambda)$ and $\pl'_j(\Lambda)$ \\ are identically distributed \end{tabular}}\\
&\leq 2\EE_\xi \EE_{\cC} \max_{\Lambda} \inabs{ \sum_{j \in [n]} \xi_j \pl_j(\Lambda) } \qquad \text{by the triangle inequality.}
\end{align*}
\end{proof}

Next, we replace the Rademacher random variables $\xi_j$ with Gaussian random variables $g_j$ using a comparison argument.
\begin{lemma}\label{lem:compare}
Condition on the choice of $\mathcal{C}$, and let $\pl_j(\Lambda)$ be as above.
Let $\xi_1, \ldots, \xi_n$ be independent Rademacher random variables, and let $g_1,\ldots, g_n$ be independent standard normal random variables.
Then
\[ \EE_\xi \max_{|\Lambda| = L} \inabs{ \sum_{j \in [n]} \xi_j \pl_j(\Lambda) }\leq \sqrt{ \frac{\pi}{2}} \EE_g \max_{|\Lambda| = L} \inabs{ \sum_{j \in [n]} g_j \pl_j(\Lambda)}. \]
\end{lemma}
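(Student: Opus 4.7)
The plan is to exploit the standard decomposition of a Gaussian random variable into its sign and magnitude, coupled with Jensen's inequality. Concretely, for a standard normal $g_j$, one has $g_j = \xi_j |g_j|$, where $\xi_j := \sgn(g_j)$ is a Rademacher random variable, $|g_j|$ is independent of $\xi_j$, and $\EE |g_j| = \sqrt{2/\pi}$. The collection $(\xi_j)_{j=1}^n$ is therefore distributed exactly as a sequence of independent Rademachers, so it is legitimate to overload notation and use the same $\xi_j$'s appearing in the lemma statement.

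The main step is then to apply Jensen's inequality in the right direction. First I would write
\[
\EE_g \max_{|\Lambda| = L} \inabs{ \sum_{j \in [n]} g_j \pl_j(\Lambda) }
= \EE_\xi \EE_{|g|} \max_{|\Lambda|=L} \inabs{ \sum_{j\in[n]} \xi_j |g_j| \pl_j(\Lambda) }.
\]
Since the function $(u_1,\ldots,u_n) \mapsto \max_{|\Lambda|=L} \left| \sum_j \xi_j u_j \pl_j(\Lambda) \right|$ is convex in $(u_1,\ldots,u_n)$ (a maximum of absolute values of linear functions), Jensen's inequality applied in the inner expectation over $|g|$ yields
\[
\EE_{|g|} \max_{|\Lambda|=L} \inabs{ \sum_{j\in[n]} \xi_j |g_j| \pl_j(\Lambda) }
\geq \max_{|\Lambda|=L} \inabs{ \sum_{j\in[n]} \xi_j \,\EE|g_j|\, \pl_j(\Lambda) }
= \sqrt{\tfrac{2}{\pi}} \max_{|\Lambda|=L} \inabs{ \sum_{j\in[n]} \xi_j \pl_j(\Lambda) }.
\]

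Taking the expectation over $\xi$ and rearranging the constant $\sqrt{2/\pi}$ then delivers
\[
\EE_\xi \max_{|\Lambda|=L} \inabs{ \sum_{j\in[n]} \xi_j \pl_j(\Lambda) } \leq \sqrt{\tfrac{\pi}{2}} \, \EE_g \max_{|\Lambda|=L} \inabs{ \sum_{j\in[n]} g_j \pl_j(\Lambda) },
\]
which is the desired inequality. I do not anticipate any real obstacles: the only subtlety is choosing to apply Jensen's inequality on $|g|$ rather than on $\xi$ (which would go the wrong way), and to verify that the same $\xi$'s may be substituted on both sides of the bound, but that is immediate from the sign-magnitude decomposition of a Gaussian and the fact that $\cC$ is conditioned upon so that the $\pl_j(\Lambda)$ are deterministic.
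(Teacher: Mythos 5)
Your proposal is correct and matches the paper's proof essentially verbatim: both use the sign–magnitude decomposition $g_j = \xi_j|g_j|$ (equivalently, that $\xi_j|g_j|$ for independent $\xi_j, |g_j|$ has the same law as $g_j$), apply Jensen's inequality to the inner expectation over $|g|$ using convexity of a maximum of absolute values of linear forms, and finish with $\EE|g_j| = \sqrt{2/\pi}$. No gaps.
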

\begin{proof}
We have
\begin{align*}
\EE_g \max_{\Lambda} \inabs{ \sum_{j \in [n]} g_j \pl_j(\Lambda) }
&= \EE_g \EE_\xi \max_{\Lambda} \inabs{ \sum_{j \in [n]} \xi_j |g_j| \pl_j(\Lambda) }\\
&\geq \EE_{\xi} \max_{\Lambda} \inabs{ \sum_{j \in [n]} \xi_j \EE_g |g_j| \pl_j(\Lambda) } \qquad \text{ by Jensen's inequality } \\
&= \EE_{\xi} \max_{\Lambda} \inabs{ \sum_{j \in [n]} \xi_j \sqrt{ \frac{2}{\pi} } \pl_j(\Lambda) }. 
\end{align*}
Above, we used the fact that for a standard normal random variable $g_j$, $\EE |g_j| = \sqrt{ 2/\pi}$.  
\end{proof}

Together, Lemma \ref{lem:symmetrize} and Lemma \ref{lem:compare} imply \eqref{eq:symmetrize}.

\end{document}